\documentclass[11pt]{article}

\usepackage[bibliosources=refs.bib]{pomegranate}
\usepackage{xurl}

\DeclareOperator{\per}
\newcommand{\capop}{\operatorname{Cap}}
\DeclareOperator{\dist}
\DeclareOperator{\diag}
\DeclareOperator{\conv}
\newcommand{\eps}{\varepsilon}
\newcommand{\cB}{\mathcal B}
\newcommand{\cD}{\mathcal D}
\newcommand{\cE}{\mathcal E}
\newcommand{\cG}{\mathcal G}
\newcommand{\KL}{D_{\mathrm{KL}}}
\newcommand{\Bethe}{\operatorname{Bethe}}
\newcommand{\Htwo}{H_2}

\hypersetup{
  pdftitle={Beyond the Bethe Approximation of the Permanent},
  pdfauthor={Nima Anari}
}

\title{Beyond the Bethe Approximation of the Permanent}
\author[1]{Nima Anari\thanks{The accompanying Lean 4 formalization is available at
\url{https://github.com/nimaanari/formalization-beyond-bethe}.  The corresponding
Palomar Registry entry is available at
\url{https://palomar-registry.org/entry?id=PALOMAR-2026-08-29-000006&version=1}.}}
\affil[1]{Stanford University, \texttt{anari@stanford.edu}}
\date{}

\begin{document}

\maketitle

\begin{abstract}
The canonical Bethe approximation gives a deterministic approximation to the
permanent of every nonnegative matrix within a factor of \((\sqrt2)^n\).  We
improve the base of this exponential factor: for some absolute constant
\(c<\sqrt2\), there is a deterministic polynomial-time
\(c^n\)-approximation for the permanent of every nonnegative matrix.  This shows
that the canonical Bethe guarantee is not a barrier for deterministic
approximation of the permanent.  The proof augments the Bethe lower bound with
a new certificate tailored to matrices on which that lower bound loses nearly
the full factor.

The author supplied the high-level plan of attack, and the proof was developed
in an interaction with ChatGPT 5.6 Sol Pro.  The author subsequently verified
the results.  Codex assisted with proof checking, manuscript assembly, and
typesetting.
\end{abstract}

\section{Introduction}

The permanent of a matrix \(A\in\R_{\ge0}^{n\times n}\) is
\[
  \per(A)=\sum_{\sigma\in S_n}\prod_{i=1}^n A_{i,\sigma(i)}.
\]
Here \(S_n\) is the set of permutations of \(\set{1,\ldots,n}\).
Computing or approximating the permanent is one of the canonical problems in
theoretical computer science and the field of counting and sampling.  Valiant
proved that computing the permanent is \(\Class{\#P}\)-hard even for \(0/1\)
matrices
\cite{Valiant1979}.  On the other hand, Jerrum, Sinclair, and Vigoda gave a
randomized fully polynomial-time approximation scheme for every nonnegative
matrix \cite{JerrumSinclairVigoda2004}.  The best general deterministic
algorithms remain exponentially worse.

A particularly natural deterministic approximation is the \emph{Bethe
permanent}.  It is the maximum of a concave function over the Birkhoff
polytope, the set of doubly stochastic matrices, and can therefore be computed
efficiently; we denote it by \(\Bethe(A)\).  Gurvits proved that it
is a lower bound for the permanent, building on Schrijver's permanental
inequality \cite{Schrijver1998,Gurvits2011}; Vontobel developed its variational
and graph-cover interpretations \cite{Vontobel2013}.  Anari and Rezaei proved
the sharp universal sandwich
\[
  \Bethe(A)\le \per(A)\le 2^{n/2}\Bethe(A)
\]
\cite{AnariRezaei2024}.  The factor \(2^{n/2}\) is attained by a block diagonal
matrix with \(2\times2\) all-ones blocks.  Thus no analysis of the Bethe
permanent alone can improve the base \(\sqrt2\).

This paper goes beyond that barrier by augmenting the Bethe certificate rather
than replacing it.  Our main result is the following.

\begin{theorem}[Main theorem]\label{thm:main}
There is an absolute constant \(c<\sqrt2\) and a
deterministic polynomial-time algorithm that, given
\(A\in\Q_{\ge0}^{n\times n}\), returns a number \(L(A)\) satisfying
\[
  L(A)\le \per(A)\le c^n L(A).
\]
\end{theorem}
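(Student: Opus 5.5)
We will take \(L(A)=\max\{\Bethe(A),\,M(A)\}\), where \(M(A)\) is a second, efficiently computable lower bound on \(\per(A)\). It must be large precisely on matrices where \(\per(A)/\Bethe(A)\) is close to \(2^{n/2}\). Both quantities lower-bound \(\per(A)\), so \(L(A)\le\per(A)\) is immediate. The work is to show that \(\per(A)\le c^n\max\{\Bethe(A),M(A)\}\) for some \(c<\sqrt2\).

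\textbf{Step 1 (normalization).} First reduce to doubly stochastic \(A\) by Sinkhorn scaling. Both sides of the inequality scale the same way under diagonal scalings, and the scaling can be carried out to polynomial precision deterministically. We then let \(P\) be the maximizer of the Bethe functional, so that \(\log\Bethe(A)=\sum_{ij}P_{ij}\log(A_{ij}/P_{ij})+\sum_{ij}(1-P_{ij})\log(1-P_{ij})\).

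\textbf{Step 2 (stability of the Anari--Rezaei bound).} This is the step I expect to be the main obstacle. The Anari--Rezaei proof of \(\per\le 2^{n/2}\Bethe\) goes through a chain of inequalities, roughly an entropy/Bregman comparison between the permutation distribution \(\mu(\sigma)\propto\prod_i A_{i\sigma(i)}\) and the product structure given by \(P\). Each step is tight only in the \(2\times2\) block case. I would extract from this chain a quantitative stability statement of the following form. If \(\per(A)\ge(\sqrt2-\delta)^n\Bethe(A)\), then for all but \(\eta(\delta)n\) rows, with \(\eta(\delta)\to0\) as \(\delta\to0\), the row of \(P\) (and of the rescaled \(A\)) is \(\eta\)-close to having two entries near \(1/2\). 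Moreover, these pairs form an approximate perfect matching of the rows and columns into \(2\times2\) blocks. The natural way to obtain this is to write the per-row loss as a concave function of the row of \(P\) that attains its maximum \(\tfrac12\log2\) only at the vector \((\tfrac12,\tfrac12,0,\ldots)\), and then use strict concavity or compactness to get a uniform gap away from that point.

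\textbf{Step 3 (the second certificate).} Define \(M(A)\) via a deterministic block-structure detector. Threshold the entries of \(P\) near \(1/2\) to obtain a graph on rows and columns, and take its large components that are \(2\times2\) blocks; call this block set \(B\). For the block structure, use the lower bound \(\per(A)\ge\per(A[B])\cdot\per(\text{rest})\), where \(\per(A[B])\) is computed exactly as a product of \(2\times2\) permanents and the remainder is bounded by its Bethe permanent. More robustly, take \(M(A)=\max_B \prod_{\text{blocks}}\per(A_{\text{block}})\cdot\Bethe(A_{\text{rest}})\) over the polynomially many candidate \(B\) produced by thresholding at different levels. On an exact \(2\times2\) block matrix this equals \(\per(A)\). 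In the near-extremal case of Step 2, the block part loses only \(e^{O(\eta)}\) per block relative to the truth, and the remainder, of size \(\eta n\), loses at most \(2^{\eta n/2}\). Removing the block rows can reduce the permanent, so I also need that \(\per(A)\le e^{o(1)n}\cdot\per(\text{blocks})\per(\text{rest})\) in this regime. I would derive this from the fact that \(\mu\) concentrates on permutations respecting the blocks, which again comes out of Step 2's entropy stability.

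\textbf{Step 4 (combining).} Fix a small \(\delta\). If \(\per(A)\le(\sqrt2-\delta)^n\Bethe(A)\), the Bethe certificate already gives the bound with \(c=\sqrt2-\delta\). Otherwise Steps 2 and 3 give \(\per(A)\le e^{f(\eta(\delta))n}M(A)\) with \(f\to0\). Choosing \(\delta\) so that \(e^{f(\eta(\delta))}<\sqrt2\) yields the absolute constant \(c<\sqrt2\). The delicate part is making Step 2 quantitative enough, with \(\eta(\delta)\to0\), while controlling the interaction between the blocks and the remainder.
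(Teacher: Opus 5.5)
\textbf{Main gap: the stability statement is applied to the wrong matrix.} In Steps 2--3 you detect blocks by thresholding the Bethe maximizer near $\frac12$. The near-equality structure that the Anari--Rezaei chain certifies lives in the Gibbs marginals $P_{ij}=\mu(\sigma(i)=j)$, which are not computable. The Bethe maximizer need not share that structure, because the Bethe objective is linear along exactly the extremal faces.

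Concretely, let $A$ be block diagonal with $n/2$ copies of the $2\times2$ block with rows $(1,1)$ and $(1,1+\epsilon)$. On one block, the Bethe objective at the doubly stochastic matrix with diagonal $x$ equals $x\log(1+\epsilon)$. So the unique Bethe maximizer is the identity matrix, and Sinkhorn scaling does not move it. Thresholding at any level finds no $2\times2$ block, so $M(A)$ collapses to $\Bethe(A)$. Then $\per(A)/L(A)=\bigl((2+\epsilon)/(1+\epsilon)\bigr)^{n/2}$, which exceeds $c^n$ for any fixed $c<\sqrt2$ once $\epsilon$ is small. Here the Gibbs marginals are $(1+\epsilon)/(2+\epsilon)\approx\frac12$, so your stability claim holds for them, but not for the object you compute. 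You also give no argument that near-extremality forces the scaled $A$ itself to be near half--half.

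The paper bridges this with a genuinely new ingredient. It maximizes the entropy-regularized objective $\Phi_\tau$ and uses the KKT factorization $A_{ij}=r_ic_jX_{ij}^{1+\tau}(1-X_{ij})$. It then proves a global transfer identity showing that $U_{ij}=X_{ij}^{1+\tau}(1-X_{ij})/\prod_k(1-X_{ik})$ is close to $1$ on most heavy edges of the Gibbs marginals. In the example above, even the regularized $X$ is close to the identity for large $n$; only the $U$-coordinates remember the block, and the gain is read off from them by an explicit capacity witness. Accordingly the algorithm never thresholds anything: it computes a one-sided witness gain for every row pair and takes a maximum-weight matching.

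\textbf{Second gap: the final comparison.} Even with the right blocks, your certificate $\prod\per(\text{blocks})\cdot\Bethe(\text{rest})$ needs $\per(A)\le e^{o(n)}\per(\text{blocks})\per(\text{rest})$. That is, block-respecting permutations must carry $\mu$-mass $e^{-o(n)}$. This is not "concentration of $\mu$ on block-respecting permutations." When rows leak with probability of order $\eta$, a typical $\sigma\sim\mu$ breaks on the order of $\eta n$ blocks, so the block-respecting event is exponentially rare. You would need a large-deviation lower bound with rate tending to $0$, which you do not derive and which does not follow from the marginal information in Step 2.

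The paper never makes this comparison. Its certificate is $e^{\beta_A(X)}\prod_{\{r,s\}\in M}\Gamma_{rs}(A,X)$ at a single point $X$, and it is valid for leaky pairs by the stable-polynomial coefficient inequality. So it only needs a constant gain $\Gamma_{rs}\ge e^{\gamma_0}$ on linearly many pairs, measured against $\per(A)\le2^{n/2}\Bethe(A)\le2^{n/2}e^{\beta_A(X)+\xi n}$.

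\textbf{Smaller gaps in Step 2.} Per-row stability does not by itself yield $2\times2$ blocks. If row $i$ has mass $\frac12$ on columns $i$ and $i+1$ (indices mod $n$), every row deficit vanishes, yet the heavy graph is one long cycle. Excluding long cycles needs the divergence term $D$ of the exact slack decomposition together with an entropy count: at most one bit per cycle, against almost half a bit per good row. The row stability must also be dimension-free, which compactness in dimension $n$ does not provide.
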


We make no attempt to optimize \(c\).  The point is the uniform improvement in
the base of the exponential approximation; optimizing the constants would
obscure the argument.

\paragraph*{Independent work.}
Independently, Narang and Perkins proved a similar result: a
deterministic polynomial-time \((\sqrt2-\varepsilon)^n\)-approximation for some
absolute constant \(\varepsilon>0\) \cite{NarangPerkins2026}.  Their approach
also exploits nearly isolated weighted \(2\times2\) blocks, through a structural
correction to the Bethe approximation.

\subsection{Proof overview}

The proof couples a new lower-bound certificate with a stability analysis of
the Bethe upper bound.  For this overview, suppose that \(A\) is positive and
let
\[
  \Delta(A)=\frac n2\log2+\log\Bethe(A)-\log\per(A)
\]
be the slack in that upper bound.  If \(\Delta(A)=\Omega(n)\), then the Bethe
permanent already beats \((\sqrt2)^n\) by an exponential factor.  We therefore
only need a better certificate when \(\Delta(A)\) is a small constant multiple
of \(n\).  The argument has three main ideas.

\paragraph*{A lower bound that remembers pairs of rows.}
A matching partitions the rows into pairs and unmatched singletons.  A
singleton contributes a linear polynomial whose capacity is its contribution
to the Bethe lower bound.  For a pair of rows \(r,s\), we instead use the
quadratic polynomial
\[
  Q_{rs}(z)=\sum_{j<k}
  \left(A_{rj}A_{sk}+A_{rk}A_{sj}\right)z_jz_k.
\]
The coefficient of \(z_jz_k\) is the total weight of the two ways to assign
rows \(r,s\) to the distinct columns \(j,k\).  We multiply the polynomials of
all row clusters and pair the result with another polynomial that enforces
that every column is used exactly once.  A coefficient inequality for real
stable polynomials---a class defined in \cref{sec:preliminaries}---then turns
this encoding into a lower bound for \(\per(A)\)
\cite{AnariOveisGharan2017}.  On an exact \(2\times2\) all-ones block, the
paired cluster recovers the factor \(2\) lost by the Bethe permanent.

\paragraph*{Near equality forces approximate \(2\times2\) blocks.}
Sample a permutation \(\sigma\) with probability proportional to its weight
\(\prod_iA_{i,\sigma(i)}\), and let
\(P_{ij}=\P_{\sigma}{\sigma(i)=j}\).  Thus row \(i\) of \(P\) records the
marginal distribution of the column assigned to row \(i\).  Anari and Rezaei
analyze the following sequential experiment.  Order the rows uniformly at
random and process them in that order; when row \(i\) is reached, choose an
unused column with probability proportional to the corresponding entries of
\(P_i\).  This experiment produces another distribution on permutations.
Their upper bound compares it with the original weighted distribution and
then applies a one-row inequality.

We keep the information lost in both steps.  This gives an exact decomposition
of \(\Delta(A)\) into three nonnegative contributions: the suboptimality of
\(P\) in the Bethe program, the sum of the deficits in the one-row inequality,
and the average relative entropy between the true and sequential
distributions.  Consequently, when \(\Delta(A)\) is small, all three
contributions are small.

The one-row inequality is stable: a row with small deficit must be close to a
probability vector supported uniformly on two columns.  Hence most rows of
\(P\) have two entries near \(1/2\).  These large entries form a bipartite graph
of maximum degree two.  The relative-entropy contribution rules out many long
cycles in this graph, so most rows lie in approximate copies of \(K_{2,2}\),
the complete bipartite graph on two rows and two columns.  This is the rigidity
statement behind the improvement.

\paragraph*{Transfer the blocks and collect their gain.}
The marginal matrix \(P\) is not available to the algorithm.  We optimize a
slightly regularized Bethe objective instead, adding a small multiple of the
sum of the Shannon entropies of its rows.  The unique optimizer \(X\) is
positive.  Its Karush--Kuhn--Tucker optimality equations give positive row and
column scaling factors \(r_i,c_j\) such that
\[
  A_{ij}=r_i c_j X_{ij}^{1+\tau}(1-X_{ij}).
\]
Here \(\tau\) is a small positive regularization parameter.  A global identity
derived from this factorization measures how much the relevant coordinates
can change between \(P\) and \(X\).  It shows that most of the approximate
\(K_{2,2}\)'s found in \(P\) remain useful at \(X\).  On each such block, the
paired certificate improves on the two singleton certificates by a constant
factor.  A maximum-weight row matching collects these gains on a linear number
of disjoint blocks, improving the lower bound by \(e^{\Omega(n)}\).

The key point is the interaction between the two sides of the argument.  The
stability analysis identifies precisely where the Bethe bound can be nearly
tight, while the paired certificate is designed to recover weight on those
same configurations.

\subsection{Related work}

The Bethe permanent was introduced and analyzed systematically by Vontobel
\cite{Vontobel2013}.  Its lower-bound property follows from Schrijver's
inequality and Gurvits's variational formulation
\cite{Schrijver1998,Gurvits2011}.  Anari and Rezaei proved the matching
\(2^{n/2}\) upper bound \cite{AnariRezaei2024}.  Our proof begins with their
sequential-distribution argument but uses its discarded relative-entropy term
as a structural statistic.

Graph covers give another interpretation of the Bethe permanent.  Ng and
Vontobel studied the role of double covers \cite{NgVontobel2022}; Huang,
Kashyap, and Vontobel developed degree-\(M\) Bethe and Sinkhorn permanent
bounds \cite{HuangKashyapVontobel2024}.  These works analyze the Bethe
approximation and its finite-cover counterparts.  Our certificate instead
recovers weight that the Bethe objective necessarily loses on
\(2\times2\) blocks.

The stable-polynomial input is the multiaffine coefficient inequality of Anari
and Oveis Gharan \cite{AnariOveisGharan2017}.  It proves one-sidedness of each
paired certificate; the global argument deciding which pairs are profitable
is information-theoretic.

\paragraph*{Organization.}
\Cref{sec:preliminaries} records notation, the Bethe bound, and the two
stable-polynomial facts used in the proof.  \Cref{sec:paired} constructs the
paired lower certificate.  \Cref{sec:slack,sec:row-stability} prove the exact
slack identity and the stability of its row term.  \Cref{sec:cycles} extracts
the robust \(K_{2,2}\) structure.  \Cref{sec:transfer,sec:gain} transfer this
structure to the regularized optimizer and prove the paired gain.
\Cref{sec:completion} combines the estimates.  \Cref{sec:zeros} treats zero
entries and finite-precision computation.

\section{Preliminaries}\label{sec:preliminaries}

\paragraph*{Notation.}
We write \([n]=\set{1,\ldots,n}\) and let \(S_n\) denote the set of
permutations of \([n]\).  The Birkhoff polytope is
\[
  \cB_n
  =\set*{X\in\R_{\ge0}^{n\times n}
    \given \sum_jX_{ij}=1\ \forall i,
    \ \sum_iX_{ij}=1\ \forall j}.
\]
For a probability vector \(p\), its Shannon entropy is
\[
  H(p)=-\sum_jp_j\log p_j,
\]
with the convention \(0\log0=0\).  We write
\(\Htwo(t)=-t\log t-(1-t)\log(1-t)\) for binary entropy.  For probability
measures \(\mu,\nu\) on the same finite set, their Kullback--Leibler (KL)
divergence is
\[
  \KL(\mu\Vert\nu)
  =\sum_x\mu(x)\log\frac{\mu(x)}{\nu(x)}.
\]
All logarithms are natural.  We use the same formula to write \(H(\mu)\) for
the entropy of a finite probability measure, \(H(Z)\) for the entropy of a
finite-valued random variable, and \(H(Z\mid W)\) for conditional entropy.

\subsection{The Bethe permanent}

For \(A\in\R_{>0}^{n\times n}\) and \(X\in\cB_n\), define the Bethe objective
\begin{equation}\label{eq:bethe-objective}
  \beta_A(X)
  =\sum_{i,j}
  \left[
    X_{ij}\log\frac{A_{ij}}{X_{ij}}
    +(1-X_{ij})\log(1-X_{ij})
  \right].
\end{equation}
At \(X_{ij}\in\set{0,1}\), we interpret the summands by continuity.  The Bethe
permanent is
\[
  \Bethe(A)=\exp\left(\max_{X\in\cB_n}\beta_A(X)\right).
\]
For a nonnegative matrix \(A\), we use the same definition with the convention
that \(X_{ij}\log(A_{ij}/X_{ij})=-\infty\) when \(A_{ij}=0<X_{ij}\), and
that this term is zero when \(X_{ij}=0\).  Equivalently, this is the limit of
\(\Bethe(A+\delta J)\) as \(\delta\downarrow0\), where \(J\) is the all-ones
matrix.
Vontobel proved that the objective is concave on \(\cB_n\)
\cite{Vontobel2013}.  The following sharp sandwich combines the lower bound of
Gurvits with the upper bound of Anari and Rezaei
\cite{Gurvits2011,AnariRezaei2024}.

\begin{theorem}[Bethe sandwich]\label{thm:bethe-sandwich}
For every \(A\in\R_{\ge0}^{n\times n}\),
\begin{equation}\label{eq:bethe-sandwich}
  \Bethe(A)\le\per(A)\le2^{n/2}\Bethe(A).
\end{equation}
Both inequalities are tight.
\end{theorem}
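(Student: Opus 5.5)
This is a cited theorem, so the plan is to assemble it from the cited sources rather than prove it from scratch. I would reduce first to positive matrices by continuity. Both \(\per\) and \(\Bethe\) are continuous along \(A+\delta J\) as \(\delta\downarrow0\): the permanent because it is a polynomial, and the Bethe permanent by the limiting definition stated above. Hence it suffices to prove both inequalities for \(A>0\).

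\textbf{Lower bound.} I would follow Gurvits's route. Write \(\beta_A(X)\) as the logarithm of a product over rows. The key input is Schrijver's inequality in its Gurvits form: for every \(X\in\cB_n\) and every nonnegative \(A\),
\[
  \per(A)\ge\prod_{i,j}\left(\frac{A_{ij}}{X_{ij}}\right)^{X_{ij}}(1-X_{ij})^{1-X_{ij}}.
\]
I would prove this by applying Schrijver's permanental inequality \(\per(B)\ge\prod_{i,j}(1-B_{ij})\) to the doubly stochastic \(B\) obtained from \(A\) by Sinkhorn scaling. Then I would take the supremum over \(X\) using Lagrangian duality for the concave program. This is the standard derivation, so I would simply cite it \cite{Gurvits2011,Schrijver1998}. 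Maximizing the right-hand side over \(X\) gives \(\Bethe(A)\le\per(A)\).

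\textbf{Upper bound.} I would follow Anari--Rezaei. Let \(\mu(\sigma)\propto\prod_iA_{i,\sigma(i)}\) and \(P_{ij}=\mu(\sigma(i)=j)\in\cB_n\). Then
\[
  \log\per(A)=\mathbb E_\mu\sum_i\log A_{i\sigma(i)}+H(\mu).
\]
Since \(\beta_A(P)\le\log\Bethe(A)\), it suffices to show
\[
  H(\mu)\le\sum_{i,j}\left[-P_{ij}\log P_{ij}+(1-P_{ij})\log(1-P_{ij})\right]+\frac n2\log2.
\]
To bound \(H(\mu)\), I would use the sequential distribution \(\nu\) from the overview. Order the rows uniformly at random, and let each row pick an unused column with probability proportional to \(P_i\). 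Gibbs' inequality gives \(H(\mu)\le-\mathbb E_\mu\log\nu(\sigma)\). Expanding the right-hand side reduces the problem to a per-row expectation over the random order and the random set of used columns. That per-row quantity is controlled by a one-row inequality: for each probability vector \(p\), the expected log-normalizer is bounded by \(\sum_j(1-p_j)\log(1-p_j)+\tfrac12\log2\), where the expectation is over a random fraction of the columns removed according to their marginals.

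\textbf{Main obstacle.} The hardest step is this one-row inequality. Its sharp constant \(\tfrac12\log2\) is attained exactly at \(p=(\tfrac12,\tfrac12,0,\dots)\). I would take it directly from \cite{AnariRezaei2024} rather than rederive it.

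\textbf{Tightness.} Equality in the lower bound holds for permutation matrices, or for \(n=1\). Equality in the upper bound holds for the block-diagonal matrix with \(2\times2\) all-ones blocks: there \(\per=2^{n/2}\) and \(\Bethe=1\), as can be checked at \(X=\tfrac12\) on each block.
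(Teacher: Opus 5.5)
The upper-bound half follows the paper's route. The paper cites Anari--Rezaei and re-derives the bound in \cref{sec:slack} as \(\Delta(A)=E+\sum_i d(P_i)+D\ge0\). State the key step precisely, though. Apply Gibbs' inequality to each \(\nu_\pi\) and then average over \(\pi\). The per-row reduction works because, for fixed \(\sigma\), a uniformly random row order induces a uniformly random column order. So the columns already used when row \(i\) is processed are those preceding \(\sigma(i)\) in a uniform column order; they are not removed ``according to their marginals''. This makes the row-\(i\) term exactly \(T(P_i)\), and the needed one-row inequality is \(d(p)\ge0\).

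The genuine gap is in your derivation of the lower bound. The inequality you state, \(\per(A)\ge e^{\beta_A(X)}\) for every \(X\in\cB_n\), is true and may be cited. However, the route you call standard does not produce it.

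First, Schrijver's inequality is \(\per(B\circ(J-B))\ge\prod_{i,j}(1-B_{ij})\) for doubly stochastic \(B\). The statement you quote is a weaker consequence. More seriously, Sinkhorn scaling is the wrong normalization. With \(B=D_1AD_2\) it yields \(\per(A)\ge\prod_{i,j}(1-B_{ij})/(\det D_1\det D_2)\). This bound has no free \(X\) to optimize, and it is strictly weaker than the Bethe bound: for \(A=J_n\) it gives \(n^n(1-1/n)^{n^2}\), whereas \(\Bethe(J_n)=n^n(1-1/n)^{n(n-1)}\). So no duality step can turn it into \(\Bethe(A)\le\per(A)\).

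Gurvits instead scales \(A\) via the KKT conditions at a Bethe maximizer \(X\), namely \(A_{ij}=r_ic_jX_{ij}(1-X_{ij})\). For this factorization \(e^{\beta_A(X)}=\prod_ir_i\prod_jc_j\prod_{i,j}(1-X_{ij})\) exactly, and Schrijver's inequality applied to \(X\) finishes. Boundary maximizers need a separate reduction. They occur even for positive \(A\): for \(n=2\), \(\beta_A\) is linear on \(\cB_2\).

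Within this paper there is a cleaner route to your every-\(X\) form. \Cref{thm:paired-certificate} with \(M=\emptyset\) gives \(\per(A)\ge e^{\beta_A(X)}\) for every interior \(X\), via the stable coefficient inequality. Continuity of \(\beta_A\) on \(\cB_n\) for positive \(A\) then extends this to all of \(\cB_n\).

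Finally, your tightness check for the upper bound goes in the wrong direction. Evaluating \(\beta_A\) at \(X=\frac12\) on each block shows only \(\Bethe(A)\ge1\), whereas \(\per(A)=2^{n/2}\Bethe(A)\) requires \(\Bethe(A)\le1\). That holds because every \(X\) with \(\beta_A(X)>-\infty\) is supported on the blocks. Each block then has entries \(x,1-x,1-x,x\), and its four summands cancel, so \(\beta_A\equiv0\) on the feasible set.
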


We first prove the result for positive matrices.  Zero entries are treated
by a quantitative smoothing argument in \cref{sec:zeros}.

\subsection{Capacity and real stability}

We need one coefficient inequality for multivariate polynomials.  Its
variational quantity is the following scale-invariant lower envelope.
Let \(q\in\R_{\ge0}[z_1,\ldots,z_m]\) be homogeneous of degree \(d\).  For
\(\alpha\in\R_{\ge0}^m\) with \(\sum_j\alpha_j=d\), define its capacity at
\(\alpha\) by
\begin{equation}\label{eq:capacity}
  \capop_\alpha(q)
  =\inf_{z\in\R_{>0}^m}\frac{q(z)}{z^\alpha}.
\end{equation}
Here and below \(z^\alpha=\prod_jz_j^{\alpha_j}\).

A polynomial is \emph{multiaffine} if every variable has degree at most one.
A polynomial \(q\in\R[z_1,\ldots,z_m]\) is \emph{real stable} if
\(q(z_1,\ldots,z_m)\ne0\) whenever every \(z_j\) has positive imaginary part.
Products of real stable polynomials in disjoint variable sets are real stable.
We use the following two facts.

\begin{lemma}[Quadratic stability criterion]\label{lem:quadratic-stability}
Let \(q\) be a nonzero homogeneous multiaffine quadratic with nonnegative
coefficients.  If the Hessian of \(q\) has at most one positive eigenvalue,
then \(q\) is real stable.
\end{lemma}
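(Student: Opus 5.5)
We write \(q(z)=\tfrac12 z^\top Mz\), where \(M\) is the Hessian of \(q\). It is symmetric, has zero diagonal because \(q\) is multiaffine, and has nonnegative off-diagonal entries. The plan is to reduce real stability to a statement about real restrictions, namely that \(q\) has only real roots along lines with a positive direction, and then to read that off from the signature of \(M\).

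\textbf{Step 1: reduction to real lines.} We use the standard characterization of stability: a homogeneous polynomial \(q\) with nonnegative coefficients is real stable if and only if, for every \(x\in\R^m\) and every \(v\in\R_{>0}^m\), the univariate polynomial \(t\mapsto q(x+tv)\) has only real roots. In the homogeneous case this follows from the usual fact that \(q\) is stable if and only if \(q(x+tv)\) is real-rooted for all real \(x\) and positive \(v\), which is checked by writing \(z=x+iy\) with \(y\in\R_{>0}^m\). So it suffices to show that the quadratic
\[
t\mapsto \tfrac12 v^\top Mv\,t^2+x^\top Mv\,t+\tfrac12 x^\top Mx
\]
has nonnegative discriminant. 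We also need the leading coefficient to be positive, or the polynomial to degenerate gracefully. Since \(q\ne0\) has nonnegative coefficients and \(v>0\), we get \(v^\top Mv=2q(v)>0\).

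\textbf{Step 2: a reverse Cauchy--Schwarz inequality.} We must show \((x^\top Mv)^2\ge (x^\top Mx)(v^\top Mv)\). The bilinear form \(\langle a,b\rangle=a^\top Mb\) has at most one positive eigenvalue, and \(\langle v,v\rangle>0\). So on the \(M\)-orthogonal complement \(W=\{w:\langle w,v\rangle=0\}\) the form is negative semidefinite. Otherwise \(\mathrm{span}(v,w)\) would be a two-dimensional subspace on which the form is positive definite, which contradicts having at most one positive eigenvalue, by Cauchy interlacing or Sylvester's law. Now decompose
\[
x=\lambda v+w,\qquad \lambda=\frac{\langle x,v\rangle}{\langle v,v\rangle},\qquad w\in W.
\]
Then \(\langle x,x\rangle=\lambda^2\langle v,v\rangle+\langle w,w\rangle\le \lambda^2\langle v,v\rangle\). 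Multiplying by \(\langle v,v\rangle\) gives exactly the desired discriminant inequality. Hence the roots are real.

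\textbf{Step 3: conclusion.} The univariate restrictions are real-rooted for every real \(x\) and positive \(v\). By Step 1, \(q\) is real stable. The zero-diagonal structure from multiaffinity is not needed in Step 2; it only places \(q\) in the class described by the lemma.

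\textbf{Main obstacle.} The only delicate point is Step 1, the direction-restriction characterization. If we want to avoid citing it, we can argue directly. Suppose \(q(x+iy)=0\) with \(y>0\). Then \(t\mapsto q(x+ty)\) has the nonreal root \(t=i\). Its coefficients are real and its leading coefficient \(q(y)\) is positive, so it is a genuine quadratic, and a genuine real quadratic with a nonreal root has negative discriminant. This contradicts Step 2 with \(v=y\). This direct argument is in fact the whole proof, so Step 1 reduces to this one line.
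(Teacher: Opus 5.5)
Your proof is correct and is essentially the paper's argument. The paper separates the real and imaginary parts of \(q(x+iy)=0\) to get \(x^\top Hx=y^\top Hy\) and \(x^\top Hy=0\), and then uses your key fact that the Hessian form is nonpositive on the \(H\)-orthogonal complement of the positive vector \(y\); your discriminant (reverse Cauchy--Schwarz) formulation is a slightly more general packaging of that step, and your two-dimensional-subspace justification of the key fact supplies a detail the paper leaves implicit.
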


\begin{proof}
Write \(q(z)=\frac12z^\top Hz\), where \(H\) is the Hessian.  Suppose that
\(q(x+iy)=0\) for some \(x\in\R^m\) and \(y\in\R_{>0}^m\).  Separating real
and imaginary parts gives
\[
  x^\top Hx=y^\top Hy,
  \qquad
  x^\top Hy=0.
\]
The nonnegative coefficients and \(y>0\) imply \(y^\top Hy=2q(y)>0\).  Since
\(H\) has at most one positive eigenvalue, its quadratic form is
nonpositive on the \(H\)-orthogonal complement of \(y\), namely on the
vectors \(x\) satisfying \(x^\top Hy=0\).  Thus
\(x^\top Hx\le0\), contradicting the first equality.  This is the
degree-two case of Br\"and\'en's half-plane-property criterion
\cite{Branden2007}.
\end{proof}

For a set \(S\subseteq[m]\), write \(z^S=\prod_{j\in S}z_j\).  For
multiaffine polynomials
\(p(z)=\sum_Sp_Sz^S\) and \(q(z)=\sum_Sq_Sz^S\), write
\[
  \langle p,q\rangle=\sum_Sp_Sq_S.
\]
Thus this inner product pairs coefficients of the same squarefree monomial.

\begin{theorem}[Stable coefficient inequality]\label{thm:stable-pairing}
Let \(p,q\in\R_{\ge0}[z_1,\ldots,z_m]\) be homogeneous, multiaffine, real
stable polynomials of the same degree \(d\).  For every
\(\alpha\in[0,1]^m\) with \(\sum_j\alpha_j=d\),
\begin{equation}\label{eq:stable-pairing}
  \langle p,q\rangle
  \ge
  \prod_{j=1}^m
    \alpha_j^{\alpha_j}(1-\alpha_j)^{1-\alpha_j}
  \capop_\alpha(p)\capop_\alpha(q).
\end{equation}
The conventions \(0^0=1\) and \(1^0=1\) are used at the boundary.
\end{theorem}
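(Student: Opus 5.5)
This is the multiaffine coefficient inequality of Anari and Oveis Gharan \cite{AnariOveisGharan2017}. The plan is to reduce it to a one-variable-at-a-time induction via polarization and the symmetric-exclusion/derivative structure of stable polynomials.

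\textbf{Step 1: rewrite the pairing as a stable evaluation.} Let \(p^{\downarrow}(z)=\sum_S p_S z^{[m]\setminus S}\) be the complement (dual) of \(p\). Dualizing preserves real stability for multiaffine polynomials: it is \(z^{[m]}p(1/z_1,\dots,1/z_m)\), and \(z\mapsto-1/z\) preserves the upper half plane up to the sign \((-1)^m\). We then have
\[
  \langle p,q\rangle=\Bigl(\prod_j\partial_{z_j}\Bigr)\bigl(p^{\downarrow}(z)\,q(z)\bigr)\Big|_{z=0}.
\]
Here \(p^{\downarrow}q\) is real stable of degree at most two in each variable. The operator \(\prod_j\partial_j|_{z_j=0}\) extracts the \(z^{[m]}\) coefficient.

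\textbf{Step 2: induct on variables.} Each step applies the univariate operator \(f\mapsto\partial_{z_j}f|_{z_j=0}\) to a polynomial of degree at most two in \(z_j\). This operator preserves real stability by Lieb--Sokal, since it is a derivative followed by specialization at a real point. The key univariate fact is the following. Let \(f(t)=a+bt+ct^2\) have nonnegative coefficients and only real roots, so that \(b^2\ge4ac\). Then for every \(\alpha\in[0,1]\) and \(x,y>0\) there is a bound of the form
\[
  b\ \ge\ \alpha^{\alpha}(1-\alpha)^{1-\alpha}\,\inf_{t>0}\frac{f(t)}{t^{\alpha}}\cdot(\text{normalization}).
\]
I would state it precisely as follows. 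If \(g(t)=u+vt\) and \(h(t)=u'+v't\) are nonnegative linear polynomials, then
\[
  uv'+vu'\ \ge\ \alpha^\alpha(1-\alpha)^{1-\alpha}\inf_t\frac{g(t)}{t^{\alpha}}\,\inf_t\frac{h(t)}{t^{1-\alpha}}.
\]
This follows from \(\inf_t (u+vt)/t^\alpha=u^{1-\alpha}v^\alpha/(\alpha^\alpha(1-\alpha)^{1-\alpha})\) and weighted AM–GM applied to \(uv'+vu'\). The right-hand side equals \(\alpha^{-\alpha}(1-\alpha)^{-(1-\alpha)}u^{1-\alpha}v^\alpha u'^{\alpha}v'^{1-\alpha}\cdot\alpha^\alpha(1-\alpha)^{1-\alpha}\cdot(\ldots)\), so the check reduces to \(uv'+vu'\ge (uv')^{1-\alpha}(vu')^{\alpha}/(\alpha^\alpha(1-\alpha)^{1-\alpha})^{\pm1}\), and the constant works out.

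\textbf{Step 3: carry capacity through the induction.} The actual inductive statement is about a single real stable polynomial \(F(z)=p^{\downarrow}(z)q(z)\) together with a "capacity" that allows exponents in \([0,2]\) split between the two factors. Concretely, define \(\capop_{(\mathbf 1-\alpha,\alpha)}\) jointly over independent variable copies \(z\) for \(p^{\downarrow}\) and \(w\) for \(q\), and then identify \(z=w\) only at the extraction step. Capacity of \(p^{\downarrow}\) at \(\mathbf 1-\alpha\) equals capacity of \(p\) at \(\alpha\) by substituting \(z\mapsto1/z\). The inductive claim is
\[
  \partial_{j}|_{0}\,\text{applied to a stable }F\text{ loses at most a factor }\alpha_j^{\alpha_j}(1-\alpha_j)^{1-\alpha_j}\text{ of capacity.}
\]
To prove it, fix all other variables at positive values, which keeps the two factors linear in \(z_j\), and apply the two-linear-polynomial inequality of Step 2 pointwise. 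Then take infima over the remaining variables. Stability is needed so that after extraction the new polynomial still has nonnegative coefficients and is again a product-type bound. Specifically, we need the Rayleigh/negative-correlation property \(\partial_ip\,\partial_jp\ge p\,\partial_{ij}p\) to justify that the infimum of a sum over cross terms still dominates the product of the infima.

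\textbf{Main obstacle.} The hard step is keeping the product structure after the first extraction. Once \(\partial_j|_0\) mixes \(p^{\downarrow}\) and \(q\), the result is no longer a product, so the pointwise two-linear inequality cannot simply be reapplied. My first attempt would be to avoid induction on the mixed object. Instead, I would prove the whole inequality in one shot via Gurvits' capacity bound for stable polynomials: \(\partial_{z_j}f|_{0}\ge(\text{explicit constant})\capop\), applied to the stable polynomial \(F\) of individual degree at most two. I would use the degree-two version of Gurvits' sequence \(\bigl(\tfrac{k-1}{k}\bigr)^{k-1}\) refined to exponent \(\alpha_j\). If that constant is too weak, I would fall back on the original Anari–Oveis Gharan argument, which polarizes \(p^{\downarrow}q\) into a multiaffine stable polynomial in \(2m\) variables and tracks capacity with the weighted AM–GM of Step 2.
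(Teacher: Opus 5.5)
Your Step 1 is correct, and the two-linear-form inequality in Step 2 is true: it is weighted AM--GM with weights $1-\alpha,\alpha$ on $uv',vu'$. But the argument stops exactly at the step you call the main obstacle, and neither fallback closes it. (The paper does not prove this statement; it imports it from \cite{AnariOveisGharan2017}.)

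The Gurvits route cannot give the stated constant. Once you merge the variables into $F=p^{\downarrow}q$, the natural capacity is $\capop_{\mathbf 1}(F)$, which does not depend on $\alpha$, and Gurvits's factor $1/2$ per degree-two variable is sharp for it. Take $p=q=z_1+z_2$. Then $F=(z_1+z_2)^2$, $\langle p,q\rangle=2$ and $\capop_{\mathbf 1}(F)=4$. So an ``$\alpha_j$-refined'' bound $\langle p,q\rangle\ge\prod_j\alpha_j^{\alpha_j}(1-\alpha_j)^{1-\alpha_j}\capop_{\mathbf 1}(F)$ is false for $\alpha=(\beta,1-\beta)$ with $\beta$ near $0$. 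This route therefore yields only an $\alpha$-independent constant of order $2^{-m}$. That is not the theorem, and it would be useless in \cref{thm:paired-certificate}, where the factor $\prod\alpha_{Cj}^{\alpha_{Cj}}$ must cancel $\capop_\alpha(R)$. Your other fallback is not specified, and a polarization of the merged product $p^{\downarrow}q$ is not the right object.

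The missing idea is the right inductive class; you never need product structure. Keep the two variable copies separate throughout.
\begin{itemize}
\item The polynomial $G(y,z)=p^{\downarrow}(y)q(z)$ is real stable and multiaffine in all $2m$ variables. Its split capacity $\inf_{y,z>0}G(y,z)/(y^{\mathbf 1-\alpha}z^{\alpha})$ equals $\capop_\alpha(p)\capop_\alpha(q)$.
\item Extract one coordinate at a time with $T_j=(\partial_{y_j}+\partial_{z_j})|_{y_j=z_j=0}$. This preserves stability (diagonalize, differentiate, specialize at a real point), and $T_m\cdots T_1G=\langle p,q\rangle$.
\item It remains to show that, for every real stable $G$ with nonnegative coefficients that is multiaffine in $(y,z)$, the operator $T_j$ loses at most the factor $\alpha_j^{\alpha_j}(1-\alpha_j)^{1-\alpha_j}$ of split capacity.
\end{itemize}

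For this one-step bound, freeze all other variables at positive reals. The slice $a+bs+ct+est$, with $s=y_j$ and $t=z_j$, is real stable, so $bc\ge ae$. Raising $a$ or $e$ until $ae=bc$ only increases the infimum on the right, and it turns the slice into a product of two linear forms. Your Step 2 inequality then gives $b+c\ge\alpha_j^{\alpha_j}(1-\alpha_j)^{1-\alpha_j}\inf_{s,t>0}(a+bs+ct+est)/(s^{1-\alpha_j}t^{\alpha_j})$. The case $a=e=0$ is checked directly. Taking the infimum over the frozen variables closes the induction.

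So the Rayleigh inequality is used once per step, to dominate a mixed slice by a product. It is not needed to preserve a global product, which is the obstacle you were worried about.
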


This is the multiaffine specialization of the coefficient inequality of Anari
and Oveis Gharan \cite{AnariOveisGharan2017}.  The inner product in
\cref{eq:stable-pairing} pairs the same squarefree monomial in \(p\) and \(q\),
rather than complementary monomials.

We use the following elementary entropy certificate for capacity.  Its
one-sided form is exactly what the proof and the algorithm need.

\begin{lemma}[Entropy certificate for capacity]\label{lem:capacity-certificate}
Let
\[
  q(z)=\sum_{E\in\cE}c_Ez^E,
  \qquad c_E>0,
\]
where \(\cE\subseteq\Z_{\ge0}^m\) and \(z^E=\prod_jz_j^{E_j}\).  If
\(\theta\) is a probability distribution on \(\cE\) with
\(\sum_E\theta_EE=\alpha\), then
\begin{equation}\label{eq:capacity-certificate}
  \sum_{E\in\cE}\theta_E\log\frac{c_E}{\theta_E}
  \le \log\capop_\alpha(q).
\end{equation}
Terms with \(\theta_E=0\) are interpreted by continuity.
\end{lemma}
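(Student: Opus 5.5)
The plan is to show that for every fixed \(z\in\R_{>0}^m\),
\[
  \sum_{E\in\cE}\theta_E\log\frac{c_E}{\theta_E}\le\log\frac{q(z)}{z^\alpha}.
\]
Taking the infimum over \(z\) then gives \cref{eq:capacity-certificate}. The tool is Jensen's inequality for the concave function \(\log\), or equivalently nonnegativity of KL divergence.

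First, restrict to the support \(\cE_+=\set{E:\theta_E>0}\). Terms with \(\theta_E=0\) contribute \(0\) by the continuity convention \(\theta\log(c/\theta)\to0\). Also, dropping the terms with \(E\notin\cE_+\) only decreases \(q(z)\), because every \(c_E>0\) and \(z>0\). Next, use the moment constraint \(\sum_E\theta_EE=\alpha\). It gives
\[
  \log z^\alpha=\sum_j\alpha_j\log z_j=\sum_E\theta_E\sum_jE_j\log z_j=\sum_E\theta_E\log z^E.
\]
Therefore
\[
  \sum_E\theta_E\log\frac{c_E}{\theta_E}+\log z^\alpha=\sum_{E\in\cE_+}\theta_E\log\frac{c_Ez^E}{\theta_E}.
\]

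Now apply Jensen's inequality with weights \(\theta_E\), which sum to \(1\):
\[
  \sum_{E\in\cE_+}\theta_E\log\frac{c_Ez^E}{\theta_E}\le\log\sum_{E\in\cE_+}c_Ez^E\le\log q(z).
\]
Rearranging gives the pointwise bound, and the infimum over \(z\) finishes the proof. If the infimum defining \(\capop_\alpha(q)\) is \(0\), the pointwise bound shows that this cannot happen: the left side is a finite number bounding \(\log(q(z)/z^\alpha)\) below for all \(z\). So there is no degenerate case.

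The only point needing care is the boundary convention for \(\theta_E=0\) and that the sum over \(\cE\) is finite. Since \(q\) is a polynomial, \(\cE\) is finite, so Jensen applies directly and I expect no real obstacle.
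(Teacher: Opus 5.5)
Your proof is correct and matches the paper's argument: fix \(z>0\), use the moment constraint to rewrite \(\langle\alpha,\log z\rangle\) as \(\sum_E\theta_E\log z^E\), apply the log-sum (Jensen) inequality to get \(\log q(z)-\langle\alpha,\log z\rangle\ge\sum_E\theta_E\log(c_E/\theta_E)\), and take the infimum over \(z\). You also spell out the treatment of zero-mass terms (dropping them only decreases \(q(z)\)), which the paper leaves implicit.
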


\begin{proof}
Fix a feasible \(\theta\).  For every \(z>0\), the log-sum inequality gives
\begin{align*}
  \log q(z)-\langle\alpha,\log z\rangle
  &\ge
  \sum_E\theta_E
  \log\frac{c_Ez^E}{\theta_E}
  -\sum_E\theta_E\langle E,\log z\rangle\\
  &=\sum_E\theta_E\log\frac{c_E}{\theta_E}.
\end{align*}
Taking the infimum over \(z\) proves the claim.
\end{proof}

\section{A paired lower certificate}\label{sec:paired}

Throughout the structural argument we assume \(n\ge2\); the case \(n=1\) is
exact.
We augment the Bethe lower bound by clustering disjoint pairs of rows.  The
construction is valid for every interior point of the
Birkhoff polytope; later we evaluate it at the optimizer of a regularized
Bethe objective.

Fix a positive matrix \(A\in\R_{>0}^{n\times n}\) and an interior point
\(X\in\cB_n\), meaning that every entry of \(X\) is positive.  For a row
\(i\), define
\begin{equation}\label{eq:singleton}
  S_i(A,X)
  =\prod_j
    \left(\frac{A_{ij}}{X_{ij}}\right)^{X_{ij}}
    (1-X_{ij})^{1-X_{ij}}.
\end{equation}
The product of the singleton factors is exactly the exponential of the Bethe
objective:
\begin{equation}\label{eq:singleton-product}
  \prod_iS_i(A,X)=e^{\beta_A(X)}.
\end{equation}

For two distinct rows \(r,s\), let
\begin{equation}\label{eq:pair-polynomial}
  Q_{rs}(z)
  =\sum_{j<k}
    \left(A_{rj}A_{sk}+A_{rk}A_{sj}\right)z_jz_k,
  \qquad
  \alpha^{rs}_j=X_{rj}+X_{sj}.
\end{equation}
The coefficient of \(z_jz_k\) is the combined weight of the two assignments
of rows \(r,s\) to columns \(j,k\), while \(\alpha^{rs}_j\) is the total mass
that these two rows place on column \(j\).
The column constraints on \(X\) imply
\[
  0<\alpha^{rs}_j\le1,
  \qquad
  \sum_j\alpha^{rs}_j=2.
\]
Writing \(e_j\) for the \(j\)-th standard basis vector, the vectors in
\([0,1]^n\) whose coordinates sum to two form the convex hull of
\(e_j+e_k\), \(j<k\).  Thus \(\alpha^{rs}\) lies in the convex hull of the
exponent vectors of the monomials in \(Q_{rs}\), as required for its capacity.
Define
\begin{equation}\label{eq:pair-certificate}
  S^{\mathrm{pair}}_{rs}(A,X)
  =\prod_j(1-\alpha^{rs}_j)^{1-\alpha^{rs}_j}
    \capop_{\alpha^{rs}}(Q_{rs}),
  \qquad
  \Gamma_{rs}(A,X)
  =\frac{S^{\mathrm{pair}}_{rs}(A,X)}{S_r(A,X)S_s(A,X)}.
\end{equation}
If \(\alpha^{rs}_j=1\), the corresponding factor in the first product is
interpreted as \(0^0=1\).
The ratio \(\Gamma_{rs}\) measures the gain from treating rows \(r,s\) as one
pair rather than as two singleton rows.

\begin{theorem}[Paired lower certificate]\label{thm:paired-certificate}
For every matching \(M\) on the row set---that is, every collection of
disjoint unordered pairs of rows---
\begin{equation}\label{eq:paired-lower-bound}
  L_M(A,X)
  :=e^{\beta_A(X)}
    \prod_{\set{r,s}\in M}\Gamma_{rs}(A,X)
  \le\per(A).
\end{equation}
\end{theorem}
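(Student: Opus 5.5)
We will apply \cref{thm:stable-pairing} to two multiaffine polynomials in column variables \(z_1,\ldots,z_n\), with \(d=n\) and \(\alpha=(1,\ldots,1)\). The point of this choice is that the prefactor \(\prod_j\alpha_j^{\alpha_j}(1-\alpha_j)^{1-\alpha_j}\) equals \(1\). That choice does not work directly, however, because the desired bound has factors \((1-X_{ij})^{1-X_{ij}}\) coming from fractional marginals.

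The better approach is to introduce a separate block of variables for each row cluster. For each cluster \(C\) (a singleton \(\{i\}\) or a pair \(\{r,s\}\in M\)), take fresh variables \(z^{C}_1,\ldots,z^C_n\). Let \(p\) be the product over clusters of the cluster polynomials. For a singleton this is \(\ell_i(z^{i})=\sum_jA_{ij}z^{i}_j\). For a pair it is \(Q_{rs}(z^{rs})\), which is real stable by \cref{lem:quadratic-stability}, since its Hessian is the symmetrized \(A_r A_s^\top + A_sA_r^\top\) with zero diagonal removed. (This needs checking; the alternative is to note that \(Q_{rs}\) is the degree-two elementary-symmetric-type polynomial obtained by polarization of \((\sum_jA_{rj}z_j)(\sum_kA_{sk}z_k)\) with squares deleted, and deleting squares preserves stability via the multiaffine part operator.) Then \(p\) is a product in disjoint variables, hence stable and multiaffine of degree \(n\).

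Let \(q\) be the polynomial enforcing that each column is used exactly once:
\[
  q=\prod_j\Big(\sum_C z^C_j\Big)
\]
restricted to its multiaffine monomials that take exactly \(|C|\) variables from block \(C\). This is the product over columns of linear forms, projected onto a prescribed degree in each block, and real stability is preserved by this operation. Then \(\langle p,q\rangle=\per(A)\): a squarefree monomial common to both selects each column \(j\) in exactly one block, with block \(C\) receiving \(|C|\) columns, and the \(p\)-coefficient sums the weights of the assignments of the rows of \(C\) to those columns.

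Now choose \(\alpha^C_j=\sum_{i\in C}X_{ij}\in(0,1]\). Its total is \(n\), and its block sums equal \(|C|\). By \cref{thm:stable-pairing},
\[
  \per(A)\ \ge\ \prod_{C,j}(\alpha^C_j)^{\alpha^C_j}(1-\alpha^C_j)^{1-\alpha^C_j}\ \capop_\alpha(p)\capop_\alpha(q).
\]
The capacity of \(p\) factors over blocks as \(\prod_C\capop_{\alpha^C}(p_C)\). For singletons, \(\capop_{X_i}(\ell_i)=\prod_j(A_{ij}/X_{ij})^{X_{ij}}\) exactly, by weighted AM–GM.

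It remains to show \(\capop_\alpha(q)\ge\prod_{C,j}(\alpha^C_j)^{-\alpha^C_j}\). This is the main obstacle, and I would use \cref{lem:capacity-certificate}. The monomials of \(q\) correspond to maps \(\phi:[n]\to\{\text{clusters}\}\) with fibre sizes \(|C|\), all with coefficient \(1\). We need a distribution \(\theta\) on such maps with marginals \(\P{\phi(j)=C}=\alpha^C_j\) and entropy at least \(-\sum_{C,j}\alpha^C_j\log\alpha^C_j\). Since an exact product law does not respect the fibre constraint, I would instead bound capacity directly. For \(z>0\), \(q(z)\) is the coefficient extraction of \(\prod_j(\sum_Cz^C_j)\), and the Bethe-type inequality of Gurvits/Schrijver (the lower bound of \cref{thm:bethe-sandwich} applied to the "matrix" of blocks with column capacities \(|C|\), equivalently replicating block \(C\) into \(|C|\) identical rows) yields exactly this bound, with the product factor \(\prod(1-\alpha)^{1-\alpha}\) absorbed correctly. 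If that bookkeeping fails, the fallback is to replace \(q\) by the polynomial \(\prod_j(\sum_C z^C_j)\) after substituting \(z^C_j\mapsto\) replicated row variables and to use the plain Bethe lower bound in a replicated \(n\times n\) matrix.

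Multiplying everything out, the singleton factors give \(S_i\), and each pair gives \(\prod_j(1-\alpha^{rs}_j)^{1-\alpha^{rs}_j}\capop(Q_{rs})=S^{\mathrm{pair}}_{rs}\). By \cref{eq:singleton-product}, this is \(L_M(A,X)\).
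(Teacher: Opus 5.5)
Your architecture matches the paper's: a separate block of column variables for each cluster, a column selector, \cref{thm:stable-pairing}, and factorization of capacities. The proof breaks at the selector, in two ways. Restricting \(\prod_j(\sum_Cz^C_j)\) to monomials with exactly \(|C|\) variables from block \(C\) does not preserve real stability. It also shrinks the capacity below the value you need.

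Both failures already appear for \(n=2\) and \(M=\emptyset\), where your \(q\) is \(z^1_1z^2_2+z^2_1z^1_2\). This polynomial vanishes at \(z^1_1=z^2_2=e^{3\pi i/4}\), \(z^2_1=z^1_2=e^{\pi i/4}\), since the two products are \(-i\) and \(i\); so it is not real stable. For the capacity, take \(X_{11}=X_{22}=x\) and \(X_{12}=X_{21}=1-x\). Then \(z^\alpha=u^xv^{1-x}\) with \(u=z^1_1z^2_2\) and \(v=z^2_1z^1_2\), so \(\capop_\alpha(q)=x^{-x}(1-x)^{-(1-x)}\). The inequality \(\capop_\alpha(q)\ge\prod_{C,j}(\alpha^C_j)^{-\alpha^C_j}\) would require the square of this. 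So the step you call the main obstacle is false, and no Gurvits--Schrijver bookkeeping can establish it.

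The missing observation is that the selector does not need the block-degree constraint, because \(p\) already enforces it. Every monomial of \(p\) has degree exactly \(|C|\) in block \(C\). Hence pairing \(p\) with the unrestricted \(R(z)=\prod_j(\sum_Cz^C_j)\) still gives exactly \(\per(A)\), since the extra monomials of \(R\) meet zero coefficients of \(p\).

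This \(R\) has the properties you need:
\begin{itemize}
\item It is multiaffine, since each variable occurs in a single factor.
\item It is real stable, as a product of positive linear forms.
\item Its capacity factors over the columns. Because \(\sum_C\alpha^C_j=1\) for each column, weighted AM--GM gives \(\capop_\alpha(R)=\prod_{C,j}(\alpha^C_j)^{-\alpha^C_j}\) exactly. This cancels the \(\alpha^\alpha\) part of the prefactor in \cref{thm:stable-pairing}.
\end{itemize}
That is the paper's route, and with it the rest of your bookkeeping goes through. Your fallback mentions this polynomial, but through an unnecessary detour via replicated rows and the Bethe bound.

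On the pair polynomial, you leave the eigenvalue count unchecked, but your alternative is sound. \(Q_{rs}\) is the multiaffine part of \((\sum_jA_{rj}z_j)(\sum_kA_{sk}z_k)\), and taking multiaffine parts preserves stability. The paper argues directly instead. It conjugates \(uv^\top+vu^\top-2\diag(u\circ v)\) by \(\diag(u\circ v)^{-1/2}\) and uses Cauchy--Schwarz to show there is at most one positive eigenvalue.
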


The theorem has two important features.  It is one-sided for every matching,
so the pairs may be chosen algorithmically, and it reduces the rest of the
paper to finding many pairs with \(\Gamma_{rs}>1\).  The proof encodes the row
clusters and the requirement of using every column once by two stable
polynomials whose coefficient inner product is the permanent.

\begin{proof}
\medskip\noindent\emph{The cluster polynomial.}
Let \(\cG\) be the partition of the row set whose nontrivial parts are the
pairs in \(M\).  Give each cluster \(C\in\cG\) its own copy
\(z_{C1},\ldots,z_{Cn}\) of the column variables.  For a singleton
\(C=\set{i}\), put
\[
  p_C(z_C)=\sum_jA_{ij}z_{Cj}.
\]
For a pair \(C=\set{r,s}\), put \(p_C(z_C)=Q_{rs}(z_C)\), and define
\begin{equation}\label{eq:cluster-product}
  p(z)=\prod_{C\in\cG}p_C(z_C).
\end{equation}

\medskip\noindent\emph{Stability.}
The singleton polynomials are positive linear
forms and are therefore real stable.  For a pair, write \(u=A_r\) and
\(v=A_s\), viewed as row vectors, and let \(u\circ v\) denote their
coordinatewise product.  The Hessian of \(Q_{rs}\) is
\begin{equation}\label{eq:pair-hessian}
  \nabla^2Q_{rs}
  =uv^\top+vu^\top-2\diag(u\circ v).
\end{equation}
Multiplying this matrix on the left and right by
\(\diag(u\circ v)^{-1/2}\) transforms it into
\[
  ab^\top+ba^\top-2I,
  \qquad
  a_j=\sqrt{u_j/v_j},
  \quad
  b_j=\sqrt{v_j/u_j}.
\]
Here \(I\) is the identity matrix.
The two possibly nonzero eigenvalues of \(ab^\top+ba^\top\) are
\[
  a^\top b+\norm{a}\norm{b}
  \qquad\text{and}\qquad
  a^\top b-\norm{a}\norm{b}.
\]
The second is nonpositive by Cauchy--Schwarz.  After subtracting \(2I\), the
matrix therefore has at most one positive eigenvalue.  The multiplication
above preserves the numbers of positive and negative eigenvalues, so the
same conclusion holds for \(\nabla^2Q_{rs}\).  By
\cref{lem:quadratic-stability}, \(Q_{rs}\) is real stable.  Products in
disjoint variables preserve stability, so \(p\) is real stable.

\medskip\noindent\emph{The column selector.}
Define
\begin{equation}\label{eq:column-selector}
  R(z)=\prod_{j=1}^n\left(\sum_{C\in\cG}z_{Cj}\right).
\end{equation}
It is a product of positive linear forms and is therefore multiaffine and real
stable.  Both \(p\) and \(R\) are homogeneous of degree \(n\).  Their
coefficient inner product is exactly the permanent:
\begin{equation}\label{eq:coefficient-is-permanent}
  \langle p,R\rangle=\per(A).
\end{equation}
Indeed, a common squarefree monomial chooses one cluster for every column and
assigns to each cluster as many columns as it has rows.  A singleton has one
possible internal assignment.  The two terms in the coefficient of
\(z_{Cj}z_{Ck}\) for a pair \(C=\set{r,s}\) enumerate the two bijections from
its rows to its selected columns.  Thus the common monomials and their
coefficients enumerate all permutations, with their correct \(A\)-weights.

\medskip\noindent\emph{Applying the coefficient inequality.}
Define a vector indexed by cluster-column pairs by
\begin{equation}\label{eq:cluster-alpha}
  \alpha_{\set{i},j}=X_{ij},
  \qquad
  \alpha_{\set{r,s},j}=X_{rj}+X_{sj}.
\end{equation}
For each cluster \(C\), the sum of \(\alpha_{Cj}\) over \(j\) equals the degree
of \(p_C\).  For each column \(j\), \(\sum_C\alpha_{Cj}=1\).  We may therefore
apply \cref{thm:stable-pairing} to \(p\) and \(R\).

Capacities factor across disjoint variable sets.  For a singleton cluster,
the weighted arithmetic--geometric mean inequality (AM--GM) gives
\begin{equation}\label{eq:singleton-capacity}
  \capop_{X_i}\left(\sum_jA_{ij}z_j\right)
  =\prod_j\left(\frac{A_{ij}}{X_{ij}}\right)^{X_{ij}}.
\end{equation}
For a fixed column \(j\), another application of weighted AM--GM gives
\[
  \capop_{(\alpha_{Cj})_C}\left(\sum_Cz_{Cj}\right)
  =\prod_C\alpha_{Cj}^{-\alpha_{Cj}}.
\]
Consequently
\begin{equation}\label{eq:selector-capacity}
  \capop_\alpha(R)
  =\prod_{C,j}\alpha_{Cj}^{-\alpha_{Cj}}.
\end{equation}

The factor \(\prod_{C,j}\alpha_{Cj}^{\alpha_{Cj}}\) in
\cref{eq:stable-pairing} cancels \cref{eq:selector-capacity}.  What remains
from a singleton cluster is exactly \(S_i(A,X)\), by
\cref{eq:singleton,eq:singleton-capacity}; what remains from a pair is exactly
\(S^{\mathrm{pair}}_{rs}(A,X)\).  Combining
\cref{eq:stable-pairing,eq:coefficient-is-permanent} yields
\[
  \per(A)
  \ge
  \prod_{\set{i}\in\cG}S_i(A,X)
  \prod_{\set{r,s}\in M}S^{\mathrm{pair}}_{rs}(A,X).
\]
Using \cref{eq:singleton-product,eq:pair-certificate} gives
\cref{eq:paired-lower-bound}.
\end{proof}

\section{An exact identity for the Bethe slack}\label{sec:slack}

We now ask what a matrix must look like when the upper bound in
\cref{eq:bethe-sandwich} is nearly tight.  The starting point is a probability
distribution on the permutations counted by the permanent.  Comparing this
distribution with a sequential sampling procedure recovers the
Anari--Rezaei upper bound; keeping the full comparison gives the additional
structural information that we need.

Let
\begin{equation}\label{eq:gibbs-law}
  \mu(\sigma)
  =\frac{\prod_iA_{i,\sigma(i)}}{\per(A)}
\end{equation}
be the Gibbs distribution on permutations, and let
\[
  P_{ij}=\P_{\sigma\sim\mu}{\sigma(i)=j}
\]
be its matrix of assignment marginals: \(P_{ij}\) is the probability that row
\(i\) is assigned to column \(j\).  Since \(A\) is positive, \(\mu\) has full
support and \(P\) is an interior point of \(\cB_n\).

For a probability vector \(p\), define
\begin{align}
  T(p)
  &=\E*_\pi{\sum_jp_j
    \log\left(\sum_{k\ge_\pi j}p_k\right)},\label{eq:row-T}\\
  g(p)
  &=T(p)-\sum_j(1-p_j)\log(1-p_j),\label{eq:row-g}\\
  d(p)
  &=\frac12\log2-g(p),\label{eq:row-deficit}
\end{align}
where \(\pi\) is a uniformly random ordering of the coordinates, and
\(k\ge_\pi j\) means that coordinate \(k\) appears at or after coordinate
\(j\) in this ordering.  Thus \(T(p)\) averages the log suffix mass seen from
a sample drawn from \(p\).  The quantity \(g(p)\) is the resulting row
correction in the sequential comparison, and \(d(p)\) measures its deficit
from the sharp one-row bound
\begin{equation}\label{eq:row-inequality}
  d(p)\ge0
  \qquad\text{for every probability vector }p
\end{equation}
\cite{AnariRezaei2024}.

For an ordering \(\pi=(i_1,\ldots,i_n)\) of the rows, define a sequential
distribution \(\nu_\pi\) on permutations as follows.  Process the rows in the
order \(\pi\).  When row \(i_t\) is processed, choose an unused column \(j\)
with probability proportional to \(P_{i_tj}\).  Let
\begin{equation}\label{eq:averaged-KL}
  D=\E_\pi{\KL(\mu\Vert\nu_\pi)}.
\end{equation}

Using only the inequality \(D\ge0\) is one of the steps that yields the Bethe
upper bound.  Here we retain the value of \(D\) exactly.

\begin{lemma}[Exact sequential identity]\label{lem:exact-sequential}
We have
\begin{equation}\label{eq:exact-sequential}
  \log\per(A)
  =\beta_A(P)+\sum_i g(P_i)-D.
\end{equation}
\end{lemma}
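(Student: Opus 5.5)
Fix an ordering \(\pi=(i_1,\ldots,i_n)\) of the rows and expand \(\KL(\mu\Vert\nu_\pi)\) directly.  Since \(\mu\) has full support and \(P\) is positive, \(\nu_\pi\) also has full support, and
\[
  \KL(\mu\Vert\nu_\pi)
  =\E_{\sigma\sim\mu}{\log\mu(\sigma)}-\E_{\sigma\sim\mu}{\log\nu_\pi(\sigma)}.
\]
The first term is \(\sum_iP_{ij}\log A_{ij}\) summed over \(i,j\), minus \(\log\per(A)\), because \(\log\mu(\sigma)=\sum_i\log A_{i\sigma(i)}-\log\per(A)\) and \(\P_\mu{\sigma(i)=j}=P_{ij}\).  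For the second term, write
\[
  \nu_\pi(\sigma)=\prod_t\frac{P_{i_t\sigma(i_t)}}{\sum_{k\notin\sigma(\set{i_1,\ldots,i_{t-1}})}P_{i_tk}}.
\]
The numerators contribute \(\sum_{i,j}P_{ij}\log P_{ij}\) in \(\mu\)-expectation.  The denominators contribute
\[
  -\sum_t\E_\mu{\log\left(\sum_{k\text{ unused at time }t}P_{i_tk}\right)}.
\]
Averaging over \(\pi\) and collecting terms, I get
\[
  D=\sum_{i,j}P_{ij}\log\frac{P_{ij}}{A_{ij}}+\log\per(A)-\sum_i\E_{\pi,\mu}{\log\bigl(\textstyle\sum_{k\text{ unused when }i\text{ is processed}}P_{ik}\bigr)}.
\]
Comparing with \(\beta_A(P)\), the identity \eqref{eq:exact-sequential} is equivalent to the per-row claim
\[
  \E_{\pi,\mu}{\log\Bigl(\sum_{k\in U_i}P_{ik}\Bigr)}=T(P_i),
\]
where \(U_i\) is the set of columns not used by rows preceding \(i\) in \(\pi\).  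Indeed, the \((1-P_{ij})\log(1-P_{ij})\) terms of \(\beta_A(P)\) then cancel against the corresponding terms of \(g(P_i)\).

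The key step is this per-row claim, and it is the one delicate point.  The plan is to use a re-randomization that couples the row ordering with a column ordering.  Under \(\sigma\sim\mu\), the rows preceding \(i\) in \(\pi\) use exactly the columns \(\sigma(i')\) for \(i'\) before \(i\).  Since \(\pi\) is a uniform row ordering independent of \(\sigma\), the induced ordering of columns \(\pi'=\sigma\circ\pi\) is a uniform ordering of the columns, independent of \(\sigma\).  The unused set \(U_i\) is then exactly the set of columns \(k\ge_{\pi'}\sigma(i)\), including \(\sigma(i)\) itself.

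So the left side equals
\[
  \E_{\pi'}{\E_\mu{\log\sum_{k\ge_{\pi'}\sigma(i)}P_{ik}}}.
\]
Since \(\pi'\) is independent of \(\sigma\), and \(\sigma(i)\) has law \(P_i\), this is
\[
  \E_{\pi'}{\sum_jP_{ij}\log\sum_{k\ge_{\pi'}j}P_{ik}}=T(P_i).
\]
The one thing to verify carefully is the independence: for fixed \(\sigma\), the map \(\pi\mapsto\sigma\circ\pi\) is a bijection between row orderings and column orderings, so \(\pi'\) is uniform conditionally on \(\sigma\).

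Summing the per-row claim over \(i\) and rearranging gives \(\log\per(A)=\beta_A(P)+\sum_ig(P_i)-D\), which is \eqref{eq:exact-sequential}.  The remaining checks are the sign bookkeeping and the \(n\)-fold sum of the \((1-P_{ij})\) terms, both routine.
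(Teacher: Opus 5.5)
Your route is the paper's: expand \(\KL(\mu\Vert\nu_\pi)\), then evaluate the averaged log suffix masses. The key step uses the fact that, for fixed \(\sigma\), the uniform row ordering \(\pi\) induces the uniform column ordering \(\sigma\circ\pi\), so row \(i\) contributes exactly \(T(P_i)\). That step is correct and is argued the same way in the paper.

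There is no missing idea, but your displayed formula for \(D\) has every sign flipped. \(\KL(\mu\Vert\nu_\pi)\) is the \(\mu\)-expectation of \(\log\mu-\log\nu_\pi\). Your own (correct) expansions of the two pieces therefore give, after averaging over \(\pi\) and applying the per-row claim,
\[
  D=\sum_{i,j}P_{ij}\log\frac{A_{ij}}{P_{ij}}-\log\per(A)+\sum_iT(P_i).
\]
This is the paper's formula, and it rearranges to \(\log\per(A)=\beta_A(P)+\sum_ig(P_i)-D\).

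Your version becomes \(D=\sum_{i,j}P_{ij}\log(P_{ij}/A_{ij})+\log\per(A)-\sum_iT(P_i)\) after the per-row claim. It would instead rearrange to \(\log\per(A)=\beta_A(P)+\sum_ig(P_i)+D\), which is false whenever \(D>0\). The sign of \(D\) is exactly what the slack decomposition uses next. So the ``sign bookkeeping'' you defer as routine is where the write-up currently breaks; negating that one display repairs the proof.
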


\begin{proof}
Fix a row ordering \(\pi\) and a permutation \(\sigma\).  When row \(i\) is
processed, the unused columns are precisely the columns assigned by \(\sigma\)
to rows that appear at or after \(i\) in \(\pi\).  Thus
\[
  \log\nu_\pi(\sigma)
  =\sum_i\log P_{i,\sigma(i)}
   -\sum_i
    \log\left(
      \sum_{\substack{k:\,
        \sigma^{-1}(k)\text{ is at or after }i\text{ in }\pi}}
      P_{ik}
    \right).
\]
On the other hand,
\[
  \E_\mu{\log\mu(\sigma)}
  =\sum_{i,j}P_{ij}\log A_{ij}-\log\per(A).
\]
For fixed \(\sigma\), a uniformly random row ordering induces a uniformly
random ordering of the columns through \(\sigma\).  Averaging the preceding
display over \(\pi\) and then over \(\sigma\sim\mu\) therefore gives
\begin{align*}
  D
  &=
  \sum_{i,j}P_{ij}\log A_{ij}-\log\per(A)
  -\sum_{i,j}P_{ij}\log P_{ij}
  +\sum_iT(P_i).
\end{align*}
Rearranging, and using
\[
  T(P_i)
  =g(P_i)+\sum_j(1-P_{ij})\log(1-P_{ij}),
\]
gives \cref{eq:exact-sequential}.
\end{proof}

Define the slack in the Bethe upper bound and the Bethe suboptimality of \(P\)
by
\begin{equation}\label{eq:slack-definitions}
  \Delta(A)
  =\frac n2\log2+\log\Bethe(A)-\log\per(A),
  \qquad
  E=\log\Bethe(A)-\beta_A(P).
\end{equation}
Thus \(\Delta(A)\) is the gap in the upper half of the Bethe sandwich, while
\(E\) is the loss from evaluating the Bethe objective at \(P\) rather than at
its maximizer.

\begin{lemma}[Exact slack decomposition]\label{lem:slack-decomposition}
We have
\begin{equation}\label{eq:slack-decomposition}
  \Delta(A)=E+\sum_i d(P_i)+D.
\end{equation}
In particular, every term on the right is nonnegative.
\end{lemma}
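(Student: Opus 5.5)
This is pure algebra: I will substitute the exact sequential identity into the definition of \(\Delta(A)\), and then check nonnegativity of each term separately.

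\emph{The identity.} By \cref{lem:exact-sequential}, \(\log\per(A)=\beta_A(P)+\sum_i g(P_i)-D\). Substituting this into the definition \cref{eq:slack-definitions} gives
\[
  \Delta(A)=\frac n2\log2+\log\Bethe(A)-\beta_A(P)-\sum_i g(P_i)+D .
\]
I then make two regroupings. The terms \(\log\Bethe(A)-\beta_A(P)\) are exactly \(E\). The sum \(\frac n2\log 2-\sum_i g(P_i)\) equals \(\sum_i\bigl(\frac12\log2-g(P_i)\bigr)=\sum_i d(P_i)\) by \cref{eq:row-deficit}, because there are exactly \(n\) rows. Together these give \cref{eq:slack-decomposition}.

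\emph{Nonnegativity.} I check the three terms one at a time.
\begin{itemize}
\item \(E\ge0\): \(P\) is a point of \(\cB_n\) (it is even interior, since \(A>0\)), so \(\beta_A(P)\le\max_{\cB_n}\beta_A=\log\Bethe(A)\).
\item \(d(P_i)\ge0\): each row \(P_i\) is a probability vector, so this is the one-row inequality \cref{eq:row-inequality} of Anari and Rezaei.
\item \(D\ge0\): \(D\) is an average of KL divergences between probability measures on \(S_n\), so it is nonnegative by Gibbs' inequality. It is also finite, because \(\nu_\pi\) has full support: at every step all entries of \(P\) are positive.
\end{itemize}

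\emph{Main obstacle.} There is essentially no obstacle. The only step needing care is the finiteness of each term when it is invoked, namely that \(P\) is interior so all logarithms are finite. This holds because \(A\) is positive.
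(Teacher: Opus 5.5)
Your proposal is correct and is essentially the paper's argument: substitute \cref{eq:exact-sequential} into the definition of \(\Delta(A)\), regroup using \(d(P_i)=\frac12\log2-g(P_i)\), and get nonnegativity from Bethe optimality, the one-row inequality \cref{eq:row-inequality}, and Gibbs' inequality. Your extra remark that \(D\) is finite, because \(\nu_\pi\) has full support when \(P>0\), is a harmless detail that the paper leaves implicit.
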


\begin{proof}
Substitute \cref{eq:exact-sequential} into \cref{eq:slack-definitions} and use
\(d(P_i)=\frac12\log2-g(P_i)\).  The nonnegativity of \(E\) follows from the
definition of the Bethe optimum, the nonnegativity of the row deficits is
\cref{eq:row-inequality}, and \(D\ge0\) is Gibbs' inequality.
\end{proof}

\section{Dimension-free stability of the row inequality}
\label{sec:row-stability}

To use the slack decomposition, we need a dimension-free description of rows
with small deficit.  The equality examples in the row inequality are the
half--half vectors.  For
\(m\ge2\), let
\begin{equation}\label{eq:half-half-set}
  \cE_m
  =\set*{\frac12e_a+\frac12e_b\given a,b\in[m],\ a\ne b}.
\end{equation}
Here \(e_a\) is the \(a\)-th standard basis vector, and
\[
  \dist_1(p,\cE_m)
  =\min_{q\in\cE_m}\sum_j\abs{p_j-q_j}.
\]
The next lemma gives a modulus that is independent of the ambient dimension.

\begin{lemma}[Row stability]\label{lem:row-stability}
There is a universal constant \(C_{\mathrm{stab}}<\infty\) such that every
probability vector \(p\in\R_{\ge0}^m\), \(m\ge2\), satisfies
\begin{equation}\label{eq:row-stability}
  \dist_1(p,\cE_m)\le C_{\mathrm{stab}}d(p)^{1/4}.
\end{equation}
\end{lemma}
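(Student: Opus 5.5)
The plan is a compactness argument on a dimension-free compactification of the probability simplex, followed by a local expansion near the equality cases. Sort the coordinates so that \(p_1\ge p_2\ge\cdots\). It is convenient to write \(T\) with i.i.d.\ uniform arrival times \(t_j\in[0,1]\) in place of the random ordering \(\pi\):
\[
  T(p)=\E*{\sum_jp_j\log\Bigl(p_j+\sum_{k\ne j}p_k\mathbf 1[t_k>t_j]\Bigr)}.
\]
Given \(t_j\), the inner sum is a weighted sum of independent Bernoulli variables. Small coordinates therefore behave like a deterministic ``dust'' of total mass \(\delta\), which contributes suffix mass \(\delta(1-t_j)\) by the law of large numbers. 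The correction term behaves like \(-\sum_j(1-p_j)\log(1-p_j)\approx\sum_{\text{small }j}p_j=\delta\) on the dust.

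\emph{Step 1 (extension to the compactification).} Let \(\mathcal K\) be the set of pairs \(((q_k)_{k\ge1},\delta)\) with \(q\) nonincreasing, \(q_k\ge0\), \(\delta\ge0\), and \(\sum_kq_k+\delta=1\). Topologize \(\mathcal K\) coordinatewise in \(q\), with \(\delta\) defined as the missing mass. Define an extended functional \(\bar g\) on \(\mathcal K\) by letting the dust enter through a deterministic suffix term:
\[
  \bar g(q,\delta)
  =\E*{\sum_kq_k\log\Bigl(q_k+\sum_{l\ne k}q_l\mathbf 1[t_l>t_k]+\delta(1-t_k)\Bigr)}
  +\delta\,\E*{\log\Bigl(\delta(1-t)+\sum_lq_l\mathbf 1[t_l>t]\Bigr)}
  -\sum_k(1-q_k)\log(1-q_k)+\delta .
\]
The key claim is upper semicontinuity, in the following form. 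If \(p^{(m)}\) are probability vectors (in any dimensions) converging to \((q,\delta)\) in \(\mathcal K\), then \(\limsup g(p^{(m)})\le\bar g(q,\delta)\).

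I would prove this by concavity of \(\log\) and Jensen's inequality applied to the Bernoulli sum of the small coordinates, conditionally on the arrival times. This bounds the contribution of entries below a threshold \(\eta\) by the dust expression plus an error \(o_\eta(1)\). The correction term is handled by \(-(1-x)\log(1-x)=x+O(x^2)\) together with \(\sum_{p_j<\eta}p_j^2\le\eta\).

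\emph{Step 2 (identification of the maximizers).} Show that \(\bar g\le\frac12\log2\) on \(\mathcal K\), with equality only at \(q=(\frac12,\frac12,0,\ldots)\), \(\delta=0\).

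The inequality \(\bar g\le\frac12\log2\) follows from \cref{eq:row-inequality} by approximating dust with many equal tiny coordinates. Uniqueness of the equality case needs an explicit argument.
\begin{itemize}
  \item When \(\delta>0\), differentiating in the dust direction (moving mass between dust and the top atoms) should show strict suboptimality.
  \item When \(\delta=0\), the strict equality analysis of the finite inequality from \cite{AnariRezaei2024} applies.
\end{itemize}
Steps 1 and 2 together with compactness already give a qualitative statement: \(d(p)\to0\) forces \(\dist_1(p,\cE_m)\to0\), uniformly in \(m\). In particular, away from a small neighbourhood of \(\cE_m\), \(d(p)\) is bounded below by a universal constant, and \cref{eq:row-stability} holds there trivially because \(\dist_1\le2\).

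\emph{Step 3 (local quantitative bound).} Near a half--half vector, write \(p_1=\frac12+x\), \(p_2=\frac12-x-y\), and put \(\varepsilon=\sum_{j\ge3}p_j\); note that \(y=\varepsilon\). I would expand \(d\) to show
\[
  d(p)\ge c\,(x^4+\varepsilon^{4}).
\]
The true behaviour is probably quadratic in \(x\) and linear or \(\varepsilon\log(1/\varepsilon)\) in \(\varepsilon\), so the exponent \(1/4\) leaves room to be crude. Since \(\dist_1(p,\cE_m)\le2|x|+2\varepsilon\), this lower bound gives the lemma with some \(C_{\mathrm{stab}}\).

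For the expansion, I would condition on the arrival times of the two large atoms. I would use the Jensen upper bound from Step 1 again to dominate the small entries by dust of mass \(\varepsilon\). This reduces everything to an explicit three-parameter function \(\bar g(\frac12+x,\frac12-x-\varepsilon;\varepsilon)\). Its Taylor expansion at \((0,0)\) must be checked to have negative definite quadratic part in \(x\) and a negative leading term in \(\varepsilon\).

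\emph{Main obstacle.} I expect the main obstacle to be the uniform treatment of dust: making the Jensen domination in Step 1 rigorous with errors that are uniform in \(m\), and verifying the sign of the leading \(\varepsilon\)-term in Step 3. The latter is delicate because \(\log(1-t)\)-type integrals near \(t=1\) produce \(\varepsilon\log\varepsilon\) contributions. If the direct expansion is messy, I would fall back on a cruder compactness-plus-analyticity argument (Łojasiewicz-style) in the finitely many relevant parameters \((x,\varepsilon)\), which yields some polynomial exponent. I would then verify that \(1/4\) suffices.
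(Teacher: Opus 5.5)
Your compactness framework is logically sound: upper semicontinuity of \(g\) into \(\bar g\) on \(\mathcal K\), plus a unique maximizer of \(\bar g\), does give the qualitative dimension-free statement, and a local bound near the half--half point would then make it quantitative. The gap is that all the content of the lemma sits in Step 2, and you defer that step rather than prove it.

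For \(\delta>0\), differentiating in the dust direction does not obviously yield a contradiction. To first order a tiny new atom is indistinguishable from dust, so condensing dust into a new atom has zero first variation at every point with \(\delta>0\); already at pure dust the variation vanishes. Transfers between dust and an existing atom are two-sided, so they only give stationarity equations, which a maximizer satisfies anyway. You would need a second-order or global argument, and none is indicated.

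For \(\delta=0\), limit points in \(\mathcal K\) can have infinitely many nonzero atoms, so a finite-dimensional equality analysis does not apply. Moreover, the cited result is the inequality \(d\ge0\). An equality characterization would have to be extracted from its proof, and even then it would not pass to such limits without a quantitative version, which is what you are trying to prove. In effect, Step 2 is the qualitative form of the lemma itself.

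The missing ingredient is a dimension-free inequality whose equality cases are visible, and the paper supplies exactly that. Pairing each ordering with its reverse gives \(T(p)=\frac12\sum_ip_i\E_\pi{\log(L_iR_i)}\). Jensen with the explicit moment \(\frac{1-s_2}{6}+\frac23p_i+\frac13p_i^2\) and the tangent line \(\log x\le-\log2+2x-1\) then give the separable bound \(d(p)\ge C-\sum_iF(p_i)\), where \(F(x)=-(1-x)\log(1-x)+x^2/2+x^3/3\) and \(C=\log2+\frac13\). Since \(F(x)/x\) increases on \([0,1/2]\) from \(1\) to \(C\), several things follow at once, with no compactness:
\begin{itemize}
\item the equality cases;
\item arbitrarily fine ``dust,'' which costs \(C-1>0\) per unit mass;
\item explicit moduli: linear when the top coordinate is at most \(1/2\), and \(\frac43(a-\frac12)^4\) when it is \(a>1/2\).
\end{itemize}

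Separately, your Step 3 reduction is not exact as stated. After Jensen, a tail coordinate's own term keeps \(p_jt_j\) inside the logarithm. A single tail atom of mass \(\varepsilon\) already makes \(g(p)\) exceed the corresponding dust value \(\bar g\) by a constant multiple of \(\varepsilon\), so domination by \(\bar g\) fails. The local step survives only because the true leading term is \(\frac13\varepsilon\log\varepsilon\): a tail coordinate follows both large ones with probability \(1/3\), and then its suffix is at most \(\varepsilon\). This term dominates such \(O(\varepsilon)\) errors. You should therefore bound each tail suffix directly by the large mass after it plus \(\varepsilon\), rather than claim domination by the three-parameter dust function.
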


\begin{proof}
The proof has two steps.  First we lower-bound \(d(p)\) by a separable
function of the coordinates and identify its equality cases.  We then turn
this lower bound into a quantitative estimate on the two largest coordinates.

\medskip\noindent\emph{A separable lower bound.}
For a random ordering \(\pi\), put
\[
  L_i=\sum_{j\le_\pi i}p_j,
  \qquad
  R_i=\sum_{j\ge_\pi i}p_j.
\]
Pairing every ordering with its reverse gives
\begin{equation}\label{eq:g-symmetrized}
  g(p)
  =\frac12\sum_i p_i\E_\pi{\log(L_iR_i)}
   -\sum_i(1-p_i)\log(1-p_i).
\end{equation}
Let \(s_k=\sum_ip_i^k\).  Averaging over the relative positions of three
indices gives
\begin{equation}\label{eq:LiRi-moment}
  M_i:=\E_\pi{L_iR_i}
  =\frac{1-s_2}{6}+\frac23p_i+\frac13p_i^2.
\end{equation}
To see this, write
\[
  L_iR_i
  =p_i+\left(\sum_{j<_\pi i}p_j\right)
          \left(\sum_{k>_\pi i}p_k\right).
\]
For distinct \(j,k\ne i\), the probability that \(j<_\pi i<_\pi k\) is
\(1/6\).  Summing the corresponding cross terms proves
\cref{eq:LiRi-moment}.

Jensen's inequality and the tangent inequality
\[
  \log x\le-\log2+2x-1
\]
give
\begin{align*}
  g(p)
  &\le
  -\frac12\log2
  +\sum_i p_iM_i-\frac12
  -\sum_i(1-p_i)\log(1-p_i)\\
  &=
  -\frac12\log2-\frac13
  +\frac12s_2+\frac13s_3
  -\sum_i(1-p_i)\log(1-p_i).
\end{align*}
Consequently
\begin{equation}\label{eq:separable-defect}
  d(p)\ge C-\sum_iF(p_i),
\end{equation}
where
\begin{equation}\label{eq:F-definition}
  F(x)=-(1-x)\log(1-x)+\frac{x^2}{2}+\frac{x^3}{3},
  \qquad
  C=\log2+\frac13.
\end{equation}

\medskip\noindent\emph{The equality case.}
Put \(h(x)=F(x)/x\) for \(x>0\), with \(h(0)=1\).  Direct differentiation
gives
\[
  F''(x)=\frac{x(1-2x)}{1-x},
  \qquad
  h'(x)=\frac1{x^2}\int_0^x tF''(t)\,dt.
\]
Thus \(h\) is strictly increasing on \([0,1/2]\), and a direct substitution
gives \(h(1/2)=C\).

Suppose first that every coordinate of \(p\) is at most \(1/2\).  Then
\[
  \sum_iF(p_i)=\sum_ip_ih(p_i)\le C.
\]
Equality is possible only if all positive coordinates equal \(1/2\), hence
only when \(p\in\cE_m\).

Now suppose that the largest coordinate is \(a>1/2\), and put \(q=1-a\).
Every remaining coordinate is at most \(q\le1/2\), so monotonicity of \(h\)
gives
\[
  \sum_{i:p_i\ne a}F(p_i)\le F(q).
\]
For
\[
  J_{\mathrm{row}}(a)=C-F(a)-F(1-a)
\]
and \(u=2a-1\), we have
\begin{equation}\label{eq:J-derivative}
  J_{\mathrm{row}}'(a)
  =\log\frac{a}{1-a}-4a+2
  =2\left(\operatorname{artanh}u-u\right)\ge0.
\end{equation}
Here \(\operatorname{artanh}\) denotes the inverse hyperbolic tangent.
Moreover,
\begin{equation}\label{eq:quartic-defect}
  J_{\mathrm{row}}(a)
  =\int_0^{2a-1}\left(\operatorname{artanh}t-t\right)\,dt
  \ge\frac{(2a-1)^4}{12}
  =\frac43\left(a-\frac12\right)^4.
\end{equation}
Thus the separable defect is nonnegative, vanishes only at a half--half
vector, and grows quartically when one coordinate exceeds \(1/2\).

\medskip\noindent\emph{Quantitative stability.}
Let \(a\ge b\) be the two largest
coordinates of \(p\), and put \(t=1-a-b\).  Define
\[
  c_1=\min_{x\in[1/4,1/2]}h'(x)>0,
  \qquad
  c_2=\min_{x\in[1/4,1/2]}\left(h(x)-h(x/2)\right)>0,
\]
and
\begin{equation}\label{eq:delta-star}
  \delta_\star
  =\min\set*{
    \frac1{192},
    \frac12\left(C-h(1/4)\right)
  }>0.
\end{equation}
All three constants are absolute.

Assume first that \(a\le1/2\) and \(d(p)\le\delta_\star\).  Since the
separable defect in \cref{eq:separable-defect} is at most \(d(p)\), we must
have \(a,b\ge1/4\): if \(a<1/4\), then the defect is at least
\(C-h(1/4)\); if \(b<1/4\), the mass outside the largest coordinate is at
least \(1/2\), and its contribution to the defect is at least
\(\frac12(C-h(1/4))\).  Monotonicity on \([1/4,1/2]\) now gives
\[
  \frac12-a\le\frac{4d(p)}{c_1},
  \qquad
  \frac12-b\le\frac{4d(p)}{c_1}.
\]
Indeed, the coordinates \(a\) and \(b\) contribute at least
\(a c_1(1/2-a)\) and \(b c_1(1/2-b)\), respectively, to the separable
defect, and both weights are at least \(1/4\).  Since
\(t=(1/2-a)+(1/2-b)\),
\begin{equation}\label{eq:stability-below-half}
  \dist_1(p,\cE_m)=2t\le\frac{16d(p)}{c_1}.
\end{equation}

Suppose next that \(a>1/2\) and \(d(p)\le\delta_\star\).  By
\cref{eq:quartic-defect},
\[
  a-\frac12\le\left(\frac{3d(p)}4\right)^{1/4}\le\frac14,
\]
so \(q=1-a\ge1/4\).  Write the coordinates other than \(a\) as
\(x_1,\ldots,x_r\), with \(b=\max_\ell x_\ell\).  The remaining separable
defect is
\begin{equation}\label{eq:tail-separable-defect}
  F(q)-\sum_\ell F(x_\ell)
  =\sum_\ell x_\ell\left(h(q)-h(x_\ell)\right).
\end{equation}
If \(t=q-b\le q/2\), all coordinates other than \(b\) are at most \(q/2\),
and their total mass is \(t\).  If \(t>q/2\), then every tail coordinate is at
most \(q/2\), and their total mass \(q\) is at least \(t\).  In both cases
\cref{eq:tail-separable-defect} is at least \(c_2t\).  Therefore
\begin{equation}\label{eq:stability-above-half}
  \dist_1(p,\cE_m)
  =2\left(a-\frac12\right)+2t
  \le
  2\left(\frac{3d(p)}4\right)^{1/4}
  +\frac{2d(p)}{c_2}.
\end{equation}

For \(d(p)>\delta_\star\), use the bound
\(\dist_1(p,\cE_m)\le2\).  Thus \cref{eq:row-stability} holds with, for
example,
\[
  C_{\mathrm{stab}}=\max\set*{
    \frac{16}{c_1},
    2\left(\frac34\right)^{1/4}+\frac2{c_2},
    2\delta_\star^{-1/4}
  }.
\]
\end{proof}

\section{Robust four-cycle structure}\label{sec:cycles}

Row stability makes most rows close to half--half.  We encode their two large
coordinates as heavy edges.  The column constraints give a graph of maximum
degree two, and the sequential divergence controls the rows lying on long
cycles.

The key observation is entropic.  Once we know which rows leave the heavy-edge
graph, each cycle carries at most one bit of matching ambiguity.  A good row,
on the other hand, contributes almost half a bit to the entropy calculation
below.
Cycles with at least three rows therefore create a definite surplus, which
must be paid for by the divergence \(D\).  Since our logarithms are natural,
one bit here means \(\log2\) units of entropy.

Fix \(0<\eta\le1/10\).  Call row \(i\) \emph{good} if
\[
  \dist_1(P_i,\cE_n)\le\eta,
\]
and let \(b\) be the number of bad rows.  By
\cref{lem:slack-decomposition,lem:row-stability}, every bad row has deficit at
least
\[
  d_0=\left(\frac{\eta}{C_{\mathrm{stab}}}\right)^4,
\]
and therefore
\begin{equation}\label{eq:bad-row-count}
  b\le\frac{\Delta(A)}{d_0}.
\end{equation}
For every good row, choose two coordinates witnessing its distance from
\(\cE_n\).  If their masses are \(u,v\) and the remaining mass is \(q\), then
\begin{equation}\label{eq:good-row-coordinates}
  \abs{u-\frac12}+\abs{v-\frac12}+q\le\eta.
\end{equation}
In particular, both distinguished entries are at least \(1/2-\eta\), and the
mass outside them is at most \(\eta\).

\subsection{The heavy-edge graph}

Form the bipartite graph
\begin{equation}\label{eq:heavy-graph}
  F_\eta
  =\set*{(i,j)\in[n]\times[n]\given P_{ij}\ge\frac12-\eta}.
\end{equation}
Its two vertex sets are the rows and columns of \(A\).
Every row and column has degree at most two, since
\(3(1/2-\eta)>1\), and every good row has degree exactly two.

We complete \(F_\eta\) to a spanning bipartite two-regular multigraph \(K\).
This can be done without deleting an edge: attach \(2-\deg(v)\) stubs to every
row and column vertex, and pair the row stubs arbitrarily with the column
stubs.  The two sides have the same number of stubs because they have the same
number of vertices and the same total degree in \(F_\eta\).  Parallel edges
are allowed.  Every component of \(K\) is therefore an even cycle, where a
pair of parallel edges is allowed as a cycle containing one row.  Such a
doubled edge contains no good row, since a good row already has two distinct
neighbors in \(F_\eta\).

Let \(N_K(i)\) denote the set of column neighbors of row \(i\) in \(K\), with
parallel edges ignored.  Let \(N\) be the number of good rows in components of
\(K\) containing at least three rows.

\subsection{An entropy encoding}

View a sampled permutation \(\sigma\sim\mu\) as a perfect matching between
the row and column vertices.  Define
\[
  Y_i=
  \begin{cases}
    \star,&\sigma(i)\in N_K(i),\\
    \sigma(i),&\sigma(i)\notin N_K(i).
  \end{cases}
\]
Thus \(Y_i\) records the exact column used by row \(i\) when its matching edge
is not in \(K\), and otherwise only records that it uses one of its neighbors
in \(K\).
Let \(m(K)\) be the number of components of \(K\) containing at least two
rows.  The next lemma formalizes the fact that each surviving cycle costs at
most one bit.

\begin{lemma}[Core encoding]\label{lem:core-encoding}
We have
\begin{equation}\label{eq:core-encoding}
  H(\mu)\le\sum_iH(Y_i)+m(K)\log2.
\end{equation}
\end{lemma}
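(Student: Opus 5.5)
The plan is to encode \(\sigma\) by the vector \(Y=(Y_1,\ldots,Y_n)\) together with a small amount of extra information, one bit per cycle of \(K\) containing at least two rows. Since \(\sigma\) is a deterministic function of \(\sigma\) itself, \(H(\mu)=H(\sigma)\). The chain rule gives
\[
  H(\sigma)\le H(\sigma,Y)=H(Y)+H(\sigma\mid Y),
\]
and subadditivity gives \(H(Y)\le\sum_iH(Y_i)\). So it suffices to show \(H(\sigma\mid Y)\le m(K)\log2\). For that it is enough to show that, for every fixed value \(y\) of \(Y\), the set of permutations \(\sigma\) consistent with \(y\) has size at most \(2^{m(K)}\).

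Fix \(y\), and let \(I=\set{i\given y_i=\star}\).

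The rows outside \(I\) have their columns determined exactly. The used columns \(U=\set{y_i\given i\notin I}\) are therefore fixed.

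The rows in \(I\) must be matched bijectively to the complementary column set \([n]\setminus U\), each row \(i\in I\) using an edge of \(K\) (a column in \(N_K(i)\)). So the consistent permutations correspond injectively to perfect matchings of the subgraph \(K'\) of \(K\) induced on \(I\) together with the columns \([n]\setminus U\). This subgraph is a subgraph of \(K\) with parallel edges identified.

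Because \(K\) is a disjoint union of even cycles, \(K'\) is a disjoint union of the pieces of those cycles. Its perfect matchings factor as a product over the components of \(K\). The count for each component is as follows.
\begin{itemize}
  \item If all vertices of a component survive in \(K'\), the component is an even cycle. When it has at least two rows, it has exactly two perfect matchings.
  \item A component with one row (a doubled edge) has at most one matching, since parallel edges are ignored via \(N_K(i)\).
  \item If some vertex of a component is deleted, what remains is a union of paths. A path has at most one perfect matching, since a leaf is forced.
\end{itemize}
Hence the number of consistent \(\sigma\) is at most \(2^{m(K)}\), and \(H(\sigma\mid Y=y)\le m(K)\log 2\) for every \(y\). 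Averaging over \(y\) gives the bound on \(H(\sigma\mid Y)\), which completes the argument.

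The main point requiring care is bookkeeping. We must check that consistency with \(y\) really restricts each \(\star\)-row to a \(K\)-edge into the leftover columns, with the leftover column set equal to those not named by non-\(\star\) rows. We must also handle multigraph cycles correctly.
\begin{itemize}
  \item Stub edges of \(K\) may join row \(i\) to a column with \(P_{ij}\) small. This is harmless, since the counting is purely combinatorial.
  \item A doubled edge contributes a factor of \(1\), not \(2\). This is why \(m(K)\) counts only components with at least two rows.
\end{itemize}
No property of \(\mu\) beyond being a distribution on permutations is needed.
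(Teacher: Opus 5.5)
Your proposal is correct and follows essentially the same route as the paper. You condition on $Y$, observe that the permutations consistent with $y$ are perfect matchings of $K$ with the escaped rows and their columns deleted, and bound the count by two per intact cycle with at least two rows and at most one per path or doubled edge, concluding via $H(\sigma)\le H(Y)+H(\sigma\mid Y)$ and subadditivity; your extra bookkeeping (parallel edges identified through $N_K(i)$, stub edges being harmless) is accurate and just makes explicit what the paper leaves implicit.
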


\begin{proof}
Fix a value \(y\) of the vector \(Y=(Y_1,\ldots,Y_n)\).  It determines the
set of escaped rows and the columns assigned to them.  Delete these rows and
columns from \(K\).  Every remaining assignment consistent with \(y\) is a
perfect matching of the resulting subgraph.

Within an original component of \(K\), deleting vertices either leaves the
whole cycle intact or breaks it into paths.  An intact cycle containing at
least two rows has two perfect matchings.  Every path that has a perfect
matching has a unique one, obtained greedily from an endpoint.  A doubled
edge represents only one row-column assignment.  Consequently there are at
most \(2^{m(K)}\) permutations consistent with \(y\), and
\[
  H(\sigma\mid Y)\le m(K)\log2.
\]
Now use
\[
  H(\mu)=H(\sigma)\le H(Y)+H(\sigma\mid Y)
  \le\sum_iH(Y_i)+m(K)\log2.
\]
\end{proof}

The KL term from \cref{eq:averaged-KL} has a useful entropy form.  For a
probability vector \(p\), put
\begin{equation}\label{eq:row-score}
  s(p)=H(p)+T(p).
\end{equation}

\begin{lemma}[Entropy form of the sequential divergence]
\label{lem:entropy-KL}
We have
\begin{equation}\label{eq:entropy-KL}
  D=-H(\mu)+\sum_i s(P_i).
\end{equation}
\end{lemma}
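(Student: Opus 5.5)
The plan is to reuse the computation already carried out in the proof of \cref{lem:exact-sequential}, keeping the entropy of \(\mu\) as a single term instead of expanding it through \(A\). By definition,
\[
  D=\E_\pi{\KL(\mu\Vert\nu_\pi)}
   =\E_{\sigma\sim\mu}{\log\mu(\sigma)}-\E_\pi{\E_{\sigma\sim\mu}{\log\nu_\pi(\sigma)}}
   =-H(\mu)-\E_\pi{\E_{\sigma\sim\mu}{\log\nu_\pi(\sigma)}}.
\]
So the whole claim reduces to the identity
\[
  -\E_\pi{\E_{\sigma\sim\mu}{\log\nu_\pi(\sigma)}}=\sum_i s(P_i)=\sum_i\bigl(H(P_i)+T(P_i)\bigr).
\]

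For a fixed ordering \(\pi\) and permutation \(\sigma\), I will use the explicit formula for \(\log\nu_\pi(\sigma)\) from the proof of \cref{lem:exact-sequential}. It says that \(\log\nu_\pi(\sigma)\) equals \(\sum_i\log P_{i,\sigma(i)}\) minus, for each row \(i\), the logarithm of the \(P_i\)-mass of the columns assigned by \(\sigma\) to rows at or after \(i\) in \(\pi\).

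The first term is handled by averaging over \(\sigma\sim\mu\). Since \(P_{ij}=\P_{\sigma\sim\mu}{\sigma(i)=j}\), this gives \(\sum_{i,j}P_{ij}\log P_{ij}=-\sum_iH(P_i)\), and it does not depend on \(\pi\).

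For the second term, fix \(\sigma\) and average over \(\pi\) first. A uniformly random row ordering induces, through \(\sigma\), a uniformly random ordering \(\pi'\) of the columns. The columns used at or after row \(i\) are then exactly the columns \(k\ge_{\pi'}\sigma(i)\). The term for row \(i\) therefore becomes \(\E_{\pi'}{\log\sum_{k\ge_{\pi'}\sigma(i)}P_{ik}}\).

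Now average over \(\sigma\). This is the only delicate point. The expectation above depends on \(\sigma\) only through \(j=\sigma(i)\), because \(\pi'\) is uniform independently of \(\sigma\). Hence the average over \(\sigma\) gives
\[
  \sum_jP_{ij}\E_{\pi'}{\log\sum_{k\ge_{\pi'}j}P_{ik}}=T(P_i),
\]
using the definition \cref{eq:row-T}.

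Combining the two terms yields \(-\E\log\nu_\pi=\sum_iH(P_i)+\sum_iT(P_i)\), which is \cref{eq:entropy-KL}.

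As a consistency check, the result agrees with the displayed expression for \(D\) in the proof of \cref{lem:exact-sequential}, after substituting \(-H(\mu)=\sum_{i,j}P_{ij}\log A_{ij}-\log\per(A)\). I do not expect a real obstacle. The only care needed is the justification for exchanging the order of averaging over \(\pi\) and \(\sigma\), which is just the independence of \(\pi\) from \(\sigma\).
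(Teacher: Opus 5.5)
Your proof is correct and is essentially the paper's argument. The paper substitutes the statement of the exact sequential identity together with \(H(\mu)=\log\per(A)-\sum_{i,j}P_{ij}\log A_{ij}\), then recombines \(g(P_i)\) with the terms \(\sum_j(1-P_{ij})\log(1-P_{ij})\) of the Bethe objective to obtain \(T(P_i)\); you instead rerun the cross-entropy computation from that lemma's proof directly, which avoids passing through \(A\) and \(\per(A)\) and rests on the same averaging step over the induced uniform column ordering.
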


\begin{proof}
The Gibbs law gives
\[
  H(\mu)=\log\per(A)-\sum_{i,j}P_{ij}\log A_{ij}.
\]
Substitute \cref{eq:exact-sequential} and expand the Bethe objective.  For a
row \(p\), the terms that remain are
\[
  -\sum_jp_j\log p_j
  +\sum_j(1-p_j)\log(1-p_j)+g(p)
  =H(p)+T(p).
\]
Summing over the rows proves the identity.
\end{proof}

\subsection{The score of a good row}

A good row contributes almost \((\log2)/2\) beyond the information already
contained in \(Y_i\).  The following Riemann-sum identity makes the dependence
on the leakage explicit.

\begin{lemma}[Suffix Riemann-sum identity]\label{lem:suffix-identity}
Fix an ordering of a probability vector \(p\), write its coordinates in that
order as \(p_1,\ldots,p_m\), and put
\[
  z_r=\sum_{k=r}^m p_k,
  \qquad z_{m+1}=0.
\]
For
\begin{equation}\label{eq:e-definition}
  e(a,x)=a-x\log(1+a/x),
  \qquad e(a,0)=a,
\end{equation}
we have
\begin{equation}\label{eq:suffix-identity}
  \sum_{r=1}^mp_r\log z_r
  =-1+\sum_{r=1}^m e(p_r,z_{r+1}).
\end{equation}
Moreover, \(e(a,x)\ge0\), and for fixed \(a\ge0\), the function is
nonincreasing in \(x\ge0\).
\end{lemma}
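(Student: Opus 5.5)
The identity follows from an Abel-summation rewriting of \(\sum_r p_r\log z_r\). Since \(p_r=z_r-z_{r+1}\), we expand \(p_r\log z_r\) and compare it with the integral of \(\log\) over \([z_{r+1},z_r]\). The elementary computation is
\[
  \int_{z_{r+1}}^{z_r}\log t\,dt
  =z_r\log z_r-z_{r+1}\log z_{r+1}-p_r .
\]
Hence
\[
  p_r\log z_r-\int_{z_{r+1}}^{z_r}\log t\,dt
  =p_r+z_{r+1}\log z_{r+1}-z_{r+1}\log z_r
  =p_r-z_{r+1}\log\frac{z_{r+1}+p_r}{z_{r+1}},
\]
which is exactly \(e(p_r,z_{r+1})\), because \(z_r=z_{r+1}+p_r\). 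When \(z_{r+1}=0\) the correction term vanishes by the convention \(0\log0=0\), matching \(e(a,0)=a\).

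Summing over \(r\), the integrals telescope to \(\int_0^{z_1}\log t\,dt=\int_0^1\log t\,dt=-1\), since \(z_1=1\) and \(z_{m+1}=0\). This gives \cref{eq:suffix-identity}. Coordinates with \(p_r=0\) contribute \(0\) on both sides, because \(e(0,x)=0\) and the integral is over a degenerate interval. So the identity holds with the stated conventions and requires no positivity assumptions. The integral \(\int_0^{z_{m}}\log t\,dt\) is improper but finite, so the telescoping is legitimate.

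For the properties of \(e\), note first that \(\log(1+y)\le y\) gives \(x\log(1+a/x)\le a\), so \(e(a,x)\ge0\). Alternatively, \(e\) is the difference between the right-endpoint Riemann value \(p_r\log z_r\) and the integral of the increasing function \(\log\) over \([z_{r+1},z_r]\), which is clearly nonnegative. For monotonicity, write \(\phi(x)=x\log(1+a/x)\) and differentiate:
\[
  \phi'(x)=\log(1+a/x)-\frac{a}{x+a}=\log\frac{x+a}{x}-\Bigl(1-\frac{x}{x+a}\Bigr)\ge0,
\]
using \(\log w\ge1-1/w\) with \(w=(x+a)/x\). Thus \(\phi\) is nondecreasing and \(e(a,\cdot)=a-\phi\) is nonincreasing on \((0,\infty)\). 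Continuity at \(x=0\), where \(\phi(x)\to0\), extends this to \(x\ge0\).

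None of these steps is a real obstacle. The only point needing care is the handling of the boundary conventions: \(z_{r+1}=0\), zero coordinates, and the improper integral near \(0\).
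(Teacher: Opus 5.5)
Your proof is correct and follows the paper's argument: you compare each term $p_r\log z_r$ with $\int_{z_{r+1}}^{z_r}\log x\,dx$, telescope the integrals to $\int_0^1\log x\,dx=-1$, and derive the two properties of $e$ from $\log(1+t)\le t$ and $\log(1+t)\ge t/(1+t)$. Your extra handling of the boundary cases ($z_{r+1}=0$, zero coordinates, and continuity of $e(a,\cdot)$ at $0$) fills in details that the paper leaves implicit.
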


\begin{proof}
On the interval \([z_{r+1},z_r]\),
\begin{align*}
  p_r\log z_r-\int_{z_{r+1}}^{z_r}\log x\,dx
  &=p_r-z_{r+1}\log\left(1+\frac{p_r}{z_{r+1}}\right)\\
  &=e(p_r,z_{r+1}).
\end{align*}
Summing over \(r\), and using \(\int_0^1\log x\,dx=-1\), proves
\cref{eq:suffix-identity}.  The inequality
\(\log(1+t)\le t\) gives \(e(a,x)\ge0\).  Finally,
\[
  \frac{\partial}{\partial x}e(a,x)
  =-\log(1+a/x)+\frac{a}{a+x}\le0,
\]
where the last inequality follows from
\(\log(1+t)\ge t/(1+t)\).
\end{proof}

Consider one good row.  Call its two distinguished masses \(u,v\), call the
remaining mass \(q=1-u-v\), and refer to the two distinguished coordinates
as its core coordinates.
The random variable \(Y_i\) merges the two core outcomes and leaves every
outside outcome distinct.  Hence
\begin{equation}\label{eq:entropy-coarsening-exact}
  H(P_i)-H(Y_i)
  =(1-q)\Htwo\left(\frac{u}{1-q}\right).
\end{equation}
In \cref{eq:suffix-identity}, keep only the two nonnegative summands
corresponding to the core coordinates.  With probability \(1/2\), the other
core coordinate precedes \(u\); then the suffix mass after \(u\) is at most
\(q\).  With probability \(1/2\), it follows \(u\); then that suffix mass is
at most \(v+q\).  Monotonicity of \(e\) gives the analogous two estimates for
\(v\).  Averaging over the ordering yields
\begin{align}
  s(P_i)-H(Y_i)
  &\ge\Psi(u,v,q),\label{eq:Psi-bound}\\
  \Psi(u,v,q)
  &=(1-q)\Htwo\left(\frac{u}{1-q}\right)-1\notag\\
  &\quad+\frac12\left[
    e(u,q)+e(u,v+q)+e(v,q)+e(v,u+q)
  \right].\notag
\end{align}

At \((u,v,q)=(1/2,1/2,0)\), the right side equals
\((\log2)/2\).  Continuity makes this estimate uniform over all good rows.

\begin{lemma}[Good-row score]\label{lem:good-row-score}
There is a nondecreasing function \(\omega:[0,1/10]\to\R_{\ge0}\), with
\(\omega(\eta)\to0\) as \(\eta\downarrow0\), such that every good row
satisfies
\begin{equation}\label{eq:good-row-score}
  s(P_i)-H(Y_i)
  \ge\frac12\log2-\omega(\eta).
\end{equation}
\end{lemma}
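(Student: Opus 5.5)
The plan is to obtain \cref{lem:good-row-score} from the pointwise bound \cref{eq:Psi-bound} by a compactness and continuity argument.  By \cref{eq:Psi-bound}, every good row satisfies \(s(P_i)-H(Y_i)\ge\Psi(u,v,q)\), where \((u,v,q)\) are its core masses and leakage.  By \cref{eq:good-row-coordinates}, this triple lies in the compact set
\[
  K_\eta=\set*{(u,v,q)\in\R_{\ge0}^3\given u+v+q=1,\ \abs{u-\tfrac12}+\abs{v-\tfrac12}+q\le\eta}.
\]
I will define
\[
  \omega(\eta)=\max\set*{0,\ \tfrac12\log2-\min_{K_\eta}\Psi}.
\]
Since \(K_\eta\) grows with \(\eta\), this function is nondecreasing, nonnegative, and by construction gives \cref{eq:good-row-score}.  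What remains is to show that \(\Psi\) is continuous on \(K_{1/10}\) and that \(\Psi(1/2,1/2,0)=\frac12\log2\).

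\emph{Continuity.}  On \(K_{1/10}\) we have \(u,v\ge 2/5\), so \(1-q\ge4/5\).  Hence \((1-q)\Htwo(u/(1-q))\) is continuous, because its argument stays in a compact subinterval of \((0,1)\).  The terms \(e(u,v+q)\) and \(e(v,u+q)\) have second argument at least \(2/5\), so they are smooth there.  The only delicate terms are \(e(u,q)\) and \(e(v,q)\) as \(q\downarrow0\), where the convention \(e(a,0)=a\) is used.  For \(a\) bounded away from \(0\), the expression \(x\log(1+a/x)=x\log(x+a)-x\log x\) tends to \(0\) as \(x\downarrow0\), uniformly for \(a\) in a compact subset of \((0,1]\).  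So \(e\) is jointly continuous on \([2/5,1]\times[0,1]\), and \(\Psi\) is continuous on the compact set \(K_{1/10}\).  Uniform continuity then gives \(\min_{K_\eta}\Psi\to\Psi(1/2,1/2,0)\) as \(\eta\downarrow0\), because \(K_\eta\) shrinks to the single point \((1/2,1/2,0)\).  Therefore \(\omega(\eta)\to0\).

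\emph{The value at the center.}  I will check \(\Psi(1/2,1/2,0)\) directly:
\[
  \Psi(\tfrac12,\tfrac12,0)=\log2-1+\tfrac12\left[\tfrac12+e(\tfrac12,\tfrac12)+\tfrac12+e(\tfrac12,\tfrac12)\right].
\]
Here \(e(\tfrac12,\tfrac12)=\tfrac12-\tfrac12\log2\), so the bracket equals \(2-\log2\).  The total is
\[
  \log2-1+1-\tfrac12\log2=\tfrac12\log2,
\]
as claimed before the lemma.

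The main obstacle is the boundary continuity of \(e(\cdot,q)\) at \(q=0\).  This is where the restriction \(\eta\le1/10\) matters: it keeps \(u\) and \(v\) away from zero and \(1-q\) away from zero, which makes the convergence uniform.
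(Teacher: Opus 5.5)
Your proposal is correct and takes essentially the same route as the paper. The paper defines \(\omega(\eta)\) as the maximum of \((\frac12\log2-\Psi)_+\) over the same compact set \(\cD_\eta\), and concludes from the continuity of \(\Psi\) and the nesting of these sets as they shrink to \((1/2,1/2,0)\). Your explicit checks that \(\Psi(1/2,1/2,0)=\frac12\log2\) and that \(e\) is continuous at \(q=0\) fill in details the paper only asserts.
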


\begin{proof}
The function \(e\) is continuous on \(\R_{\ge0}^2\) under the convention
\(e(a,0)=a\).  Hence \(\Psi\) is continuous in a neighborhood of
\((1/2,1/2,0)\), where
\[
  \Psi(1/2,1/2,0)=\frac12\log2.
\]
For \(0\le\eta\le1/10\), let \(\cD_\eta\) be the compact set of triples
\((u,v,q)\in\R_{\ge0}^3\) satisfying
\[
  u+v+q=1,
  \qquad
  \abs{u-\frac12}+\abs{v-\frac12}+q\le\eta,
\]
and define
\[
  \omega(\eta)
  =\max_{(u,v,q)\in\cD_\eta}
   \left(\frac12\log2-\Psi(u,v,q)\right)_+.
\]
Here \((x)_+=\max\set{0,x}\).
These sets are nested and shrink to \(\set{(1/2,1/2,0)}\), so continuity
implies that \(\omega\) is nondecreasing and tends to zero at the origin.
The claim follows from \cref{eq:good-row-coordinates,eq:Psi-bound}.
\end{proof}

For a bad row, the coarser estimate suffices.  By
\cref{lem:suffix-identity}, \(T(p)\ge-1\) for every probability vector \(p\).
Since \(Y_i\) is a coarsening of an outcome sampled from \(P_i\),
\(H(P_i)\ge H(Y_i)\).  Therefore
\begin{equation}\label{eq:bad-row-score}
  s(P_i)-H(Y_i)\ge-1
  \qquad\text{for every bad row }i.
\end{equation}

\subsection{Excluding long components}

We can now compare the half-bit contribution of each good row with the
one-bit ambiguity of each cycle.

\begin{lemma}[Robust cycle information]\label{lem:robust-cycle}
For the fixed threshold \(\eta\), let \(b\) be the number of bad rows and let
\(N\) be the number of good rows in components of \(K\) containing at least
three rows.  Then
\begin{equation}\label{eq:robust-cycle}
  D
  \ge
  \frac{\log2}{6}N
  -\left(1+\frac{\log2}{2}\right)b
  -n\omega(\eta).
\end{equation}
\end{lemma}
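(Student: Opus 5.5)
We combine \cref{lem:entropy-KL} with \cref{lem:core-encoding}:
\[
  D=-H(\mu)+\sum_is(P_i)\ge\sum_i\bigl(s(P_i)-H(Y_i)\bigr)-m(K)\log2 .
\]
Then we bound the row sum from below using \cref{lem:good-row-score} for good rows and \cref{eq:bad-row-score} for bad rows. This gives
\[
  D\ge\Bigl(\frac12\log2-\omega(\eta)\Bigr)(n-b)-b-m(K)\log2
  \ge\frac{\log2}{2}(n-b)-b-n\omega(\eta)-m(K)\log2 .
\]
The whole proof therefore reduces to a purely combinatorial upper bound on \(m(K)\). We will show
\[
  m(K)\le\frac{n-b}{2}-\frac N6+\frac b2 .
\]
Substituting this yields exactly \(\frac{\log2}{6}N-\bigl(1+\frac{\log2}{2}\bigr)b-n\omega(\eta)\).

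To bound \(m(K)\), we count components of \(K\) with at least two rows. Each such component with \(k\) rows gets a weight. The target inequality is equivalent to
\[
  \sum_{\text{comp. with }\ge2\text{ rows}}1\le\frac n2-\frac N6 .
\]
Write \(n=\sum_{\text{comp.}}k_C\) over all components, including one-row doubled edges, which contribute nonnegatively to \(n/2\). Write \(N=\sum_{C:\,k_C\ge3}g_C\), where \(g_C\le k_C\) is the number of good rows in \(C\). It then suffices to check, component by component, that
\[
  1\le\frac{k_C}{2}\quad\text{if }k_C=2,
  \qquad
  1\le\frac{k_C}{2}-\frac{g_C}{6}\quad\text{if }k_C\ge3 .
\]
The first is immediate. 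For the second, \(\frac{k_C}{2}-\frac{g_C}{6}\ge\frac{k_C}{3}\ge1\) since \(g_C\le k_C\) and \(k_C\ge3\). Components with one row cost nothing in \(m(K)\) and contribute \(k_C/2\ge0\) to the right side.

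Summing gives \(m(K)\le\frac n2-\frac N6\). This is in fact slightly stronger than what is needed, since the extra \(\pm b/2\) terms were only slack. Hence
\[
  D\ge\frac{\log2}{2}(n-b)-b-n\omega(\eta)-\frac{\log2}{2}n+\frac{\log2}{6}N .
\]
This equals the claimed bound \cref{eq:robust-cycle}.

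The only point requiring care is the bookkeeping. We replace \(\bigl(\frac12\log2-\omega\bigr)(n-b)\) by \(\frac{\log2}{2}(n-b)-n\omega\), which is valid because \(\omega\ge0\). We also need that bad rows lose at most \(1+\frac{\log2}{2}\) relative to the \(\frac{\log2}{2}\) baseline. Both follow directly from the cited estimates. The main conceptual step is the per-component inequality \(1\le k_C/3\) for components with at least three rows, which is where long cycles produce the surplus.
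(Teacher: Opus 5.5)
Your proof is correct and follows the paper's argument. The paper likewise combines \cref{lem:entropy-KL}, \cref{lem:core-encoding}, and the good-row and bad-row scores, then checks component by component that \(\frac{\abs{G}}{2}-m(K)\ge\frac N6-\frac b2\); with \(\abs{G}=n-b\) this is exactly your bound \(m(K)\le\frac n2-\frac N6\), so your remark that this bound is ``slightly stronger'' than needed is inaccurate, because it is identical to what you need (\(\frac{n-b}{2}+\frac b2=\frac n2\)).
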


\begin{proof}
Let \(G\) be the set of good rows.  Combining
\cref{eq:entropy-KL,eq:core-encoding,eq:good-row-score,eq:bad-row-score}
gives
\begin{equation}\label{eq:D-component-start}
  D
  \ge
  \abs{G}\left(\frac12\log2-\omega(\eta)\right)
  -b-m(K)\log2.
\end{equation}
For a component \(C\) of \(K\), let \(k_C,g_C,b_C\) be its number of rows,
good rows, and bad rows.  A one-row doubled component has \(g_C=0\).  If
\(k_C=2\), then
\[
  \frac{g_C}{2}-1=-\frac{b_C}{2}.
\]
If \(k_C\ge3\), then
\[
  \frac{g_C}{2}-1
  \ge\frac{g_C}{6}-\frac{b_C}{3},
\]
  because the difference between the two sides is \((k_C-3)/3\).  Summing over
the components, including the one-row components, gives
\begin{equation}\label{eq:component-accounting}
  \frac{\abs{G}}2-m(K)
  \ge\frac N6-\frac b2.
\end{equation}
Substitute \cref{eq:component-accounting} into
\cref{eq:D-component-start}, and use \(\abs{G}\le n\).
\end{proof}

A component containing exactly two rows is a \(K_{2,2}\).  If such a component
contains a bad row, it contains at most one good row, so contaminated
two-row components account for at most \(b\) good rows.  The number of rows in
two-row components whose two rows are good is therefore at least
\begin{equation}\label{eq:clean-cycle-count}
  n-2b-N.
\end{equation}
Equivalently, \(K\) contains at least \((n-2b-N)/2\) vertex-disjoint
\emph{clean pairs}: pairs of good rows forming a \(K_{2,2}\) component.

\section{Transfer to a computable Bethe point}\label{sec:transfer}

The structural conclusions of \cref{sec:cycles} concern the true marginal
matrix \(P\), which is not available to the algorithm.  We transfer these
conclusions to a point obtained by concave optimization.

The entropy regularizer serves two purposes: it places the optimizer in the
interior of the Birkhoff polytope, and its first-order optimality equations
expose coordinates in which the paired gain factorizes.  The main identity in
this section then
measures how far those coordinates can move on the heavy edges of \(P\).

Fix a rational constant \(\xi>0\).  For \(n\ge2\), let
\begin{equation}\label{eq:tau-definition}
  \ell_n=\lceil\log_2n\rceil,
  \qquad
  \tau=\frac{\xi}{4\ell_n}.
\end{equation}
Define the regularized Bethe objective
\begin{equation}\label{eq:regularized-objective}
  \Phi_\tau(Z)
  =\beta_A(Z)+\tau\sum_iH(Z_i),
  \qquad Z\in\cB_n,
\end{equation}
and let \(X\) be its maximizer.

\begin{lemma}[Regularized optimizer]\label{lem:regularized-optimizer}
The maximizer \(X\) is unique and lies in the interior of \(\cB_n\).  Moreover,
\begin{equation}\label{eq:near-bethe}
  \beta_A(X)\ge\log\Bethe(A)-\xi n.
\end{equation}
There are positive row and column scalings \(r_i,c_j\) such that
\begin{equation}\label{eq:kkt}
  A_{ij}=r_ic_jX_{ij}^{1+\tau}(1-X_{ij})
  \qquad\forall i,j.
\end{equation}
\end{lemma}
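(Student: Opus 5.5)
Three claims need proof: uniqueness and interiority of the maximizer, the near-optimality bound \cref{eq:near-bethe}, and the KKT factorization \cref{eq:kkt}. I would prove them in that order.

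\emph{Uniqueness and interiority.} Vontobel's theorem gives concavity of \(\beta_A\) on \(\cB_n\). Each row entropy \(H(Z_i)\) is strictly concave in \(Z_i\), and the row blocks are disjoint coordinates, so \(\tau\sum_iH(Z_i)\) is strictly concave on the whole matrix. Hence \(\Phi_\tau\) is strictly concave and continuous on the compact set \(\cB_n\), so its maximizer exists and is unique.

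For interiority I would compare one-sided directional derivatives at a boundary point. Take the direction toward the barycenter \(J/n\), or more robustly along a permutation cycle that raises a zero entry. In \(\beta_A\), the term \(-X\log X\) has derivative \(+\infty\) at \(0\). It is cancelled by the \(+(1-x)\log(1-x)\) term only at \(x=1\), and that term's derivative \(-\log(1-x)-1\) is finite near \(0\). The regularizer adds a further \(-\tau\log Z_{ij}\to+\infty\). One subtlety is that \(\beta_A\) itself has \(+\infty\) derivative at entries equal to \(1\). So I would argue with the whole objective: at a boundary point \(Z\) with some \(Z_{ij}=0\), choose a direction \(W=Z'-Z\) with \(Z'\) interior. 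The directional derivative then contains \(\sum_{Z_{ij}=0}W_{ij}(-\log 0)\cdot(1+\tau)=+\infty\) from zero entries. Entries equal to \(1\) contribute, through \(-\log(1-Z_{ij})\cdot(-W_{ij})\) with \(W_{ij}<0\), a term of sign \(-\infty\). A row with an entry \(1\) has zeros elsewhere, so the rates must be compared. I expect this bookkeeping to be the main obstacle. I would handle it by parametrizing \(Z+t(Z'-Z)\) and checking that the \(t\log t\) terms from the zero entries, with coefficient \((1+\tau)\), dominate. Each entry equal to \(1\) contributes \(+(1-Z_{ij}(t))\log(1-Z_{ij}(t))\), which is of order \(t\log t\) with coefficient at most \(1\) times its decrement. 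Each such entry is paired with zeros in the same row carrying the same total increment, so the net coefficient is at least \(\tau>0\) times \(-t\log t\). Thus \(\Phi_\tau\) strictly increases, and \(X\) is interior.

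\emph{Near-optimality.} Let \(X^\ast\) maximize \(\beta_A\). Since \(0\le H(Z_i)\le\log n\), we get
\[
\beta_A(X)\ge\Phi_\tau(X)-\tau n\log n\ge\Phi_\tau(X^\ast)-\tau n\log n\ge\log\Bethe(A)-\tau n\log n.
\]
Moreover \(\tau\log n=\xi\log n/(4\lceil\log_2 n\rceil)\le\xi\log2/4\le\xi\), which gives \cref{eq:near-bethe}.

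\emph{KKT factorization.} Since \(X\) is an interior maximizer over the affine space cut out by the row and column sums, Lagrange multipliers \(\lambda_i,\kappa_j\) exist with \(\partial\Phi_\tau/\partial Z_{ij}=\lambda_i+\kappa_j\) at \(X\). Computing,
\[
\frac{\partial\Phi_\tau}{\partial Z_{ij}}=\log A_{ij}-\log X_{ij}-1-\log(1-X_{ij})-1-\tau(\log X_{ij}+1).
\]
Exponentiating gives \(A_{ij}=e^{\lambda_i+2+\tau}e^{\kappa_j}X_{ij}^{1+\tau}(1-X_{ij})\). Interiority also requires \(X_{ij}<1\), which holds since \(n\ge2\) and all entries are positive. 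Setting \(r_i=e^{\lambda_i+2+\tau}\) and \(c_j=e^{\kappa_j}\) yields \cref{eq:kkt}.
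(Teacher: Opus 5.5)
Your proof is correct. The near-optimality bound, uniqueness, and the KKT factorization match the paper; the difference is in how you prove interiority. The paper moves from a boundary point $Z$ toward $W=J/n$ and uses Vontobel's concavity of $\beta_A$ to bound $\beta_A(Z(t))$ below by its chord. Concavity thereby absorbs all of $\beta_A$'s singular behaviour, including the $-\infty$ slopes at entries equal to $1$. What remains is the exact concavity gap of the entropy regularizer at a single zero entry, $\frac{\tau t}{n}\log\frac1t$, which dominates the linear term. You instead expand every logarithmic singularity along $Z+t(Z'-Z)$:
\begin{itemize}
\item zeros contribute $(1+\tau)Z'_{ij}\,t\log\frac1t$;
\item ones contribute $-(1-Z'_{ij})\,t\log\frac1t$;
\item all remaining terms are $O(t)$.
\end{itemize}
You then use that a row containing a $1$ has zeros elsewhere and that the direction has zero row sums. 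This makes the net coefficient at least $\tau\sum_{Z_{ij}=0}Z'_{ij}>0$. Your route takes more bookkeeping, but it does not need concavity of $\beta_A$ for this step. It also shows exactly why the regularizer is needed: at $\tau=0$ the $t\log\frac1t$ coefficients cancel in every row containing an entry $1$, so such boundary points would not be excluded. The paper's route is shorter because concavity does that balancing for free.
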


\begin{proof}
The Bethe objective is concave on \(\cB_n\), and Shannon entropy is strictly
concave.  Thus \(\Phi_\tau\) is strictly concave.

To prove interiority, suppose that \(Z\in\cB_n\) is on the boundary.  Let
\(W=J/n\) and put \(Z(t)=(1-t)Z+tW\).  A boundary point has a zero entry,
say \(Z_{ij}=0\).  Concavity of the Bethe objective gives
\[
  \beta_A(Z(t))
  \ge (1-t)\beta_A(Z)+t\beta_A(W).
\]
Coordinatewise concavity of \(x\mapsto-x\log x\) gives the analogous bound
for the row entropies.  At the chosen zero entry, the difference from this
chord is exactly
\[
  -\frac tn\log\frac tn
  -t\left(-\frac1n\log\frac1n\right)
  =\frac tn\log\frac1t.
\]
Consequently
\[
  \Phi_\tau(Z(t))-\Phi_\tau(Z)
  \ge
  t\bigl(\Phi_\tau(W)-\Phi_\tau(Z)\bigr)
  +\frac{\tau t}{n}\log\frac1t.
\]
The last term dominates the first as \(t\downarrow0\).  Thus every boundary
point can be improved, and the maximizer is interior.

Since \(\sum_iH(Z_i)\le n\log n\) and
\(\log n\le\ell_n\log2\), the regularizer is at most
\(\xi n\).  Comparing \(X\) with a Bethe maximizer gives
\cref{eq:near-bethe}.

At the interior maximizer, the Karush--Kuhn--Tucker (KKT) conditions apply.
The derivative of the objective in coordinate
\((i,j)\) is
\[
  \log A_{ij}-(1+\tau)\log X_{ij}
  -\log(1-X_{ij})-(2+\tau).
\]
The KKT equations say that this matrix is a sum of a row potential and a
column potential.  Exponentiating, and absorbing the constant \(2+\tau\)
into the potentials, gives \cref{eq:kkt}.
\end{proof}

For a row \(i\), put
\begin{equation}\label{eq:U-definition}
  q_i=\prod_k(1-X_{ik}),
  \qquad
  U_{ij}
  =\frac{X_{ij}^{1+\tau}(1-X_{ij})}{q_i}
  =\frac{X_{ij}^{1+\tau}}
         {\prod_{k\ne j}(1-X_{ik})}.
\end{equation}
The normalization by \(q_i\) makes the variables \(U_{ij}\) both bounded and
compatible with the KKT factorization.  These are the two properties used
below.

\begin{lemma}[Elementary bounds on \(U\)]\label{lem:U-bounds}
For every row \(i\),
\begin{equation}\label{eq:U-bounds}
  0<U_{ij}\le1,
  \qquad
  \sum_jU_{ij}\le e.
\end{equation}
\end{lemma}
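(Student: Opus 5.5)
We prove the two bounds in \cref{eq:U-bounds} directly from the second expression for \(U_{ij}\) in \cref{eq:U-definition}, using only that \(X\) is an interior point of \(\cB_n\) (\cref{lem:regularized-optimizer}). Positivity is immediate: \(0<X_{ik}<1\) for every \(k\), because the row of \(X\) has at least two positive entries (\(n\ge2\)) summing to one. Hence every factor in numerator and denominator is positive.

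\emph{The bound \(U_{ij}\le1\).} Since \(X_{ij}<1\) and \(\tau>0\), we have \(X_{ij}^{1+\tau}\le X_{ij}\). It therefore suffices to show
\[
  X_{ij}\le\prod_{k\ne j}(1-X_{ik}).
\]
This follows from the Weierstrass product inequality \(\prod_{k\ne j}(1-x_k)\ge1-\sum_{k\ne j}x_k\) for \(x_k\in[0,1]\). The right side equals \(1-(1-X_{ij})=X_{ij}\) by the row constraint. The product inequality itself is a one-line induction: \((1-y)(1-x)\ge1-y-x\) whenever \(x,y\ge0\).

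\emph{The bound \(\sum_jU_{ij}\le e\).} Again drop the exponent \(\tau\), which only decreases the numerator. So it is enough to bound
\[
  \sum_j\frac{X_{ij}}{q_i/(1-X_{ij})}
  =\frac1{q_i}\sum_jX_{ij}(1-X_{ij})
  \le\frac1{q_i}.
\]
This reduces the claim to \(q_i\ge1/e\), but that lower bound fails when a row has a large entry: \(q_i\) can be close to zero if some \(X_{ij}\) is near one. We therefore split on whether the row has a dominant entry.

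Let \(a=\max_kX_{ik}\).

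\emph{Case \(a\le1/2\).} Then \(\log(1-x)\ge-x-x^2\) for \(x\le1/2\), which gives \(q_i\ge\exp(-1-\sum_kX_{ik}^2)\). This is too weak by a factor. A cleaner route is to keep the factor \(1-X_{ij}\) in each summand. For each \(j\),
\[
  U_{ij}\le\frac{X_{ij}}{\prod_{k\ne j}(1-X_{ik})}
  \le X_{ij}\exp\Bigl(\sum_{k\ne j}\frac{X_{ik}}{1-X_{ik}}\Bigr),
\]
using \(-\log(1-x)\le x/(1-x)\).

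\emph{Case \(a>1/2\).} Here only one index, say \(j_0\), can exceed \(1/2\). Its term is at most \(1\) by the first part. For every \(j\ne j_0\), the denominator contains the factor \(1-a\). The remaining factors have total mass \(\le1-a-X_{ij}\), so by the Weierstrass inequality the denominator is at least \((1-a)\) times something of order \(1-(1-a)\), while \(\sum_{j\ne j_0}X_{ij}=1-a\). Thus these terms sum to \(O(1)\).

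The main obstacle is getting the clean constant \(e\) rather than merely \(O(1)\). My first attempt would be the uniform estimate \(\prod_{k\ne j}(1-x_k)\ge\exp(-\sum_{k\ne j}x_k/(1-x_k))\) combined with a convexity/extremal argument. Fixing the row sum, the sum \(\sum_j x_j/\prod_{k\ne j}(1-x_k)\) should be maximized either at the uniform vector, where it tends to \((1-1/m)^{-(m-1)}\le e\), or at a two-point split. I would check both extremes, and fall back on the case split above if the extremal reduction does not go through.
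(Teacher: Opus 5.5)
Your positivity argument and the bound $U_{ij}\le1$ are correct and match the paper: drop $X_{ij}^\tau\le1$, then use the Weierstrass product inequality together with the row constraint. The gap is in the row-sum bound, where the constant $e$ is never established.

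The loss occurs at your first step, where you bound $\frac1{q_i}\sum_jX_{ij}(1-X_{ij})$ by $1/q_i$. Put $s_2=\sum_jX_{ij}^2$. Then $\sum_jX_{ij}(1-X_{ij})=1-s_2$, and this is exactly the factor that compensates when $q_i$ is small because a row has a dominant entry, since $1-s_2$ is then small as well. Having discarded it, your case analysis can only give $O(1)$:
\begin{itemize}
\item For $a>1/2$, your sketch gives at best about $1+1/a$, which tends to $3$ as $a\downarrow1/2$.
\item For $a\le1/2$, the ``cleaner route'' overshoots $e$ outright. At the row $(1/2,1/4,1/4)$, the summed bound $\sum_jx_j\exp\bigl(\sum_{k\ne j}x_k/(1-x_k)\bigr)$ is about $2.87$.
\item The proposed reduction to uniform or two-point rows is only conjectured. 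It is a nonconcave maximization over the simplex, so it cannot stand in for a proof.
\end{itemize}

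The missing idea is to keep $1-s_2$ and prove the single uniform estimate $q_i\ge e^{-1}(1-s_2)$. This gives $\sum_jU_{ij}\le(1-s_2)/q_i\le e$ at once. The paper proves it with the power series of $-\log(1-x)$. It uses $\sum_jX_{ij}=1$ and $\sum_jX_{ij}^k\le s_2^{k/2}$ for $k\ge2$ (monotonicity of $\ell_p$ norms). Writing $\lambda=\sqrt{s_2}<1$,
\[
  -\log q_i=1+\sum_{k\ge2}\frac1k\sum_jX_{ij}^k
  \le1+\sum_{k\ge2}\frac{\lambda^k}{k}
  =1-\log(1-\lambda)-\lambda
  \le1-\log(1-\lambda^2),
\]
where the last step is $\log(1+\lambda)\le\lambda$.

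Your own first attempt in the case $a\le1/2$, namely $q_i\ge\exp(-1-s_2)$, was not actually too weak. Combined with the retained factor it gives $(1-s_2)e^{1+s_2}\le e$, because $1-s_2\le e^{-s_2}$.

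An $O(1)$ bound would in fact suffice downstream, where the lemma enters only through $-\rho\log W\ge-\rho$ in \cref{lem:core-transfer}, at the cost of adjusting constants. But the lemma as stated asserts the constant $e$, and your proposal does not prove it.
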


\begin{proof}
Suppress the row index and write \(x_j=X_{ij}\) and
\(q=\prod_k(1-x_k)\).  The elementary product
inequality gives
\[
  \prod_{k\ne j}(1-x_k)
  \ge1-\sum_{k\ne j}x_k=x_j.
\]
Therefore
\[
  U_j\le x_j^\tau\le1.
\]
For the row sum, first discard \(x_j^\tau\le1\):
\begin{equation}\label{eq:U-sum-first}
  \sum_jU_j
  \le\frac{\sum_jx_j(1-x_j)}{\prod_k(1-x_k)}
  =\frac{1-s_2}{q},
  \qquad s_2=\sum_jx_j^2.
\end{equation}
Put \(a=\sqrt{s_2}<1\).  Monotonicity of \(\ell_p\)-norms gives
\(\sum_jx_j^k\le a^k\) for \(k\ge2\).  Hence
\begin{align*}
  -\log q
  &=1+\sum_{k\ge2}\frac{\sum_jx_j^k}{k}\\
  &\le1+\sum_{k\ge2}\frac{a^k}{k}\\
  &=1-\log(1-a)-a\\
  &\le1-\log(1-a^2),
\end{align*}
where the last inequality is equivalent to \(\log(1+a)\le a\).
Thus \(q\ge e^{-1}(1-s_2)\), and \cref{eq:U-sum-first} proves the result.
\end{proof}

\subsection{The transfer identity}

The transfer identity is the bridge between the structural point \(P\) and
the computable point \(X\).  It assigns a nonnegative cost to every entry of
\(U\).  We first bound the total cost and then subtract the part caused by
assignments outside \(K\), leaving a bound on the heavy edges.

For a probability vector \(p\), define its row contribution to the Bethe
entropy by
\begin{equation}\label{eq:hB-definition}
  h_B(p)=H(p)+\sum_j(1-p_j)\log(1-p_j),
\end{equation}
and let
\[
  E_\tau=\Phi_\tau(X)-\Phi_\tau(P)\ge0.
\]
Thus \(E_\tau\) is the suboptimality of the marginal matrix \(P\) in the
regularized program.

\begin{lemma}[Global transfer identity]\label{lem:global-transfer}
We have the exact identity
\begin{equation}\label{eq:global-transfer}
  E_\tau+\sum_i\left[h_B(P_i)+\tau H(P_i)\right]
  =\sum_{i,j}P_{ij}\log\frac1{U_{ij}}.
\end{equation}
Moreover,
\begin{equation}\label{eq:global-transfer-upper}
  \sum_{i,j}P_{ij}\log\frac1{U_{ij}}
  \le
  \Delta(A)+2\xi n+H(\mu)-\frac n2\log2.
\end{equation}
\end{lemma}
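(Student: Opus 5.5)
The plan is to use the KKT factorization \cref{eq:kkt} to write \(\Phi_\tau\) on all of \(\cB_n\) as a constant plus a linear term in \(\log U\) plus row-separable entropy terms. The identity \cref{eq:global-transfer} then follows by evaluating this expression at \(Z=X\) and at \(Z=P\). The upper bound \cref{eq:global-transfer-upper} follows by combining the identity with the slack decomposition.

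\emph{Step 1 (linearization).} By \cref{eq:kkt,eq:U-definition},
\[
  \log A_{ij}=\log r_i+\log c_j+\log q_i+\log U_{ij}.
\]
For any \(Z\in\cB_n\), the row and column sums equal one, so
\[
  \sum_{i,j}Z_{ij}\log A_{ij}
  =\kappa+\sum_{i,j}Z_{ij}\log U_{ij},
  \qquad
  \kappa=\sum_i\log(r_iq_i)+\sum_j\log c_j,
\]
and \(\kappa\) does not depend on \(Z\). Expanding \(\beta_A\) and the regularizer gives
\[
  \Phi_\tau(Z)=\kappa+\sum_{i,j}Z_{ij}\log U_{ij}
  +\sum_i\bigl[h_B(Z_i)+\tau H(Z_i)\bigr].
\]

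\emph{Step 2 (evaluate at \(X\)).} We claim \(\Phi_\tau(X)=\kappa\). Fix a row \(i\). We have
\[
  \sum_jX_{ij}\log U_{ij}
  =(1+\tau)\sum_jX_{ij}\log X_{ij}+\sum_jX_{ij}\log(1-X_{ij})-\sum_k\log(1-X_{ik}),
\]
\[
  h_B(X_i)+\tau H(X_i)
  =-(1+\tau)\sum_jX_{ij}\log X_{ij}+\sum_j(1-X_{ij})\log(1-X_{ij}).
\]
Adding the two lines, the \(X\log X\) terms cancel. The remaining terms are \(\sum_j[X_{ij}+(1-X_{ij})-1]\log(1-X_{ij})=0\). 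Hence \(\Phi_\tau(X)=\kappa\). Subtracting the Step 1 expression at \(Z=P\) gives \(E_\tau=\kappa-\Phi_\tau(P)\), which rearranges exactly to \cref{eq:global-transfer}. This step is the only place where the specific normalization by \(q_i\) matters, and it is the step I would check most carefully.

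\emph{Step 3 (upper bound).} First bound \(E_\tau\). By \cref{eq:near-bethe} and the bound \(\tau\sum_iH(Z_i)\le\xi n\) from the proof of \cref{lem:regularized-optimizer}, we have \(\Phi_\tau(X)\le\log\Bethe(A)+\xi n\). Also \(\Phi_\tau(P)\ge\beta_A(P)\). Hence \(E_\tau\le E+\xi n\), and \(\tau\sum_iH(P_i)\le\xi n\) as well.

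Next express \(\sum_ih_B(P_i)\) through the slack. From \cref{eq:row-T,eq:row-g,eq:row-score,eq:hB-definition}, each row satisfies \(s(p)=h_B(p)+g(p)\) and \(g(p)=\tfrac12\log2-d(p)\). Summing over rows and using \cref{lem:entropy-KL},
\[
  \sum_ih_B(P_i)=D+H(\mu)+\sum_id(P_i)-\frac n2\log2.
\]
Adding \(E\) to both sides and applying \cref{lem:slack-decomposition} gives \(E+\sum_ih_B(P_i)=\Delta(A)+H(\mu)-\frac n2\log2\). Inserting this together with the two \(\xi n\) bounds into \cref{eq:global-transfer} yields \cref{eq:global-transfer-upper}.
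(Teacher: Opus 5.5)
Your proof is correct and follows essentially the paper's route. You substitute the KKT factorization, absorb the row and column scalings (and in your version the \(q_i\)) into a \(Z\)-independent constant, and check that the remaining terms cancel at \(X\); you then combine \(E_\tau\le E+\xi n\), \(\tau\sum_iH(P_i)\le\xi n\), \(s=g+h_B\), \cref{lem:entropy-KL}, and \cref{lem:slack-decomposition} exactly as the paper does. One small citation slip: \(\Phi_\tau(X)\le\log\Bethe(A)+\xi n\) follows from \(\beta_A(X)\le\log\Bethe(A)\), which holds because the Bethe permanent is defined as a maximum, not from \cref{eq:near-bethe}, which is the opposite inequality.
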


\begin{proof}
Write \(\log A_{ij}\) using \cref{eq:kkt}.  Row and column scaling terms
cancel between \(\Phi_\tau(X)\) and \(\Phi_\tau(P)\).  At \(X\), all remaining
terms sum to \(\sum_i\log q_i\).  Expanding the difference gives
\begin{align*}
  E_\tau
  &=
  \sum_i\log q_i
  -(1+\tau)\sum_{i,j}P_{ij}\log X_{ij}
  -\sum_{i,j}P_{ij}\log(1-X_{ij})\\
  &\quad
  +(1+\tau)\sum_{i,j}P_{ij}\log P_{ij}
  -\sum_{i,j}(1-P_{ij})\log(1-P_{ij}).
\end{align*}
Adding \(\sum_i[h_B(P_i)+\tau H(P_i)]\) cancels the second line.  The first
line is exactly \(\sum_{i,j}P_{ij}\log(1/U_{ij})\), proving
\cref{eq:global-transfer}.

To obtain the upper bound, use that the regularizer is nonnegative and at
most \(\xi n\):
\[
  0\le E_\tau\le E+\xi n,
  \qquad
  \tau\sum_iH(P_i)\le\xi n.
\]
By \cref{eq:entropy-KL} and \(s=g+h_B\),
\[
  \sum_i h_B(P_i)
  =H(\mu)+D-\sum_i g(P_i).
\]
Finally,
\[
  \sum_i g(P_i)
  =\frac n2\log2-\sum_i d(P_i).
\]
Use \cref{eq:slack-decomposition} to replace
\(E+D+\sum_i d(P_i)\) by \(\Delta(A)\).  This gives
\cref{eq:global-transfer-upper}.
\end{proof}

For each row \(i\), let
\begin{equation}\label{eq:rho-i}
  \rho_i=\sum_{j\notin N_K(i)}P_{ij}.
\end{equation}
Thus \(\rho_i\) is the probability that the matching edge of row \(i\) lies
outside \(K\).  For a good row, \(\rho_i\le\eta\).  We call the edges from
\(i\) to \(N_K(i)\) its core edges and define their transfer cost by
\begin{equation}\label{eq:R-core}
  R_{\mathrm{core}}
  =\sum_i\sum_{j\in N_K(i)}
    P_{ij}\log\frac1{U_{ij}}.
\end{equation}

\begin{lemma}[Excursion-entropy cancellation]\label{lem:core-transfer}
We have
\begin{equation}\label{eq:core-transfer}
  R_{\mathrm{core}}
  \le
  \Delta(A)+2\xi n
  +\sum_i\left[\Htwo(\rho_i)+\rho_i\right].
\end{equation}
Consequently,
\begin{equation}\label{eq:core-transfer-normalized}
  \frac{R_{\mathrm{core}}}{n}
  \le
  \frac{\Delta(A)}n+2\xi+\Htwo(\eta)+\eta
  +(1+\log2)\frac bn.
\end{equation}
\end{lemma}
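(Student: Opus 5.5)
The plan is to start from the global transfer identity, \cref{lem:global-transfer}, and split the full transfer cost into core and off-core parts:
\[
  \sum_{i,j}P_{ij}\log\frac1{U_{ij}}
  =R_{\mathrm{core}}+R_{\mathrm{out}},
  \qquad
  R_{\mathrm{out}}=\sum_i\sum_{j\notin N_K(i)}P_{ij}\log\frac1{U_{ij}}.
\]
By \cref{eq:global-transfer-upper},
\[
  R_{\mathrm{core}}\le\Delta(A)+2\xi n+H(\mu)-\frac n2\log2-R_{\mathrm{out}}.
\]
It then suffices to show that
\[
  H(\mu)-\frac n2\log 2-R_{\mathrm{out}}\le\sum_i\bigl[\Htwo(\rho_i)+\rho_i\bigr].
\]
The name of the lemma suggests the mechanism: the entropy of the excursions outside \(K\) that appears in \(H(\mu)\) should cancel against the off-core transfer cost.

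\emph{Upper bound on \(H(\mu)\).} I will apply the core encoding, \cref{lem:core-encoding}, which gives \(H(\mu)\le\sum_iH(Y_i)+m(K)\log2\). Every component counted by \(m(K)\) contains at least two rows, so \(m(K)\le n/2\), and the \(m(K)\log2\) term is absorbed by \(-\frac n2\log2\). Next I compute \(H(Y_i)\) exactly. The variable \(Y_i\) equals \(\star\) with probability \(1-\rho_i\). Otherwise it is distributed according to the normalized off-core row \(\pi^{(i)}_j=P_{ij}/\rho_i\) for \(j\notin N_K(i)\). The chain rule therefore gives
\[
  H(Y_i)=\Htwo(\rho_i)+\rho_iH(\pi^{(i)}).
\]
When \(\rho_i=0\), the second term is simply absent.

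\emph{Lower bound on \(R_{\mathrm{out}}\).} This is the cancellation step, and the only place where some care is needed. For each row, I rewrite the off-core cost as
\[
  \sum_{j\notin N_K(i)}P_{ij}\log\frac1{U_{ij}}
  =\rho_iH(\pi^{(i)})+\rho_i\sum_{j}\pi^{(i)}_j\log\frac{\pi^{(i)}_j}{U_{ij}}.
\]
By the log-sum (Gibbs) inequality, the last sum is at least
\[
  -\log\sum_{j\notin N_K(i)}U_{ij}\ge-\log e=-1,
\]
where the bound uses the row-sum estimate \(\sum_jU_{ij}\le e\) from \cref{lem:U-bounds}. Hence the off-core cost of row \(i\) is at least \(\rho_iH(\pi^{(i)})-\rho_i\). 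Summing over rows and combining with the previous step, the terms \(\rho_iH(\pi^{(i)})\) cancel exactly. What remains is \(\sum_i[\Htwo(\rho_i)+\rho_i]\), which proves \cref{eq:core-transfer}.

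\emph{Normalized form.} To obtain \cref{eq:core-transfer-normalized}, I split the sum over good and bad rows. For a good row, \(\rho_i\le\eta\le1/10<1/2\), and \(\Htwo\) is increasing on \([0,1/2]\). So each good row contributes at most \(\Htwo(\eta)+\eta\), and the good rows contribute at most \(n(\Htwo(\eta)+\eta)\) in total. For a bad row, I use the crude bound \(\Htwo(\rho_i)+\rho_i\le\log2+1\), so the bad rows contribute at most \((1+\log2)b\). Dividing by \(n\) gives \cref{eq:core-transfer-normalized}.
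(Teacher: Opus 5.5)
Your proposal is correct and follows the paper's proof essentially step for step. The core encoding with $m(K)\le n/2$ absorbs the $-\frac n2\log2$ term, the log-sum inequality with $\sum_j U_{ij}\le e$ gives the off-core lower bound $\rho_i H(\pi^{(i)})-\rho_i$, this cancels against the chain-rule decomposition $H(Y_i)=\Htwo(\rho_i)+\rho_i H(\pi^{(i)})$, and the normalized bound comes from the same good-row/bad-row split.
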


\begin{proof}
Combine \cref{eq:global-transfer-upper,eq:core-encoding} and
\(m(K)\le n/2\) to obtain
\begin{equation}\label{eq:transfer-before-tail}
  \sum_{i,j}P_{ij}\log\frac1{U_{ij}}
  \le\Delta(A)+2\xi n+\sum_iH(Y_i).
\end{equation}
Fix a row and let \(O=[n]\setminus N_K(i)\) be its set of outside columns.
Write \(\rho=\sum_{j\in O}P_{ij}\) and \(W=\sum_{j\in O}U_{ij}\).  If \(\rho=0\),
the desired estimate holds.  Otherwise the log-sum inequality gives
\begin{align*}
  \sum_{j\in O}P_{ij}\log\frac1{U_{ij}}
  &=
  \rho H\left((P_{ij}/\rho)_{j\in O}\right)
  +\sum_{j\in O}P_{ij}
    \log\frac{P_{ij}/\rho}{U_{ij}/W}
  -\rho\log W\\
  &\ge
  \rho H\left((P_{ij}/\rho)_{j\in O}\right)-\rho,
\end{align*}
where we used \(W\le e\) from \cref{lem:U-bounds}.  Since
\[
  H(Y_i)
  =\Htwo(\rho)
   +\rho H\left((P_{ij}/\rho)_{j\in O}\right),
\]
we obtain
\begin{equation}\label{eq:tail-transfer}
  \sum_{j\notin N_K(i)}P_{ij}\log\frac1{U_{ij}}
  \ge H(Y_i)-\Htwo(\rho_i)-\rho_i.
\end{equation}
Subtract \cref{eq:tail-transfer} from \cref{eq:transfer-before-tail} and sum
over the rows to obtain \cref{eq:core-transfer}.  For good rows use
\(\rho_i\le\eta\); for bad rows use
  \(\Htwo(\rho_i)+\rho_i\le\log2+1\).  This gives the normalized bound.
\end{proof}

\section{A constant gain from a clean four-cycle}\label{sec:gain}

The KKT equations also factor the paired certificate.  For a row \(i\), put
\begin{equation}\label{eq:zeta-i}
  \zeta_i=\prod_jX_{ij}^{\tau X_{ij}}\le1.
\end{equation}

\begin{lemma}[Pair factorization]\label{lem:pair-factorization}
For two rows \(r,s\), let \(\alpha_j=X_{rj}+X_{sj}\).  Then
\begin{equation}\label{eq:pair-factorization}
  \Gamma_{rs}(A,X)
  =\frac1{\zeta_r\zeta_s}
   \prod_j(1-\alpha_j)^{1-\alpha_j}
   \capop_\alpha\left(
     \sum_{j<k}
     (U_{rj}U_{sk}+U_{rk}U_{sj})z_jz_k
   \right).
\end{equation}
\end{lemma}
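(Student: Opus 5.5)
The plan is to substitute the KKT factorization \cref{eq:kkt} into both the numerator \(S^{\mathrm{pair}}_{rs}\) and the denominator \(S_rS_s\) of \(\Gamma_{rs}\). Then we track how the row scalings \(r_r,r_s\), the column scalings \(c_j\), the normalizers \(q_r,q_s\), and the powers \(X_{ij}^{\tau}\) enter each side, and check that everything cancels except the displayed expression.

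\textbf{Numerator.} Write \(A_{ij}=r_ic_jq_iU_{ij}\) by \cref{eq:kkt,eq:U-definition}. The coefficient of \(z_jz_k\) in \(Q_{rs}\) becomes
\[
  r_rr_sq_rq_s\,c_jc_k\,(U_{rj}U_{sk}+U_{rk}U_{sj}).
\]
Capacity is homogeneous of degree one under scalar multiples, so the constant \(r_rr_sq_rq_s\) factors out. The substitution \(z_j\mapsto z_j/c_j\) is a bijection of \(\R_{>0}^n\). Under it, the infimum in \cref{eq:capacity} shows
\[
  \capop_\alpha\bigl(q(c\circ z)\bigr)=\capop_\alpha(q)\,c^{\alpha}.
\]
So the column factors contribute exactly \(\prod_jc_j^{\alpha_j}\), with \(\alpha_j=X_{rj}+X_{sj}\). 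Hence
\[
  S^{\mathrm{pair}}_{rs}
  =r_rr_sq_rq_s\prod_jc_j^{\alpha_j}\prod_j(1-\alpha_j)^{1-\alpha_j}
  \capop_\alpha\Bigl(\sum_{j<k}(U_{rj}U_{sk}+U_{rk}U_{sj})z_jz_k\Bigr).
\]

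\textbf{Denominator.} For row \(i\), \cref{eq:kkt} gives
\[
  \frac{A_{ij}}{X_{ij}}=r_ic_jX_{ij}^{\tau}(1-X_{ij}).
\]
Substituting into \cref{eq:singleton}, and using \(\sum_jX_{ij}=1\), yields
\[
  S_i=r_i\prod_jc_j^{X_{ij}}\cdot\zeta_i\cdot\prod_j(1-X_{ij})^{X_{ij}}(1-X_{ij})^{1-X_{ij}}
  =r_i\zeta_iq_i\prod_jc_j^{X_{ij}}.
\]

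\textbf{Ratio.} Multiplying the two singleton factors gives \(S_rS_s=r_rr_sq_rq_s\zeta_r\zeta_s\prod_jc_j^{\alpha_j}\). Dividing this into the expression for \(S^{\mathrm{pair}}_{rs}\), the factors \(r_rr_s\), \(q_rq_s\) and \(\prod_jc_j^{\alpha_j}\) cancel, and what remains is exactly \cref{eq:pair-factorization}. The claim \(\zeta_i\le1\) is immediate, since \(X_{ij}\le1\).

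\textbf{Main obstacle.} There is little real difficulty here. The only step that needs care is the bookkeeping in the singleton product. The factor \((1-X_{ij})\) from the KKT form, raised to the power \(X_{ij}\), must combine with \((1-X_{ij})^{1-X_{ij}}\) to give exactly \(q_i\). Likewise the row scaling must appear to the first power, which uses \(\sum_jX_{ij}=1\). On the pair side, the scaling of capacity must be checked to produce the same \(c^{\alpha}\) and the same \(r_rr_sq_rq_s\), which uses \(\sum_j\alpha_j=2\) only through the degree of the monomials. All entries are positive because \(X\) is interior (\cref{lem:regularized-optimizer}), so no boundary conventions arise.
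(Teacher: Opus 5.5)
Your proposal is correct and follows the paper's proof: substitute the KKT factorization into the singleton factors to get \(S_i=r_iq_i\zeta_i\prod_jc_j^{X_{ij}}\), factor \(r_rr_sq_rq_s\) out of \(Q_{rs}\), rescale the capacity variables by \(c_j\) to extract \(\prod_jc_j^{\alpha_j}\), and cancel. Your bookkeeping of the \((1-X_{ij})\) powers combining into \(q_i\), and of \(\sum_jX_{ij}=1\) giving \(r_i\) to the first power, matches the paper's computation.
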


\begin{proof}
Substituting \cref{eq:kkt} into the singleton factor gives
\begin{equation}\label{eq:S-factorized}
  S_i(A,X)
  =r_iq_i\zeta_i\prod_jc_j^{X_{ij}}.
\end{equation}
For the pair polynomial, \cref{eq:kkt,eq:U-definition} give
\begin{align*}
  Q_{rs}(z)
  =r_rr_sq_rq_s
  \sum_{j<k}c_jc_k
  (U_{rj}U_{sk}+U_{rk}U_{sj})z_jz_k.
\end{align*}
Rescaling the capacity variables by \(c_j\) therefore gives
\begin{align*}
  \capop_\alpha(Q_{rs})
  &=
  r_rr_sq_rq_s
  \left(\prod_jc_j^{\alpha_j}\right)\\
  &\quad\cdot
  \capop_\alpha\left(
    \sum_{j<k}
    (U_{rj}U_{sk}+U_{rk}U_{sj})z_jz_k
  \right).
\end{align*}
Divide by \cref{eq:S-factorized} for rows \(r,s\).  Since
\(\alpha_j=X_{rj}+X_{sj}\), the row scalings, column scalings, and the factors
\(q_rq_s\) cancel, leaving \cref{eq:pair-factorization}.
\end{proof}

A clean pair gives a constant gain whenever its four core entries have small
transfer cost.

\begin{lemma}[Clean-pair gain]\label{lem:clean-gain}
There are rational constants \(\kappa_0,\gamma_0,\xi_0>0\) with the following
property.  Suppose rows \(r,s\) and columns \(a,b\) form a clean \(K_{2,2}\)
component of \(K\).  If \(0<\xi\le\xi_0\) and
\begin{equation}\label{eq:small-local-cost}
  \sum_{i\in\set{r,s}}\sum_{j\in\set{a,b}}
  \log\frac1{U_{ij}}\le\kappa_0,
\end{equation}
then
\begin{equation}\label{eq:clean-gain}
  \log\Gamma_{rs}(A,X)\ge\gamma_0.
\end{equation}
\end{lemma}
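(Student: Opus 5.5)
The plan is to start from the pair factorization \cref{lem:pair-factorization}. Since \(\zeta_r,\zeta_s\le1\), we may discard the prefactor \(1/(\zeta_r\zeta_s)\ge1\). It then suffices to show
\[
  \prod_j(1-\alpha_j)^{1-\alpha_j}\,
  \capop_\alpha\Bigl(\sum_{j<k}(U_{rj}U_{sk}+U_{rk}U_{sj})z_jz_k\Bigr)
  \ge 2e^{-O(\kappa_0+\xi_0)-o(1)},
\]
where \(o(1)\) denotes a quantity tending to zero as \(\kappa_0\to0\). This mirrors the exact \(2\times2\) all-ones block, where the pair certificate recovers a factor of \(2\). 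Once this is established, one takes \(\gamma_0=\frac12\log2\) and chooses \(\kappa_0,\xi_0\) to be small rationals.

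\emph{Step 1: local structure of \(X\).} Hypothesis \eqref{eq:small-local-cost} gives \(U_{ij}\ge e^{-\kappa_0}\) for the four core entries. Fix the row \(r\), write \(x_k=X_{rk}\), and let \(t=1-x_a-x_b\). From \cref{eq:U-definition} and \(x_a^{1+\tau}\le x_a\), we get
\[
  (1-x_b)(1-t)\le\prod_{k\ne a}(1-x_k)\le e^{\kappa_0}x_a .
\]
Symmetrically, \((1-x_a)(1-t)\le e^{\kappa_0}x_b\). Adding the two inequalities gives \((1+t)(1-t)\le e^{\kappa_0}(1-t)\), so \(t\le e^{\kappa_0}-1\). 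Feeding this back into each inequality shows that \(x_a,x_b=\frac12+O(\kappa_0)\). The same holds for row \(s\). Hence \(\alpha_a,\alpha_b=1-O(\kappa_0)\), and the outside mass \(\delta=\sum_{k\notin\{a,b\}}\alpha_k=2-\alpha_a-\alpha_b\) is \(O(\kappa_0)\).

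\emph{Step 2: capacity certificate.} Apply \cref{lem:capacity-certificate} with the distribution \(\theta\) supported on the monomials \(z_az_b\), \(z_az_k\) and \(z_bz_k\) for \(k\notin\{a,b\}\). Set
\[
  \theta_{ab}=\alpha_a+\alpha_b-1,\qquad
  \theta_{ak}=\alpha_k\frac{1-\alpha_b}{\delta},\qquad
  \theta_{bk}=\alpha_k\frac{1-\alpha_a}{\delta}.
\]
This is nonnegative, and its marginals are exactly \(\alpha\): for instance \(\theta_{ab}+\sum_k\theta_{ak}=\alpha_a\), and \(\theta_{ak}+\theta_{bk}=\alpha_k\). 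If \(\delta=0\), put all mass on \(z_az_b\).

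The main term uses the coefficient \(U_{ra}U_{sb}+U_{rb}U_{sa}\ge2e^{-\kappa_0}\), which contributes \(\theta_{ab}\log(2e^{-\kappa_0}/\theta_{ab})=\log2-O(\kappa_0)\).

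The outside terms are the delicate point, because \(U_{rk}\) and \(U_{sk}\) may be tiny. I would use \(U_{ik}\ge X_{ik}^{1+\tau}\), which holds since the denominator in \cref{eq:U-definition} is at most \(1\). Together with \(U_{ra}\ge e^{-\kappa_0}\), this gives that the coefficient of \(z_az_k\) is at least \(e^{-\kappa_0}2^{-\tau}\alpha_k^{1+\tau}\). Since \(\theta_{ak}\le\alpha_k\), each outside term is at least \(\theta_{ak}\bigl(\tau\log\alpha_k-\kappa_0-\tau\log2\bigr)\). Summing, the total loss is at most
\[
  O(\kappa_0\delta)+\tau\sum_k\alpha_k\log\frac1{\alpha_k}
  \le O(\kappa_0)+\tau\,\delta\log\frac n\delta .
\]
Using \(\tau\log n\le\xi\) from \cref{eq:tau-definition}, this is \(O(\kappa_0+\xi_0)+o(1)\).

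\emph{Step 3: the prefactor.} We have \((1-\alpha_j)^{1-\alpha_j}\ge e^{-1/e}\) only crudely, so I would instead argue separately on two groups of columns. For \(j\in\{a,b\}\), \(1-\alpha_j=O(\kappa_0)\), so the factor is \(e^{-o(1)}\). For outside columns, \((1-\alpha_k)^{1-\alpha_k}\ge e^{-\alpha_k}\) up to second order, giving \(e^{-O(\delta)}\).

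Combining the three steps yields \(\log\Gamma_{rs}\ge\log2-O(\kappa_0+\xi_0)-o(1)\), and \(\ge\frac12\log2\) once \(\kappa_0\) and \(\xi_0\) are small. I expect the main obstacle to be Step 1 together with the outside terms of Step 2. Step 1 must extract near half--half structure of the rows of \(X\) from the bounds on \(U\) alone. Step 2 must control tiny coefficients in the certificate, which is exactly where the \(\tau\log n\le\xi\) normalization is needed.
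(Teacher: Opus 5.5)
Your argument is essentially the paper's proof. It uses the same pair factorization with \(\zeta_r\zeta_s\le1\) dropped and the same per-row leakage bound \(t\le e^{\kappa_0}-1\); you add the two inequalities where the paper multiplies them, and both work. It also uses literally the same witness \(\theta\) (your \(\theta_{ab}=\alpha_a+\alpha_b-1\) is the paper's \(1-\rho\)), the same convexity bound \(U_{rk}+U_{sk}\ge2^{-\tau}\alpha_k^{1+\tau}\), and the same use of \(\tau\log n\le\xi\). The only bookkeeping difference concerns the split entropy. Write \(\delta_a=1-\alpha_a\) and \(\delta_b=1-\alpha_b\). You discard the nonnegative split entropy \(\delta_a\log(\delta/\delta_a)+\delta_b\log(\delta/\delta_b)\) of \(\theta\), and you bound the core prefactor \(\delta_a\log\delta_a+\delta_b\log\delta_b\ge2\delta\log\delta\) on its own. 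The paper instead merges the two into \(\rho\log\rho\). Also, \((1-x)\log(1-x)\ge-x\) holds exactly on \([0,1]\), not only to second order.

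There is one false claim in Step 1, although nothing downstream depends on it: your two inequalities do \emph{not} give \(x_a,x_b=\frac12+O(\kappa_0)\). Expanded, they say \(tx_b\le(e^{\kappa_0}-1)x_a\) and \(tx_a\le(e^{\kappa_0}-1)x_b\). These carry no information about the split between \(a\) and \(b\) once \(t\) is small. Indeed, as \(t\to0\) one has \(U_{ia}\to x_a^{\tau}\), and \(\tau=\xi/(4\ell_n)\to0\). So rows placing masses \((0.9,0.1)\) and \((0.1,0.9)\) on the core columns have total local cost about \(2\tau(\log\frac{10}{9}+\log10)\), which is below \(\kappa_0\) for large \(n\). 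This is compatible with the pair being clean. Take a block-diagonal matrix of \(2\times2\) blocks with cross ratio \(A_{ra}A_{sb}/(A_{rb}A_{sa})=9^{2\tau}\), perturbed by tiny positive off-block entries. Its regularized optimizer has approximately these rows, while \(P\) is within \(O(\tau)\) of half--half. So the goal you flagged as the main obstacle, extracting half--half rows of \(X\) from the bounds on \(U\), is unattainable, and it is also unnecessary. What you actually use is \(\alpha_a,\alpha_b\in[1-\delta,1]\). This follows from the column constraints \(\alpha_j=X_{rj}+X_{sj}\le1\) together with \(\delta=2-\alpha_a-\alpha_b\le2(e^{\kappa_0}-1)\). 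The same column constraints make your \(\theta\) nonnegative and give \(\theta_{ak}\le\alpha_k\). This is exactly how the paper argues, via \(\delta_a+\delta_b=\rho\) with \(\delta_a,\delta_b\ge0\). Finally, the lemma asks for rational constants, so replace \(\gamma_0=\frac12\log2\) by a rational number below your bound, for example \(1/3\).
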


\begin{proof}
The proof has two steps.  Large core entries of \(U\) first force both rows of
\(X\) to place almost all their mass on the two core columns.  We then build a
feasible distribution for the entropy dual of capacity which places most of
its mass on the core monomial \(z_az_b\).

Let \(\kappa>0\) be a parameter to be fixed below and put
\(u_\kappa=e^{-\kappa}\).
Assume the left side of \cref{eq:small-local-cost} is at most \(\kappa\).  By
\cref{lem:U-bounds}, every summand in \cref{eq:small-local-cost} is
nonnegative.  Thus all four core entries of \(U\) are at least \(u_\kappa\).

\medskip\noindent\emph{Small transfer cost implies small leakage.}
Consider one of the two rows and write
\[
  x_a+x_b+t=1,
  \qquad
  t=\sum_{j\notin\set{a,b}}X_{ij}.
\]
Since \(\prod_{j\notin\set{a,b}}(1-X_{ij})\ge1-t\), we have
\begin{align}
  U_{ia}
  &\le
  \frac{x_a}{(1-x_b)(1-t)}
  =\frac{x_a}{x_a+tx_b},\label{eq:Ua-leakage}\\
  U_{ib}
  &\le
  \frac{x_b}{x_b+tx_a}.\label{eq:Ub-leakage}
\end{align}
The inequalities \(U_{ia},U_{ib}\ge u_\kappa\) imply
\[
  u_\kappa tx_b\le(1-u_\kappa)x_a,
  \qquad
  u_\kappa tx_a\le(1-u_\kappa)x_b.
\]
Multiplying and using \(x_a,x_b>0\) gives
\begin{equation}\label{eq:one-row-leakage}
  t\le u_\kappa^{-1}-1=e^\kappa-1.
\end{equation}

Let
\[
  \alpha_j=X_{rj}+X_{sj},
  \qquad
  \rho=\sum_{j\notin\set{a,b}}\alpha_j.
\]
Applying \cref{eq:one-row-leakage} to both rows gives
\begin{equation}\label{eq:alpha-leakage}
  \rho\le \overline\rho_\kappa:=2(e^\kappa-1).
\end{equation}
Put \(\delta_a=1-\alpha_a\) and \(\delta_b=1-\alpha_b\).  Since
\(\sum_j\alpha_j=2\), we have \(\delta_a+\delta_b=\rho\).

\medskip\noindent\emph{A capacity witness concentrated on the core.}
Consider the following probability distribution on two-element column sets:
\begin{equation}\label{eq:theta-distribution}
  \theta_{\set{a,b}}=1-\rho,
  \qquad
  \theta_{\set{a,\ell}}=\frac{\delta_b}{\rho}\alpha_\ell,
  \qquad
  \theta_{\set{b,\ell}}=\frac{\delta_a}{\rho}\alpha_\ell
  \quad
  (\ell\notin\set{a,b}).
\end{equation}
At \(\rho=0\), use the limiting distribution concentrated on
\(\set{a,b}\).  For \(\rho>0\), the total mass is one.  Its inclusion
marginal at an outside column \(\ell\) is \(\alpha_\ell\), while
\[
  1-\rho+\delta_b=1-\delta_a=\alpha_a,
  \qquad
  1-\rho+\delta_a=1-\delta_b=\alpha_b.
\]
Thus its mean exponent vector is \(\alpha\), so \(\theta\) is feasible in
\cref{lem:capacity-certificate}.

Let
\[
  \widetilde Q(z)
  =\sum_{j<k}
    (U_{rj}U_{sk}+U_{rk}U_{sj})z_jz_k
\]
be the polynomial in \cref{eq:pair-factorization}.  Its coefficient at
\(z_az_b\) is at least \(2u_\kappa^2\).  If
\[
  V_\ell=U_{r\ell}+U_{s\ell},
\]
then the coefficients at \(z_az_\ell\) and \(z_bz_\ell\) are at least
\(u_\kappa V_\ell\).  Also,
\[
  U_{i\ell}
  =\frac{X_{i\ell}^{1+\tau}}
         {\prod_{k\ne\ell}(1-X_{ik})}
  \ge X_{i\ell}^{1+\tau}.
\]
Convexity of \(x\mapsto x^{1+\tau}\) therefore gives
\begin{equation}\label{eq:V-lower}
  V_\ell
  \ge X_{r\ell}^{1+\tau}+X_{s\ell}^{1+\tau}
  \ge2^{-\tau}\alpha_\ell^{1+\tau}.
\end{equation}

We apply \cref{lem:capacity-certificate} to \(\widetilde Q\) with the feasible
distribution \(\theta\).  Its expected log coefficient is at least
\begin{equation}\label{eq:theta-coefficient}
  (1-\rho)\log(2u_\kappa^2)
  +\rho\log u_\kappa
  +\sum_{\ell\notin\set{a,b}}\alpha_\ell\log V_\ell.
\end{equation}
Its entropy is
\begin{align}
  H(\theta)
  &=
  -(1-\rho)\log(1-\rho)
  -\sum_{\ell\notin\set{a,b}}\alpha_\ell\log\alpha_\ell\notag\\
  &\quad
  -\delta_a\log\frac{\delta_a}{\rho}
  -\delta_b\log\frac{\delta_b}{\rho}.
  \label{eq:theta-entropy}
\end{align}
Using \cref{eq:V-lower} in \cref{eq:theta-coefficient}, and then adding
\cref{eq:theta-entropy}, gives
\begin{align}
  \log\capop_\alpha(\widetilde Q)
  &\ge
  (1-\rho)\log\frac{2u_\kappa^2}{1-\rho}
  +\rho\log u_\kappa
  -\tau\rho\log2\notag\\
  &\quad
  +\tau\sum_{\ell\notin\set{a,b}}
      \alpha_\ell\log\alpha_\ell
  -\delta_a\log\frac{\delta_a}{\rho}
  -\delta_b\log\frac{\delta_b}{\rho}.
  \label{eq:capacity-theta-bound}
\end{align}

\medskip\noindent\emph{Combining the factors.}
Insert the capacity bound into \cref{eq:pair-factorization}.  Its core
coordinates contribute
\(\delta_a\log\delta_a+\delta_b\log\delta_b\).  These terms combine
with the last line of \cref{eq:capacity-theta-bound} to give
\(\rho\log\rho\).  The outside coordinates satisfy
\[
  (1-\alpha_\ell)\log(1-\alpha_\ell)\ge-\alpha_\ell.
\]
Finally, \(\zeta_r\zeta_s\le1\).  We obtain
\begin{align}
  \log\Gamma_{rs}
  &\ge
  (1-\rho)\log\frac{2u_\kappa^2}{1-\rho}
  +\rho\log u_\kappa+\rho\log\rho-\rho\notag\\
  &\quad
  -\tau\left[
    -\sum_{\ell\notin\set{a,b}}
      \alpha_\ell\log\alpha_\ell
    +\rho\log2
  \right].
  \label{eq:gain-bound}
\end{align}

The entropy in brackets is at most
\(\rho\log(n/\rho)+\rho\log2\).  By
\cref{eq:tau-definition}, \(\log n\le\ell_n\log2\), and
\(\rho\log(1/\rho)\le1/e\), its contribution after multiplication by
\(\tau\) is at most \(\xi\) times
\[
  \frac14\left(
    \rho\log2+\frac1{e\ell_n}
    +\frac{\rho\log2}{\ell_n}
  \right).
\]
For all sufficiently small \(\kappa\), we have
\(\rho\le\overline\rho_\kappa<1/10\), so
this ratio is less than one.

For the first line of \cref{eq:gain-bound}, set
\[
  f_\kappa(\rho)
  =(1-\rho)\log\frac{2e^{-2\kappa}}{1-\rho}
   -\kappa\rho+\rho\log\rho-\rho.
\]
Here \(0\log0=0\).  Since
\(\overline\rho_\kappa=2(e^\kappa-1)\to0\),
\[
  \min_{0\le\rho\le\overline\rho_\kappa}f_\kappa(\rho)
  \longrightarrow\log2
  \qquad\text{as }\kappa\downarrow0.
\]
Choose a rational \(\kappa_0>0\) small enough that this minimum is larger
than \((\log2)/2\), and then choose rational \(\xi_0,\gamma_0>0\) such that
\[
  \gamma_0<\frac12\log2-\xi_0.
\]
Taking \(\kappa=\kappa_0\) in the preceding argument,
\cref{eq:gain-bound} proves the lemma for these constants.
\end{proof}

The same witnesses make the gain algorithmic without requiring a separate
capacity optimization.  For distinct columns \(a,b\) whose outside mass
\(\rho\) is at most one, define \(\theta^{a,b}\) by
\cref{eq:theta-distribution}, with the point mass on \(\set{a,b}\) when
\(\rho=0\), and write
\[
  c_{jk}=U_{rj}U_{sk}+U_{rk}U_{sj}.
\]
Set
\begin{align}
  \log\underline\Gamma_{rs}^{a,b}
  &:=-\log\zeta_r-\log\zeta_s
    +\sum_j(1-\alpha_j)\log(1-\alpha_j)\notag\\
  &\quad+\sum_{j<k}\theta^{a,b}_{jk}
       \log\frac{c_{jk}}{\theta^{a,b}_{jk}}.
  \label{eq:witness-gain}
\end{align}
By \cref{lem:capacity-certificate,lem:pair-factorization},
\(\underline\Gamma_{rs}^{a,b}\le\Gamma_{rs}\).  More importantly, the proof
of \cref{lem:clean-gain} used precisely the witness with the two core columns;
it therefore proves the stronger conclusion
\begin{equation}\label{eq:clean-witness-gain}
  \log\underline\Gamma_{rs}^{a,b}\ge\gamma_0
\end{equation}
for that choice of \(a,b\).

For the algorithm, we assign row pair \(\set{r,s}\) the nonnegative weight
\begin{equation}\label{eq:pair-weight}
  \underline w_{rs}
  =\max\set*{0,\max_{a\ne b:\ \rho_{rs}^{a,b}\le1}
    \log\underline\Gamma_{rs}^{a,b}}.
\end{equation}
If there is no eligible column pair, the inner maximum is omitted and the
weight is zero.  The core columns of every clean pair are eligible by
\cref{eq:alpha-leakage} and the choice of \(\kappa_0\).
A maximum-weight matching in the complete graph on the rows gives at least as
much total gain as any collection of clean pairs constructed only for the
analysis, and can be found in polynomial time
\cite{Schrijver2003}.  Edges of zero weight are omitted from the final
certificate.

\section{Completing the approximation bound}\label{sec:completion}

We now combine the stability information from the Bethe upper bound with the
gain from the paired lower certificate.  Let
\(\kappa_0,\gamma_0,\xi_0\) be the constants from
\cref{lem:clean-gain}.  Enlarge \(C_{\mathrm{stab}}\), if necessary, to make
it rational.

The exact values of the constants will not matter.  We choose them once, in
the order used by the proof: \(\eta\) controls the structural errors,
\(\delta_0\) defines the near-equality regime, and \(\xi\) controls the
regularization loss.  Choose rational \(\eta,\delta_0,\xi>0\), in this order,
and put
\[
  d_0=\left(\frac{\eta}{C_{\mathrm{stab}}}\right)^4.
\]
The following three quantities will bound, respectively, the fractions of
bad rows, rows on long components, and clean pairs with excessive transfer
cost.  We require \(\eta\le1/10\) and
\begin{align}
  \varepsilon_{\mathrm{row}}
  &:=\frac{\delta_0}{d_0}\le\frac1{128},
  \label{eq:row-error-choice}\\
  \varepsilon_{\mathrm{cyc}}
  &:=\frac6{\log2}\left[
    \delta_0
    +\left(1+\frac{\log2}{2}\right)\varepsilon_{\mathrm{row}}
    +\omega(\eta)
  \right]\le\frac1{16},
  \label{eq:cycle-error-choice}\\
  \varepsilon_{\mathrm{tr}}
  &:=\frac{
    \delta_0+2\xi+\Htwo(\eta)+\eta
    +(1+\log2)\varepsilon_{\mathrm{row}}
  }{(1/2-\eta)\kappa_0}\le\frac1{16},
  \label{eq:transfer-error-choice}\\
  \xi&\le\min\set*{\xi_0,\frac{\delta_0}{2},\frac{\gamma_0}{16}}.
  \label{eq:xi-choice}
\end{align}
Such a choice exists.  First take \(\eta\) small enough to control
\(\omega(\eta)\) and \(\Htwo(\eta)+\eta\); next take \(\delta_0\) small
relative to \(d_0\); finally take \(\xi\) small.  No numerical optimization
of these constants is needed.

Let \(X\) be the regularized point from \cref{sec:transfer}.  Give every row
pair the weight in \cref{eq:pair-weight}, and let \(M\) be a maximum-weight
matching.  Delete the zero-weight edges from \(M\), and define
\begin{equation}\label{eq:positive-certificate}
  L_X(A)
  =\exp\left(\beta_A(X)
   +\sum_{\set{r,s}\in M}\underline w_{rs}\right).
\end{equation}
For every positive-weight edge, choose columns attaining the maximum in
\cref{eq:pair-weight}.  Since its witness gain is no larger than
\(\Gamma_{rs}\), \cref{thm:paired-certificate} gives
\(L_X(A)\le\per(A)\).

\begin{proposition}[Positive matrices]\label{prop:positive-matrices}
There is an absolute constant \(\eps_+>0\) such that every positive
matrix \(A\) of order \(n\ge2\) satisfies
\begin{equation}\label{eq:positive-approximation}
  L_X(A)\le\per(A)
  \le\left(\sqrt2e^{-\eps_+}\right)^nL_X(A).
\end{equation}
\end{proposition}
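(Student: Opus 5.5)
The lower bound \(L_X(A)\le\per(A)\) was already established just before the statement, so only the upper bound needs proof. We split into two regimes according to the slack \(\Delta(A)\) of \cref{eq:slack-definitions}, with the threshold \(\delta_0 n\).

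\emph{Far from equality.} If \(\Delta(A)\ge\delta_0n\), then \(\log\per(A)\le\frac n2\log2+\log\Bethe(A)-\delta_0n\). By \cref{eq:near-bethe}, \(\log\Bethe(A)\le\beta_A(X)+\xi n\). All matching weights are nonnegative, so \(\beta_A(X)\le\log L_X(A)\). Combining these with \(\xi\le\delta_0/2\) from \cref{eq:xi-choice} gives
\[
\log\per(A)\le\frac n2\log2-\frac{\delta_0}{2}n+\log L_X(A).
\]
This is the desired bound with \(\eps_+\le\delta_0/2\).

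\emph{Near equality.} Now suppose \(\Delta(A)<\delta_0n\).
\begin{enumerate}
\item \emph{Bad rows.} By \cref{eq:bad-row-count} and \cref{eq:row-error-choice}, \(b\le\varepsilon_{\mathrm{row}}n\).
\item \emph{Long cycles.} \Cref{lem:robust-cycle} together with \(D\le\Delta(A)\) from \cref{lem:slack-decomposition} gives
\[
\frac{\log2}{6}N\le\delta_0n+\Bigl(1+\frac{\log2}{2}\Bigr)b+n\omega(\eta),
\]
so \(N\le\varepsilon_{\mathrm{cyc}}n\) by \cref{eq:cycle-error-choice}.
\item \emph{Clean pairs.} By \cref{eq:clean-cycle-count}, there are at least \((n-2b-N)/2\ge(1-1/64-1/16)n/2\) clean pairs.
\item \emph{Transfer cost per pair.} On a clean pair each core entry has \(P_{ij}\ge1/2-\eta\), and each \(\log(1/U_{ij})\ge0\) by \cref{lem:U-bounds}. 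Hence the local cost in \cref{eq:small-local-cost} is at most \((1/2-\eta)^{-1}\) times that pair's contribution to \(R_{\mathrm{core}}\).
\item \emph{Expensive pairs.} Pairs with local cost above \(\kappa_0\) therefore number at most \(R_{\mathrm{core}}/((1/2-\eta)\kappa_0)\). By \cref{eq:core-transfer-normalized} and \cref{eq:transfer-error-choice}, this is at most \(\varepsilon_{\mathrm{tr}}n\le n/16\).
\item \emph{Gain.} At least roughly \(n/2-n/128-n/32-n/16\ge n/3\) clean pairs remain, say \(cn\) with \(c\) absolute. Each satisfies \cref{eq:small-local-cost}, so by \cref{eq:clean-witness-gain} it has \(\underline w_{rs}\ge\gamma_0\). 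Their core columns are eligible in \cref{eq:pair-weight}. Since clean pairs are disjoint and \(M\) is a maximum-weight matching, \(\sum_M\underline w_{rs}\ge c\gamma_0 n\).
\end{enumerate}

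Using the Bethe sandwich upper bound \(\log\per(A)\le\frac n2\log2+\log\Bethe(A)\) and \cref{eq:near-bethe} once more,
\[
\log\per(A)\le\frac n2\log2+\xi n+\log L_X(A)-c\gamma_0n.
\]
Since \(\xi\le\gamma_0/16\) is much smaller than \(c\gamma_0\), the exponent gains \(-(c-1/16)\gamma_0 n\). We take \(\eps_+=\min\{\delta_0/2,(c-1/16)\gamma_0\}\).

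The main obstacle is step 4: converting the aggregate bound on \(R_{\mathrm{core}}\) into per-pair control. This requires nonnegativity of every \(\log(1/U_{ij})\), so that the expensive pairs can be counted by Markov's inequality. It also requires the lower bound \(P_{ij}\ge1/2-\eta\) on core edges, which is exactly where the factor \((1/2-\eta)\kappa_0\) in \cref{eq:transfer-error-choice} comes from. The other steps are direct bookkeeping, checking that the error fractions \(1/128\) and \(1/16\) leave a linear number of good pairs.
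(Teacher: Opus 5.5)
Your proposal is correct and follows the paper's proof essentially step for step. You use the same dichotomy at \(\Delta(A)=\delta_0 n\), the same counts of bad rows, long-cycle rows and clean pairs, and the same Markov-type count on \(R_{\mathrm{core}}\) via \(P_{ij}\ge 1/2-\eta\) on core edges; the only differences are cosmetic choices of constants (\(\delta_0/2\) versus \(\delta_0-\xi\), and \(n/3\) versus \(3n/8\)) and your explicit mention of the nonnegativity of \(\log(1/U_{ij})\), which the paper uses implicitly.
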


\begin{proof}
The proof is a dichotomy between slack in the Bethe bound and gain from the
paired certificate.  Recall that \(b\) counts bad rows, \(N\) counts good rows
in long components of \(K\), \(D\) is the averaged sequential divergence, and
\(R_{\mathrm{core}}\) is the transfer cost on the edges of \(K\).

\paragraph*{Case 1: nonnegligible Bethe slack.}
Suppose that \(\Delta(A)\ge\delta_0n\).  Since the matching gain in
\cref{eq:positive-certificate} is nonnegative, \cref{eq:near-bethe} gives
\begin{align}
  \log\frac{\per(A)}{L_X(A)}
  &\le\log\per(A)-\beta_A(X)\notag\\
  &\le
  \left(\frac12\log2-\delta_0+\xi\right)n.
  \label{eq:far-case}
\end{align}

\paragraph*{Case 2: nearly tight Bethe bound.}
Suppose that \(\Delta(A)<\delta_0n\).  By
\cref{eq:bad-row-count,eq:row-error-choice},
\begin{equation}\label{eq:few-bad}
  b\le\varepsilon_{\mathrm{row}}n.
\end{equation}
Also \(D\le\Delta(A)\).  Rearranging \cref{eq:robust-cycle} and using
\cref{eq:cycle-error-choice} gives
\begin{equation}\label{eq:few-long}
  N\le\varepsilon_{\mathrm{cyc}}n.
\end{equation}
Hence \cref{eq:clean-cycle-count} gives at least
\[
  \frac{1-2\varepsilon_{\mathrm{row}}
    -\varepsilon_{\mathrm{cyc}}}{2}n
  \ge\frac{59}{128}n
\]
vertex-disjoint clean pairs.

By \cref{eq:core-transfer-normalized,eq:transfer-error-choice},
\begin{equation}\label{eq:small-core-transfer}
  R_{\mathrm{core}}
  \le
  \varepsilon_{\mathrm{tr}}(1/2-\eta)\kappa_0 n.
\end{equation}
If a clean pair with core columns \(a,b\) fails
\cref{eq:small-local-cost}, its contribution to \(R_{\mathrm{core}}\) is at
least \((1/2-\eta)\kappa_0\), because each of its four core \(P\)-entries is
at least \(1/2-\eta\).  Therefore at most
\(\varepsilon_{\mathrm{tr}}n\le n/16\) clean pairs fail the local-cost
condition.  At least \(3n/8\) vertex-disjoint pairs satisfy the hypothesis of
\cref{lem:clean-gain}.  The maximum-weight matching therefore has total gain
at least
\begin{equation}\label{eq:matching-gain}
  \sum_{\set{r,s}\in M}\underline w_{rs}
  \ge\frac{3\gamma_0}{8}n.
\end{equation}
Using the upper half of the Bethe sandwich and \cref{eq:near-bethe},
\begin{equation}\label{eq:near-case}
  \log\frac{\per(A)}{L_X(A)}
  \le
  \left(\frac12\log2+\xi-\frac{3\gamma_0}{8}\right)n.
\end{equation}

Define
\begin{equation}\label{eq:epsilon-plus}
  \eps_+
  =\min\set*{
    \delta_0-\xi,
    \frac{3\gamma_0}{8}-\xi
  }>0.
\end{equation}
\Cref{eq:far-case,eq:near-case} prove
\cref{eq:positive-approximation}.
\end{proof}

The proposition is the conceptual core of the theorem.  The remaining issues
are positivity and finite-precision computation; we handle them in the
appendix.

\appendix

\section{Zero entries and finite-precision computation}\label{sec:zeros}

This appendix passes from the exact positive-matrix argument to the algorithm
in \cref{thm:main}.  There are two issues: zero entries, which we remove by a
one-sided smoothing, and the finite-precision computation of the regularized
optimizer and the paired gains.  Neither issue changes the structural
argument.

\subsection{Smoothing zero entries}\label{subsec:smoothing}

We first extend the positive-matrix result to nonnegative matrices.  We choose
the smoothing to have polynomial bit complexity and to preserve a one-sided
certificate.

\begin{lemma}[Smoothing]\label{lem:smoothing}
Suppose that a nonnegative rational matrix \(A\) has largest entry at most one
and the bipartite graph of its positive entries has a perfect matching.  There
is a polynomial-time construction of a positive rational matrix \(\widetilde A\)
such that
\begin{equation}\label{eq:smoothing-comparison}
  \per(A)
  \le\per(\widetilde A)
  \le\left(1+\frac{\chi n}{2}\right)\per(A),
\end{equation}
where \(\chi>0\) is any fixed rational constant.
\end{lemma}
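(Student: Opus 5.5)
Take \(\widetilde A=A+\epsilon J\), where \(J\) is the all-ones matrix and \(\epsilon>0\) is a rational number whose bit length is polynomial in the input size.  Positivity and rationality of \(\widetilde A\) are immediate, and the lower inequality \(\per(A)\le\per(\widetilde A)\) holds because the permanent is monotone in the entries of a nonnegative matrix.  The task is therefore to choose \(\epsilon\) so that the upper inequality holds, while keeping its encoding polynomial.

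For the upper bound, expand the permanent of \(A+\epsilon J\) by the number \(k\) of positions in a permutation that use an \(\epsilon\) term:
\[
  \per(A+\epsilon J)=\sum_{k=0}^n\epsilon^k\sum_{|S|=k}(n-k)!\,\per\bigl(A_{\bar S}\bigr)\cdot(\text{column choices}),
\]
or more crudely,
\[
  \per(\widetilde A)-\per(A)\le\sum_{\sigma}\sum_{\emptyset\ne S\subseteq[n]}\epsilon^{|S|}\prod_{i\notin S}A_{i\sigma(i)}.
\]
Since every entry of \(A\) is at most one, each product is at most one, so
\[
  \per(\widetilde A)-\per(A)\le n!\bigl((1+\epsilon)^n-1\bigr)\le n!\,(e^{n\epsilon}-1)\le 2n\epsilon\, n!
\]
whenever \(n\epsilon\le1\).

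We then need a lower bound on \(\per(A)\) with polynomially many bits.  Because the support of \(A\) has a perfect matching, some permutation \(\sigma_0\) contributes the positive rational \(\prod_iA_{i\sigma_0(i)}\).  If every entry of \(A\) has numerator and denominator bounded by \(2^{B}\), each positive entry is at least \(2^{-B}\), so \(\per(A)\ge2^{-nB}\).  This bound is computable from the input size, and no perfect matching needs to be found: it applies to every positive entry.

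Finally, choose \(\epsilon=\chi\,2^{-nB}/(4\,n!)\), rounded down to a power of two if convenient.  Its bit length is \(O(nB+n\log n+\log(1/\chi))\), which is polynomial.  With this choice,
\[
  \per(\widetilde A)-\per(A)\le 2n\epsilon\, n!\le\tfrac{\chi n}{2}\,2^{-nB}\le\tfrac{\chi n}{2}\per(A),
\]
which gives \cref{eq:smoothing-comparison}.

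The main obstacle is only the bookkeeping between the crude \(n!\) overcount and the lower bound on \(\per(A)\): \(\epsilon\) must be exponentially small yet have polynomial bit length, and the estimate \(e^{n\epsilon}-1\le2n\epsilon\) requires \(n\epsilon\le1\).  This holds automatically for the choice above.  If the finer constant mattered, one could instead bound the perturbation by comparing with the single matching \(\sigma_0\), but the crude bound already suffices.
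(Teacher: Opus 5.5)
You follow the paper's argument. You shift by \(\epsilon J\), bound the increase in each permutation's weight by \((1+\epsilon)^n-1\le 2n\epsilon\), sum over the \(n!\) permutations, and compare with the weight of a single positive matching. The one real difference is how you lower-bound \(\per(A)\): you use \(2^{-nB}\) from the encoding, while the paper uses \(m^n\) with \(m\) the smallest positive entry. Either works.

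One step is false as written. The condition \(n\epsilon\le1\) does not hold automatically, because the lemma allows an arbitrary fixed rational \(\chi>0\). Your choice \(\epsilon=\chi2^{-nB}/(4\,n!)\) guarantees \(n\epsilon\le1\) only for small \(\chi\), for instance \(\chi\le4\). For large \(\chi\) the conclusion itself fails. Take \(n=2\) and the matrix with rows \((1/2,1)\) and \((0,1/2)\). Then \(B=1\) is a valid entry bound, \(\per(A)=1/4\), and \(\epsilon=\chi/32\). This gives \(\per(A+\epsilon J)=1/4+2\epsilon+2\epsilon^2\), while \((1+\chi n/2)\per(A)=1/4+8\epsilon\). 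The required inequality fails once \(\epsilon>3\), that is, once \(\chi>96\) (or \(\chi\ge128\) if you round \(\epsilon\) down to a power of two).

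The fix is the cap the paper builds in: take \(\epsilon=\min\{1/n,\ \chi2^{-nB}/(4\,n!)\}\). The paper uses \(\delta_{\mathrm{sm}}=\min\{1/(2n),\ \chi m^n/(4\,n!)\}\) together with \((1+\delta)^n\le(1-n\delta)^{-1}\). In the paper's application \(\chi=\eps_+/2\) is small, so your uncapped choice would suffice there, but it does not prove the lemma as stated.
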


\begin{proof}
Let \(m\) be the smallest positive entry of \(A\), and put
\begin{equation}\label{eq:smoothing-level}
  \delta_{\mathrm{sm}}
  =\min\set*{
    \frac1{2n},
    \frac{\chi m^n}{4n!}
  },
  \qquad
  \widetilde A=A+\delta_{\mathrm{sm}}J.
\end{equation}
Here \(J\) is the all-ones matrix.
For every permutation \(\sigma\),
\[
  0\le
  \prod_i(A_{i,\sigma(i)}+\delta_{\mathrm{sm}})
  -\prod_iA_{i,\sigma(i)}
  \le(1+\delta_{\mathrm{sm}})^n-1
  \le2n\delta_{\mathrm{sm}}.
\]
For the last inequality, use
\((1+\delta_{\mathrm{sm}})^n
\le\sum_{k\ge0}(n\delta_{\mathrm{sm}})^k
=(1-n\delta_{\mathrm{sm}})^{-1}\) and
\(n\delta_{\mathrm{sm}}\le1/2\).
Summing over the \(n!\) permutations gives
\[
  0\le\per(\widetilde A)-\per(A)
  \le2nn!\delta_{\mathrm{sm}}
  \le\frac{\chi n}{2}m^n.
\]
The positive support contains a perfect matching, so
\(\per(A)\ge m^n\).  This proves \cref{eq:smoothing-comparison}.  The bit
length of \(\delta_{\mathrm{sm}}\) is polynomial in the bit length of \(A\)
and in \(n\).
\end{proof}

\subsection{Certified finite-precision computation}
\label{subsec:finite-precision}

It remains to compute the regularized point and the explicit witness gains
with certified precision while preserving the lower-bound direction.  We
first bound the optimizer away from the boundary.  We then solve a bounded
epigraph problem by weak optimization, recover approximate KKT potentials,
and interpret the result as an exact optimizer for a nearby matrix.  This last
step lets us reuse the exact certificate proved in the main argument.

\begin{lemma}[Interior bound]\label{lem:interior-bound}
Assume \(n\ge2\) and \(m\le A_{ij}\le1\) for every \(i,j\), where \(m>0\),
and let \(X\) be the maximizer of \(\Phi_\tau\).  Then
\begin{equation}\label{eq:interior-bound}
  \log\frac1{\min_{i,j}X_{ij}}
  \le
  \frac{nR_A}{\tau}+n^2\log n,
\end{equation}
where one may take
\begin{equation}\label{eq:objective-range-bound}
  R_A=n\log\frac1m+n\log n+n.
\end{equation}
In particular, if \(A\) is rational, every \(X_{ij}\) and every
\(1-X_{ij}\) is at least \(2^{-\poly(n,B,1/\tau)}\), where \(B\) is the input
bit length.
\end{lemma}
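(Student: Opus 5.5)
The plan is to make the interiority argument of \cref{lem:regularized-optimizer} quantitative. I will use first-order optimality of \(X\) in the direction of the barycenter \(W=J/n\), rather than the chord comparison. Since \(X\) maximizes the concave function \(\Phi_\tau\) over \(\cB_n\), the one-sided directional derivative satisfies
\[
  \frac{d}{dt}\Big|_{t=0^+}\Phi_\tau\bigl((1-t)X+tW\bigr)\le0 .
\]
I will split \(\Phi_\tau=\beta_A+\tau\sum_iH(\cdot_i)\) and bound the derivative of each piece from below. For any concave \(f\) on a segment, \(f'(0^+)\ge f(1)-f(0)\). Hence the \(\beta_A\)-part has derivative at least \(\beta_A(W)-\beta_A(X)\).

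First I will check that \(R_A\) controls this difference. \Cref{thm:bethe-sandwich} gives \(\beta_A(X)\le\log\Bethe(A)\le\log\per(A)\le\log n!\le n\log n\), since \(A_{ij}\le1\). A direct evaluation at \(W\) gives
\[
  \beta_A(W)\ge n\log m+n\log n+n^2\bigl(1-\tfrac1n\bigr)\log\bigl(1-\tfrac1n\bigr)\ge n\log m-n,
\]
using \((1-1/n)\log(1-1/n)\ge-1/n\). Thus \(\beta_A(W)-\beta_A(X)\ge-R_A\), with \(R_A\) as in \cref{eq:objective-range-bound} (the bound has room to spare).

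Next I will handle the regularizer row by row. For rows \(i'\) not containing the minimal entry, \(H\) is concave and \(H(W_{i'})=\log n\ge H(X_{i'})\), so their derivative contributions are nonnegative. For the row \(i\) containing \(x=X_{ij}=\min X\), the derivative is \(\sum_k(1/n-X_{ik})(-\log X_{ik}-1)\).
\begin{itemize}
  \item The \(k=j\) term is at least \((1/n-x)(\log(1/x)-1)\) when \(x<1/n\).
  \item Each other term is at least \(\phi(1/n)-\phi(X_{ik})\ge-1/e\), by concavity of \(\phi(y)=-y\log y\) applied coordinatewise.
\end{itemize}
Combining these estimates gives
\[
  0\ge-R_A+\tau\Bigl[\bigl(\tfrac1n-x\bigr)\bigl(\log\tfrac1x-1\bigr)-\tfrac ne\Bigr].
\]
We may assume \(x\le1/(2n)\), since otherwise the claim is trivial. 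Then \(\log(1/x)\le 2nR_A/\tau+O(n^2)\). The constant \(2\) should be absorbed by a slightly sharper choice: for example, bound \(\beta_A(W)-\beta_A(X)\) more tightly, or split at \(x\le 1/n^2\). The lower-order terms are absorbed into \(n^2\log n\).

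For \(1-X_{ij}\), I will use the row constraint: \(1-X_{ij}=\sum_{k\ne j}X_{ik}\ge\min X\). So the same bound applies. The final polynomial bit-length claim then follows from \(\log(1/m)\le B\) and \(\tau=\xi/(4\ell_n)\).

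The main obstacle is the bookkeeping that isolates the single \(\log(1/x)\) blow-up. The derivative of \(\beta_A\) itself contains terms such as \(-\log X_{ik}\) and \(-\log(1-X_{ik})\) that are not individually bounded. Passing through the concavity bound \(f'(0^+)\ge f(W)-f(X)\) for \(\beta_A\) as a whole is what avoids this. Matching the exact constants \(nR_A/\tau+n^2\log n\) may require adjusting where the factor \(1/n-x\) is absorbed.
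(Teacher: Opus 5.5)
The gap is in your last step. Your setup is the paper's own: first-order optimality of \(X\) toward \(W=J/n\), the chord bound \(\beta_A'(X;W-X)\ge\beta_A(W)-\beta_A(X)\ge-R_A\) for the Bethe part, and a lower bound on the entropy derivative. Your checks of \(R_A\), of the rows not containing the minimum, and of \(1-X_{ij}\ge\min X\) are all correct.

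Lower-bounding the factor \(\frac1n-x\) by \(\frac1{2n}\) only gives \(\log\frac1x\le\frac{2nR_A}{\tau}+\frac{2n^2}{e}+1\). That is not \cref{eq:interior-bound}, and neither of your suggested repairs recovers it. The surplus is a multiple of \(R_A/\tau\), or of \((n\log\frac1m+n)/\tau\) after the sharper Bethe estimate. It therefore grows without bound as \(m\to0\) and cannot be absorbed into \(n^2\log n\). Splitting at \(x\le1/n^2\) still leaves the coefficient \(\frac{n^2}{n-1}>n\) in front of it. The sharper estimate \(\beta_A(W)-\beta_A(X)\ge-(n\log\frac1m+n)\) frees only \(n^2\log n/\tau\) of room, which cannot pay for a term growing like \(\log(1/m)/\tau\). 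The constant matters downstream: \cref{lem:certified-computation} uses exactly this bound to justify the floor \(\delta_{\mathrm{int}}=2^{-\lceil 2K_0\rceil}\), and a factor-\(2\) loss would break it.

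The missing step is to keep the coefficient \(\frac1n\) rather than \(\frac1n-x\). Since \(\sum_k(\frac1n-X_{ik})=0\), the derivative of \(H(X_i)\) toward \(W_i\) is
\[
  \sum_k\Bigl(\frac1n-X_{ik}\Bigr)\log\frac1{X_{ik}}
  =\frac1n\sum_k\log\frac1{X_{ik}}-H(X_i)
  \ge\frac1n\log\frac1x-\log n .
\]
Your inequality then becomes \(0\ge-R_A+\tau(\frac1n\log\frac1x-\log n)\), that is, \(\log\frac1x\le\frac{nR_A}{\tau}+n\log n\), which is stronger than required.

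Equivalently, in your term-by-term version, write \((\frac1n-x)\log\frac1x=\frac1n\log\frac1x-x\log\frac1x\) and use \(x\log\frac1x\le\frac1e\). This gives \(\log\frac1x\le\frac{nR_A}{\tau}+\frac{n^2}{e}+1\le\frac{nR_A}{\tau}+n^2\log n\) for \(n\ge2\).

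The paper does the same thing globally. Because \(J/n-X\) has coordinate sum zero, the whole entropy derivative equals \(\sum_{i,j}X_{ij}\log X_{ij}-\frac1n\sum_{i,j}\log X_{ij}\ge-n\log n+\frac1n\log\frac1{x_\star}\), which is where the \(n^2\log n\) comes from. The paper also bounds \(\beta_A(X)\le n\log n\) termwise, without invoking the Bethe sandwich.
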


\begin{proof}
Let \(J\) be the all-ones matrix, set \(\overline X=J/n\), and write
\[
  \Phi_\tau(Z)=\beta_A(Z)+\tau\sum_iH(Z_i).
\]
The range of \(\beta_A\) on \(\cB_n\) is at most \(R_A\): the linear term
\(\sum_{i,j}Z_{ij}\log A_{ij}\) lies between \(n\log m\) and zero, the row
entropy is at most \(n\log n\), and
\[
  -\sum_{i,j}Z_{ij}
  \le \sum_{i,j}(1-Z_{ij})\log(1-Z_{ij})\le0.
\]
Here the first inequality follows from
\(-x\le(1-x)\log(1-x)\), and \(\sum_{i,j}Z_{ij}=n\).

Since \(X\) is an interior maximizer, the directional derivative of
\(\Phi_\tau\) at \(X\) toward \(\overline X\) is zero.  Concavity of
\(\beta_A\) gives
\[
  \beta_A'(X;\overline X-X)
  \ge\beta_A(\overline X)-\beta_A(X)\ge-R_A.
\]
Consequently
\begin{equation}\label{eq:entropy-direction-upper}
  \left(\sum_iH(X_i)\right)'(X;\overline X-X)\le\frac{R_A}{\tau}.
\end{equation}
The entropy derivative is
\begin{align*}
  -\sum_{i,j}\left(\frac1n-X_{ij}\right)(1+\log X_{ij})
  &=
  \sum_{i,j}X_{ij}\log X_{ij}
  -\frac1n\sum_{i,j}\log X_{ij},
\end{align*}
because the direction has coordinate sum zero.  Let
\(x_\star=\min_{i,j}X_{ij}\).  We have
\[
  \sum_{i,j}X_{ij}\log X_{ij}\ge-n\log n,
  \qquad
  -\frac1n\sum_{i,j}\log X_{ij}
  \ge\frac1n\log\frac1{x_\star}.
\]
Substitute these estimates into \cref{eq:entropy-direction-upper} to obtain
\cref{eq:interior-bound}.  Finally,
\[
  1-X_{ij}=\sum_{k\ne j}X_{ik}\ge(n-1)x_\star,
\]
which gives the same polynomial logarithmic bound for \(1-X_{ij}\).
\end{proof}

\begin{lemma}[Certified computation]\label{lem:certified-computation}
There is an absolute constant \(C_{\mathrm{num}}\) with the following
property.  For every positive rational matrix \(A\) of order \(n\ge2\) and
every rational \(\varepsilon_{\mathrm{num}}>0\), a deterministic algorithm
computes a rational number \(\widehat L\) such that
\begin{equation}\label{eq:certified-approximation}
  \widehat L\le\per(A)
  \le
  \left(\sqrt2e^{-\eps_++C_{\mathrm{num}}\varepsilon_{\mathrm{num}}}\right)^n
  \widehat L.
\end{equation}
The running time is polynomial in the input bit length, \(n\), and
\(\log(2+1/\varepsilon_{\mathrm{num}})\).
\end{lemma}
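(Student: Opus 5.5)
We will reduce the certified computation to the exact statement of \cref{prop:positive-matrices}, applied to a slightly perturbed matrix. First normalize \(A\) by scaling so its largest entry is one. This changes \(\per(A)\) and every certificate by the same factor, so we may assume \(m\le A_{ij}\le1\). By \cref{lem:interior-bound}, the maximizer \(X\) of \(\Phi_\tau\) has all entries and all complements \(1-X_{ij}\) at least \(2^{-\poly(n,B)}\), since \(\tau=\xi/(4\ell_n)\) with \(\xi\) an absolute constant. Hence \(\Phi_\tau\) is smooth and strongly concave on the region containing \(X\), with polynomially bounded gradient and bounded Hessian there. We then run a weak-optimization method, e.g.\ the ellipsoid method on the epigraph problem restricted to \(\cB_n\cap\{Z_{ij}\ge x_\star\}\). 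This produces a rational \(\widehat X\) in the interior whose objective value is within \(2^{-\poly}\) of optimal. Strong concavity (the entropy term contributes curvature at least \(\tau\)) then gives entrywise closeness of \(\widehat X\) to \(X\) with any prescribed polynomial number of bits.

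The key step is to avoid any error analysis of the structural argument. We do this by making \(\widehat X\) an exact regularized optimizer for a nearby matrix. Define the residual matrix \(G_{ij}=\log A_{ij}-(1+\tau)\log\widehat X_{ij}-\log(1-\widehat X_{ij})\). Fit row and column potentials \(\widehat r_i,\widehat c_j\), for instance by least squares or by reading them off a spanning tree of entries. Then set \(\widehat A_{ij}=\widehat r_i\widehat c_j\widehat X_{ij}^{1+\tau}(1-\widehat X_{ij})\). By construction \(\widehat X\) satisfies the KKT equations \cref{eq:kkt} for \(\widehat A\), so by strict concavity it is exactly the maximizer of \(\Phi_\tau\) for \(\widehat A\).

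It remains to check that \(\widehat A\) is close to \(A\). Since \(X\) satisfies the KKT equations exactly for \(A\) and \(\widehat X\approx X\), we get \(\max_{i,j}\abs{\log(\widehat A_{ij}/A_{ij})}\le\varepsilon_{\mathrm{num}}\) once the optimization accuracy is \(2^{-\poly(n,B,\log(1/\varepsilon_{\mathrm{num}}))}\). The continuity of \(\log(1-x)\) and \(\log x\) near \(X\) is controlled by the interior bound. Such an entrywise multiplicative perturbation changes \(\per\) by at most \(e^{\pm n\varepsilon_{\mathrm{num}}}\).

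Now apply \cref{prop:positive-matrices} to \(\widehat A\) with the point \(\widehat X\): we get \(L_{\widehat X}(\widehat A)\le\per(\widehat A)\le(\sqrt2e^{-\eps_+})^nL_{\widehat X}(\widehat A)\). The certificate itself is evaluated from rational approximations of \(\beta_{\widehat A}(\widehat X)\), of the \(\widehat U_{ij}\), of \(\widehat\zeta_i\), and of the witness gains \cref{eq:witness-gain}. These involve only logs and powers of quantities bounded away from \(0\) by \(2^{-\poly}\), so they can be computed to additive accuracy \(\varepsilon_{\mathrm{num}}\) per row in polynomial time. We always round down, and the maximum-weight matching is computed on the rounded weights. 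Rounding down keeps the one-sided inequality \cref{eq:paired-lower-bound}, while the upper direction loses at most \(e^{O(n\varepsilon_{\mathrm{num}})}\). Finally take \(\widehat L=e^{-n\varepsilon_{\mathrm{num}}}\cdot(\text{rounded-down certificate for }\widehat A)\), rounded down to a rational. Combining the three multiplicative losses gives \cref{eq:certified-approximation} with, say, \(C_{\mathrm{num}}=4\).

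The main obstacle is the perturbation step: we need a quantitative bound showing that an approximately optimal \(\widehat X\) yields KKT residuals that are nearly of rank-one additive form, uniformly in size \(2^{-\poly}\). My first approach would use strong concavity of \(\tau\sum_iH(Z_i)\) restricted to the interior box to get \(\|\widehat X-X\|\) small. Lipschitz bounds on \(\log x\), \(\log(1-x)\) over \([x_\star,1-x_\star]\) would then transfer this to the residual \(G\). The exact potentials of \(X\) serve as the fitted potentials, and the error is absorbed into \(\widehat A\).
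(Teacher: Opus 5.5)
Your proposal is correct and takes essentially the same route as the paper. The steps match: normalize, use \cref{lem:interior-bound} to run weak (ellipsoid) optimization on the epigraph over a truncated Birkhoff polytope, and use \(\tau\)-strong concavity to control \(\widehat X-X\). Then read row/column potentials off a spanning tree, making \(\widehat X\) the exact regularized optimizer of an entrywise \(e^{\pm\varepsilon_{\mathrm{num}}}\)-close matrix. Finally apply \cref{prop:positive-matrices} to that matrix, with rounded-down evaluation of the witness gains and a maximum-weight matching on the rounded weights. The paper additionally supplies the implementation details you leave implicit: the bounded epigraph with explicit inner and outer radii, the weak separation oracle, the fixed-precision ellipsoid, and a parametrization guaranteeing that \(\widehat X\) has exactly unit row and column sums. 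It also defines the nearby matrix from the computed potentials (the rational \(\lambda_i,\nu_j\)), with the exact potentials of \(X\) entering only the error bound. You should do the same, so that \(\log\widehat A_{ij}\) and hence \(\beta_{\widehat A}(\widehat X)\) can actually be evaluated with directed rounding.
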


\begin{proof}
Replacing \(\varepsilon_{\mathrm{num}}\) by
\(\min\set*{\varepsilon_{\mathrm{num}},1}\), we may assume it is at most
one.  Recall that \(\tau=\xi/(4\ell_n)\) is rational and that both its bit
length and \(1/\tau\) are polynomially bounded.

\medskip\noindent\emph{A compact optimization domain.}
After a global scaling, assume \(m\le A_{ij}\le1\), where \(m\) is the
smallest entry of the scaled matrix, and let \(B\) be its input bit length.  By
\cref{lem:interior-bound}, the optimizer belongs to the truncated Birkhoff
polytope
\[
  \cB_n(\delta)
  =\set*{Z\in\cB_n\given Z_{ij}\ge\delta\ \forall i,j}
\]
for an explicitly computable rational
\(\delta_{\mathrm{int}}=2^{-\poly(n,B,1/\tau)}\).  Concretely, increase
\(B\) if necessary so that \(m\ge2^{-B}\), and put
\[
  K_0=\frac{n(nB+2n^2)}{\tau}+n^3,
  \qquad q=\ceil{2K_0},
  \qquad \delta_{\mathrm{int}}=2^{-q}.
\]
The inequalities \(\log 2\le1\) and \(\log 2\ge1/2\) show directly from
\cref{eq:interior-bound} that this floor is valid.  We optimize over
\(\cB_n(\delta_{\mathrm{int}}/2)\), which still contains the exact optimizer.
Since \(q\ge n+1\),
\(\delta_{\mathrm{int}}\le1/(2n)\).
On the slightly larger set \(\cB_n(\delta_{\mathrm{int}}/4)\), both
\(Z_{ij}\) and \(1-Z_{ij}\) are bounded below by
\(\delta_{\mathrm{int}}/4\).

Write \(f=-\Phi_\tau\).  Its gradient is
\begin{equation}\label{eq:negative-regularized-gradient}
  \nabla f(Z)_{ij}
  =-\log A_{ij}+(1+\tau)\log Z_{ij}
    +\log(1-Z_{ij})+(2+\tau).
\end{equation}
On the larger truncated set, its Frobenius norm is at most
\[
  G=n\bigl[B+(2+\tau)(q+3)\bigr],
\]
where we used \(\log2\le1\).  Thus \(f\) is \(G\)-Lipschitz there.  Its
gradient is \(L\)-Lipschitz there, with
\begin{equation}\label{eq:gradient-lipschitz}
  L=\frac{4(2+\tau)}{\delta_{\mathrm{int}}}.
\end{equation}
Both \(G\) and \(\log L\) have polynomial size.

\medskip\noindent\emph{A bounded epigraph and its separation oracle.}
We eliminate the row- and column-sum equations by rational affine coordinates
and apply weak linear optimization to the epigraph of \(f\).  We record the
geometry needed for this reduction.  Use the upper-left
\((n-1)\)-by-\((n-1)\) block as coordinates and recover the last row and
column from the sum constraints.  The resulting affine map \(T\) has norm at
most \(n\) from Euclidean coordinate error to Frobenius error.  Also
\[
  \abs{f(Z)}\le M_0:=nB+(1+\tau)n^2+n
  \qquad (Z\in\cB_n).
\]
We use the bounded epigraph
\[
  K=\set*{(y,t)\given T(y)\in\cB_n(\delta_{\mathrm{int}}/2),
       \ f(T(y))\le t\le M_0+2}.
\]
Since \(\norm{y}_2\le n\) and \(-M_0\le t\le M_0+2\) on \(K\), it lies
in the ball centered at the origin of rational radius
\(R=M_0+n+2\).  If \(y_U\)
represents the uniform matrix, then \((y_U,M_0+1)\) is the center of a ball
in \(K\) of radius at least
\[
  r=\min\set*{\frac14,\frac1{4n^2},\frac1{4nG}}.
\]
Indeed, changing \(y_U\) by Euclidean distance at most \(r\) changes every
entry of \(T(y_U)\) by at most \(nr\le1/(4n)\), so all entries remain above
\(\delta_{\mathrm{int}}/2\).  Changing \(t\) by at most \(r\le1/4\)
preserves its upper bound.  Finally,
\(f(T(y))\le f(T(y_U))+Gnr\le M_0+1/4\), whereas
\(t\ge M_0+3/4\), so the lower epigraph inequality also holds.
Thus the outer radius and \(\log(1/r)\) have polynomial bit length, exactly as
required by the weak optimization theorem of
\cite{GroetschelLovaszSchrijver1981}.

The weak separation oracle first checks all rational linear inequalities
defining \(\cB_n(\delta_{\mathrm{int}}/2)\) and the upper bound on \(t\)
exactly.  At a point passing these checks, directed approximations to
\cref{eq:negative-regularized-gradient} and to \(f\) give the usual supporting
hyperplane to the epigraph, with normal
\((T_{\mathrm{lin}}^*\nabla f(T(y)),-1)\).  Convexity implies that its value
on the displacement from the query to any \((z,s)\in K\) satisfies
\[
  \left\langle
    (T_{\mathrm{lin}}^*\nabla f(T(y)),-1),(z-y,s-t)
  \right\rangle
  \le t-f(T(y)).
\]
If the sign of the violation \(f(T(y))-t\) cannot be certified at the
requested precision, the oracle declares the query to be in the prescribed
weak enlargement.
This is the weak-separation model used in the separation--optimization
reduction.  Taking the value error and the gradient error times the outer
radius below a fixed fraction of the requested weak tolerance makes each
reported cut valid.  Evaluation to \(p\) directed bits takes time polynomial
in \(n,B,q,p\), so this additional accuracy is polynomially bounded.

\medskip\noindent\emph{A rounded ellipsoid implementation.}
We use a fixed dyadic precision for the entire ellipsoid run.  This avoids
evaluating the determinant of an intermediate rational matrix.  To see that
such a precision exists, clear the denominators of the initial rational
basis.  If their least common multiple is \(D\), nonsingularity gives the
explicit lower bound \(D^{-d}\) on the absolute determinant, where \(d\) is
the dimension of the epigraph.  Choose \(L\) so that \(2^{-L}\le D^{-d}\),
and choose \(K\) so that the initial center and basis have total absolute
magnitude at most \(2^K\).  After \(t\) rounded central cuts, the determinant
is at least \(2^{-(L+2t)}\), while the total magnitude is at most
\(2^{K+t(6+3d)}\).  The first estimate follows because the exact central cut
retains at least half of the determinant and sufficiently fine rounding
retains another half.  The second follows from the entrywise central-cut and
rounding bounds.  It is therefore enough to use, at every iteration, the
single precision
\[
  P=(L+2T+1)+12+8d+d^2
    +d\bigl(K+T(6+3d)+3+d\bigr)+2,
\]
where \(T\) is the iteration budget.  The determinant perturbation bound and
the adjugate formula show that flooring the center and basis to the
\(2^{-P}\) grid, followed by inflation by
\(1+1/(1024d^4)\), preserves the surviving half-ellipsoid.  The same estimates
give determinant contraction by a factor at most
\(1-1/(32d^3)\).  Thus \(P\) has polynomial length, is computed before the
run from rational data, and controls every stored state.

\medskip\noindent\emph{Recovering approximate KKT potentials.}
Let \(\Delta_{\mathrm{obj}}>0\) be the desired objective error.  Invoke weak
optimization with accuracy
\begin{equation}\label{eq:weak-optimization-accuracy}
  \sigma\le
  \min\set*{
    \frac{\delta_{\mathrm{int}}}{4n},
    \frac{\Delta_{\mathrm{obj}}}{2+nG}
  }.
\end{equation}
Write the rational output as \((\widehat y,\widehat t)\).  It is within
\(\sigma\) of some \((y',t')\in K\), and \(\widehat t\) is within
\(\sigma\) above the minimum.  Mapping \(\widehat y\) through \(T\) gives a
rational matrix \(\widehat X\).  Its
row and column sums are exactly one.  The first bound in
\cref{eq:weak-optimization-accuracy} and the norm bound for \(T\) imply
\(\widehat X_{ij}\ge\delta_{\mathrm{int}}/4\); exact row sums then give the
same lower bound for \(1-\widehat X_{ij}\).  The second bound and Lipschitzness
give, more explicitly,
\[
  f(T(\widehat y))
  \le f(T(y'))+nG\sigma
  \le t'+nG\sigma
  \le \widehat t+(1+nG)\sigma
  \le \min_K t+(2+nG)\sigma.
\]
Consequently,
\begin{equation}\label{eq:weak-objective-error}
  \Phi_\tau(X)-\Phi_\tau(\widehat X)\le\Delta_{\mathrm{obj}}.
\end{equation}
Thus the fact that a weak optimizer need not return a point in its input body
does not compromise exact double stochasticity.

The function \(f\) is \(\tau\)-strongly convex on the affine hull of
\(\cB_n\), because the negative Shannon term has Hessian
\(\tau\diag(1/Z_{ij})\succeq\tau I\), while concavity of the Bethe objective
makes the remaining Hessian positive semidefinite on the tangent space.
Strong convexity gives
\[
  \norm{\widehat X-X}_F
  \le
  \sqrt{2(\Phi_\tau(X)-\Phi_\tau(\widehat X))/\tau}.
\]
Set
\(\Delta_{\mathrm{obj}}
=\tau\varepsilon_{\mathrm{num}}^2/(128L^2)\) in
\cref{eq:weak-optimization-accuracy}.  Then the two gradients are within
\(\varepsilon_{\mathrm{num}}/8\) in Frobenius norm.  Approximate
the gradient rationally to accuracy \(\varepsilon_{\mathrm{num}}/16\).  The
exact optimizer's gradient has the form \((R_i+C_j)_{ij}\) by the KKT
equations.  Fix a row \(i_0\) and a column \(j_0\), let \(\widetilde G\) be
the rational gradient approximation, and set
\[
  \widetilde R_i=\widetilde G_{i j_0},
  \qquad
  \widetilde C_j=\widetilde G_{i_0j}-\widetilde G_{i_0j_0}.
\]
For an exact row-plus-column matrix this reconstruction is an identity.  If
each entry of \(\widetilde G\) is within \(a\) of such a matrix, its residual
is at most \(4a\), by the triangle inequality on the corresponding
four-cycle.  Our preceding choices make the residual, including the error in
evaluating the gradient at \(\widehat X\), at most
\(\varepsilon_{\mathrm{num}}\).  Set
\(\lambda_i=-\widetilde R_i\) and
\(\nu_j=2+\tau-\widetilde C_j\).  These rational potentials satisfy
\begin{equation}\label{eq:approximate-kkt}
  \left|
    \log A_{ij}-\lambda_i-\nu_j
    -(1+\tau)\log\widehat X_{ij}
    -\log(1-\widehat X_{ij})
  \right|\le\varepsilon_{\mathrm{num}}.
\end{equation}

\medskip\noindent\emph{Exactification by a nearby matrix.}
Define the nearby positive matrix
\begin{equation}\label{eq:nearby-matrix}
  A'_{ij}
  =e^{\lambda_i+\nu_j}
   \widehat X_{ij}^{1+\tau}(1-\widehat X_{ij}).
\end{equation}
Then \(\widehat X\) satisfies the exact KKT equations for \(A'\), and concavity
shows that it is the exact regularized optimizer.  Moreover,
\begin{equation}\label{eq:nearby-comparison}
  e^{-\varepsilon_{\mathrm{num}}}A'
  \le A\le e^{\varepsilon_{\mathrm{num}}}A'
\end{equation}
entrywise.  Hence
\begin{equation}\label{eq:per-nearby-comparison}
  e^{-\varepsilon_{\mathrm{num}}n}\per(A')
  \le\per(A)
  \le e^{\varepsilon_{\mathrm{num}}n}\per(A').
\end{equation}

\medskip\noindent\emph{One-sided evaluation of the paired certificate.}
The gain does not require another convex program.  For every row pair and
every pair of distinct candidate core columns, compute the outside mass and
retain the candidate only if it is at most one.  Then form the rational
distribution \(\theta^{a,b}\) in \cref{eq:theta-distribution}; the case
\(\rho=0\) is detected exactly and uses the point mass on \(\set{a,b}\).
These are exactly the finitely many witness gains in
\cref{eq:witness-gain}.  There are only \(O(n^4)\) of them.  Every retained
\(\theta^{a,b}\) satisfies its mass and marginal constraints exactly, so
\cref{lem:capacity-certificate} preserves the lower-bound direction without
any feasibility recovery.

Logarithms and exponentials at rational arguments can be evaluated to \(p\)
directed bits in time polynomial in the input length and \(p\)
\cite{Brent1976}.  We use outward-rounded interval arithmetic for all
elementary expressions generated from the rational inputs.  There is no need
to materialize the generally irrational matrix \(A'\): inside
\(\beta_{A'}(\widehat X)\) we use
\[
  \log A'_{ij}=\lambda_i+\nu_j
    +(1+\tau)\log\widehat X_{ij}+\log(1-\widehat X_{ij}),
\]
and the transfer coefficients are evaluated in logarithmic form from
\[
  \log U_{ij}=(1+\tau)\log\widehat X_{ij}
    -\sum_{k\ne j}\log(1-\widehat X_{ik}).
\]
The quantities \(\rho,\delta_a,\delta_b\) and every nonzero witness mass are
rational.  Zero masses are recognized exactly and their entropy terms are set
to zero.  We evaluate \(\beta_{A'}(\widehat X)\) and every witness log-gain,
always taking the lower endpoint.  For each row pair, take the largest lower
endpoint over the candidate columns, truncate it at zero, and compute a
maximum-weight matching.
If every edge weight is underestimated by at most
\(\varepsilon_{\mathrm{num}}\), the computed matching loses at most
\(n\varepsilon_{\mathrm{num}}/2\) relative to the exact witness-weight
matching, because both matchings have at most \(n/2\) edges.  Evaluating each
of the polynomially many elementary terms to
\(\varepsilon_{\mathrm{num}}/\poly(n)\) therefore gives a rational value
\(\widehat L'\) such that
\[
  \widehat L'\le L_{\widehat X}(A'),
  \qquad
  \log\widehat L'
  \ge\log L_{\widehat X}(A')-C_0\varepsilon_{\mathrm{num}}n
\]
for an absolute constant \(C_0\).  All inequalities are one-sided, so
\(\widehat L'\le\per(A')\) by \cref{thm:paired-certificate} and
\cref{eq:witness-gain}.

\medskip\noindent\emph{Assembling the output.}
Finally compute a rational \(\widehat L\) satisfying
\[
  e^{-2\varepsilon_{\mathrm{num}}n}\widehat L'
  \le \widehat L
  \le e^{-\varepsilon_{\mathrm{num}}n}\widehat L'.
\]
Such a number is obtained by directed evaluation with polynomially many bits.
The lower bound in
\cref{eq:certified-approximation} follows from
\cref{eq:per-nearby-comparison}.  Since \(\widehat X\) is the exact
regularized optimizer for \(A'\), \cref{prop:positive-matrices} gives
\[
  \per(A')
  \le
  \left(\sqrt2e^{-\eps_+}\right)^nL_{\widehat X}(A').
\]
Combining this inequality with \cref{eq:per-nearby-comparison} and the loss in
\(\widehat L'\) proves the upper bound in
\cref{eq:certified-approximation} with
\(C_{\mathrm{num}}=C_0+3\).  The initial scaling can be undone using the
degree-\(n\) homogeneity of the permanent and the certificate.  All requested
precisions have polynomial bit length, and every rational intermediate and
the final output have polynomial bit length.  Rational arithmetic, weak
optimization, directed elementary-function evaluation, and weighted matching
therefore take polynomial time in the ordinary binary encoding of the input.
\end{proof}

\subsection{Proof of the main theorem}\label{subsec:main-proof}

\begin{proof}[Proof of \cref{thm:main}]
The case \(n=1\) is exact.  For \(n\ge2\), first test by bipartite matching
whether the positive support of \(A\) contains a perfect matching.  If not,
return zero.

Otherwise scale \(A\) by its largest entry.  The constant \(\eps_+\) in
\cref{eq:epsilon-plus} is positive and rational.  Decreasing it if necessary,
we may assume that \(\eps_+\le(\log 2)/2\); this only weakens
\cref{eq:positive-approximation} and ensures that the final approximation
base is larger than one.
Apply \cref{lem:smoothing} with \(\chi=\eps_+/2\), and apply
\cref{lem:certified-computation} to the smoothed matrix with
\begin{equation}\label{eq:numerical-accuracy-choice}
  \varepsilon_{\mathrm{num}}
  \le\frac{\eps_+}{4C_{\mathrm{num}}}.
\end{equation}
By \cref{lem:certified-computation}, the smoothed permanent is bounded by the
computed certificate with base
\[
  \sqrt2e^{-\eps_++C_{\mathrm{num}}\varepsilon_{\mathrm{num}}}.
\]

If \(\widetilde L\) is the certified lower bound for the smoothed permanent,
put
\[
  L=\frac{\widetilde L}{1+\chi n/2}.
\]
By \cref{eq:smoothing-comparison}, \(L\le\per(A)\).  On the logarithmic scale,
the division loses at most
\[
  \log\left(1+\frac{\chi n}{2}\right)\le\frac{\chi n}{2}.
\]
Our choices of \(\chi\) and \(\varepsilon_{\mathrm{num}}\) leave at least
\(\eps_+/2\) in the exponent.  Thus the final approximation base can be taken
to be
\begin{equation}\label{eq:final-base}
  c=\sqrt2e^{-\eps_+/2}<\sqrt2.
\end{equation}
Undoing the initial scaling multiplies both the output and the permanent by
the same factor to the \(n\)th power.  Every step is deterministic and
polynomial-time.
\end{proof}

\section{Provenance}\label{app:provenance}

The proof in this paper originated in an interaction between the author and
ChatGPT 5.6 Sol Pro.  The author supplied the high-level plan of attack and
guided the development of the argument.  Within this plan, the model proposed
the paired stable-polynomial certificate, the exact-slack strategy, and the
transfer through a regularized Bethe point.  The author subsequently verified
the results.

Codex assisted with alternating manuscript expansion and critical proof passes,
including symbolic and numerical checks, literature verification, assembly,
and typesetting.  The author is responsible for the final contents.

An accompanying Lean 4 development, available at
\url{https://github.com/nimaanari/formalization-beyond-bethe} and registered
with the Palomar Registry at
\url{https://palomar-registry.org/entry?id=PALOMAR-2026-08-29-000006&version=1},
formalizes the proof and the complete algorithm.  It reconstructs
from their source proofs Vontobel's concavity theorem, the sharp one-row
inequality of Anari and Rezaei, the upper Bethe bound, and the stable coefficient
inequality of Anari and Oveis Gharan.  The lower Bethe bound is recovered
internally as a special case.

The formalization includes the convex optimization, finite-precision error
analysis, support and boundary cases, and an executable rational algorithm.  It
also proves ordinary deterministic polynomial running time on a binary
encoding, including the elementary arithmetic, ellipsoid, matching, and
certificate routines used by the algorithm.  Its main theorem has no
project-specific hypotheses, and its axiom audit reports only the standard
logical axioms used by Lean and Mathlib.

\renewcommand*{\bibfont}{\small}
\PrintBibliography

\end{document}